\documentclass[11pt]{article} 

\usepackage[round,authoryear]{natbib}  
\usepackage[nottoc, notlof, notlot]{tocbibind}  

\usepackage{lmodern}  
\usepackage[french,english]{babel} 
\usepackage[utf8]{inputenc} 
\usepackage[T1]{fontenc}  
\usepackage{geometry}  

\usepackage[explicit]{titlesec} 
\usepackage{tocloft} 
\usepackage{titletoc} 
\usepackage[normalem]{ulem} 
\usepackage{authblk}

\usepackage{array}  
\usepackage{booktabs}  
\usepackage{tabularx}  
\usepackage{tabulary}  
\usepackage{multicol}  
\usepackage{enumitem}  

\usepackage{graphicx}  
\usepackage{subfig}  
\usepackage{float}  
\usepackage{caption}  
\usepackage[dvipsnames]{xcolor}  
\usepackage{tikz}  
\usetikzlibrary{arrows.meta, shapes, positioning}  
\usepackage{pdfpages} 

\usepackage{amsmath, amsthm, amssymb, amsfonts, mathtools}  
\usepackage{stmaryrd}  
\usepackage{bbm, dsfont}  

\usepackage{algorithm}  
\usepackage{algorithmic}  

\usepackage{siunitx}  

\usepackage{hyperref}  
\hypersetup{colorlinks=true, citecolor=ForestGreen, urlcolor=ForestGreen, allcolors=ForestGreen}

\usepackage{graphicx} 
\usepackage{amsmath} 
\usepackage{float}
\usepackage{tikz}
\usetikzlibrary{arrows.meta,positioning,calc}
\usepackage{array}
\usepackage{booktabs}

\newcommand{\trajrowcolor}[2]{%
  \foreach \x [count=\i from 0] in {#2} {%
    \ifnum\x=1
      \filldraw[#1] (\i*0.32,0) rectangle ++(0.28,0.28);
    \else
      \draw[black] (\i*0.32,0) rectangle ++(0.28,0.28);
    \fi
  }%
}

\newcommand{\trajaxis}{%
\makebox[0pt][r]{\scriptsize$k=$\hspace{0.1cm}}%
\begin{tikzpicture}[baseline=-0.1cm]
  \foreach \i in {0,...,9} {%
    \node[font=\scriptsize] at (\i*0.32+0.13,0) {$\i$};
  }%
\end{tikzpicture}
}

\newcommand{\trajlinecolor}[3]{%
#1 &
\begin{tikzpicture}[baseline=-0.1cm]
  \trajrowcolor{#2}{#3}
\end{tikzpicture}
\\[-0.03cm]
}

\newcommand{\trajlineDOne}[2]{%
  \trajlinecolor{#1}{blue}{#2}
}
\newcommand{\trajlineDOneFaded}[2]{%
  \trajlinecolor{#1}{blue!35}{#2}
}

\newcommand{\trajlineDTwo}[2]{%
  \trajlinecolor{#1}{green!60!black}{#2}
}
\newcommand{\trajlineDTwoFaded}[2]{%
  \trajlinecolor{#1}{green!35}{#2}
}

\newcommand{\DOnePct}[1]{%
  \makebox[1.45cm][l]{\textcolor{blue!70!black}{$\Omega_0$}\hfill{\scriptsize(#1)}}%
}

\newcommand{\DTwoPct}[1]{%
  \makebox[1.45cm][l]{\textcolor{green!50!black}{$\Omega_1$}\hfill{\scriptsize(#1)}}%
}

\def\cO{{\mtc{O}}}

\renewcommand{\emph}[1]{{\it #1}}

\newcommand{\R}{\mathbb{R}}

\newcommand{\ind}[1]{{\mathbf{1}\{#1\}}}

\newcommand{\paren}[2][a]{%
\IfEqCase{#1}{%
{a}{\left(#2\right)}%
{0}{(#2)}%
{1}{\big(#2\big)}%
{2}{\Big(#2\Big)}%
{3}{\bigg(#2\bigg)}%
{4}{\Bigg(#2\Bigg)}%
}[\PackageError{paren}{Undefined option to paren: #1}{}]%
}

\newcommand{\set}[2][a]{
\IfEqCase{#1}{%
{a}{\left\{#2\right\}}%
{0}{\{#2\}}%
{1}{\big\{#2\big\}}%
{2}{\Big\{#2\Big\}}%
{3}{\bigg\{#2\bigg\}}%
{4}{\Bigg\{#2\Bigg\}}%
}[\PackageError{set}{Undefined option to set: #1}{}]%
}

\newcommand{\brac}[2][a]{%
\IfEqCase{#1}{%
{a}{\left[#2\right]}%
{0}{[#2]}%
{1}{\big[#2\big]}%
{2}{\Big[#2\Big]}%
{3}{\bigg[#2\bigg]}%
{4}{\Bigg[#2\Bigg]}%
}[\PackageError{brac}{Undefined option to brac: #1}{}]%
}

\newcommand{\cH}{\mathcal{H}}

\newtheorem{thm}{Theorem}
\newtheorem{prp}[thm]{Proposition}
\newtheorem{lem}[thm]{Lemma}
\newtheorem{cor}[thm]{Corollary}
\newtheorem{dfn}[thm]{Definition}
\newtheorem{ass}{Assumption}

\theoremstyle{remark}
\newtheorem{ex}{Example}
\newtheorem{rem}{Remark}

\usepackage{MnSymbol}

\def\beq{\begin{equation}} 
\def\eeq{\end{equation}}
\def\beqn{\begin{eqnarray*}}
\def\eeqn{\end{eqnarray*}}
\def\Bal{\begin{align}}
\def\Eal{\end{align}}
\def\Bitem{\begin{itemize}\setlength{\itemsep}{.2in}}
\def\bitem{\begin{itemize}\setlength{\itemsep}{.05in}}
\def\eitem{\end{itemize}}
\def\blatin{\begin{enumerate}\setlength{\itemsep}{.05in}\renewcommand{\labelenumi}{\roman{enumi}.}}
\def\elatin{\end{enumerate}}
\def\Benum{\begin{enumerate}\setlength{\itemsep}{.2in}}
\def\benum{\begin{enumerate}\setlength{\itemsep}{.05in}}
\def\eenum{\end{enumerate}}
\def\bmult{\begin{multline*}}
\def\emult{\end{multline*}}
\def\bcenter{\begin{center}}
\def\ecenter{\end{center}}
\def\bframe{\begin{frame}}
\def\eframe{\end{frame}}

\def\cA{\mathcal{A}}

\def\cH{\mathcal{H}}

\def\cL{\mathcal{L}}

\def\cN{\mathcal{N}}
\def\cO{\mathcal{O}}
\def\cP{\mathcal{P}}
\def\cQ{\mathcal{Q}}

\def\cX{\mathcal{X}}

\def\cZ{\mathcal{Z}}

\def\bbE{\mathbb{E}}

\def\bbO{\mathbb{O}}
\def\bbP{\mathbb{P}}

\def\ind{\mathbbm{1}}

\newcommand{\expect}[1]{\mathbb{E}\left[#1\right]}

\newcommand\indep{\protect\mathpalette{\protect\independenT}{\perp}}
\def\independenT#1#2{\mathrel{\rlap{$#1#2$}\mkern2mu{#1#2}}}

\def\implies{\ \Rightarrow \ }

\newcommand{\ve}{\varepsilon}
\newcommand{\vp}{\varphi}
\newcommand*\diff{\mathop{}\!\mathrm{d}}
\renewcommand{\leq}{\leqslant}
\renewcommand{\geq}{\geqslant}

\makeatletter
\newcommand{\para}{\mathrel{\mathpalette\new@parallel\relax}}
\newcommand{\new@parallel}[2]{%
  \begingroup
  \sbox\z@{$#1T$}
  \resizebox{!}{\ht\z@}{\raisebox{\depth}{$\m@th#1/\mkern-5mu/$}}%
  \endgroup
}
\makeatother

\def\independenT#1#2{\mathrel{\rlap{$#1#2$}\mkern2mu{#1#2}}}

\usepackage{cancel}

\newcommand{\transpose}{\mathsf{T}}

\title{Formalization and consistency of clone-censor-weight\\[.5em]}

\author[1]{Iqraa Meah$^{\diamond}$}
\author[2]{Charlotte Voinot$^{\diamond}$}
\author[3]{Stefan Michiels}
\author[3]{Noémie Simon-Tillaux}
\author[4]{Claire Jouffroy-Zeller}
\author[5]{Inès Vaz-Luis}
\author[6]{Clément Berenfeld$^{\ddagger}$}
\author[7]{François Petit$^{\ddagger}$}

\affil[1]{UMR MIA Paris-Saclay, AgroParisTech, INRAE, U. Paris-Saclay}
\affil[2]{Laboratoire Servier}
\affil[3]{Office of Biostatistics and
Epidemiology, Gustave Roussy,
Oncostat U1018, Inserm,
U. Paris-Saclay}
\affil[4]{R\&D Unicancer}
\affil[5]{U. Paris-Saclay, UVSQ, Inserm, CESP U1018, Cancer Survivorship Group, Gustave Roussy}
\affil[6]{PreMeDICaL, Inria-Inserm, U. Montpellier, France}
\affil[7]{U. Paris Cité and U. Sorbonne Paris Nord, INSERM, CRESS}

\date{}

\begin{document}

\maketitle

\begingroup
\renewcommand{\thefootnote}{}
\footnotetext{$^{\diamond}$ These authors share first authorship. $^\ddagger$ These authors contributed equally to the supervision of this work and share last authorship.}
\addtocounter{footnote}{-1}
\endgroup



\vspace{-1.5em}
\begin{abstract}
    Target trial emulation has become a standard framework for causal inference from observational data. Within this paradigm, the 
    clone-censor-weight (CCW) methodology provides a practical way to deal with complex treatment and adherence definitions when treatment regimes are not distinguishable at baseline. Despite its increasing use, the statistical interpretation of CCW remains limited. In this work, we formalize the CCW methodology from a statistical viewpoint based on admissible sets and stochastic interventions. We characterize the causal estimands targeted by CCW, establish consistency of CCW estimators under standard identification assumptions, and provide theoretical guarantees for bootstrap-based inference methods.
\end{abstract}

\section{Introduction} \label{sec:intro}

Target trial emulation, introduced by \cite{hernanRobins2016}, provides a formal framework for conducting comparative studies from observational data. The approach consists in explicitly defining the protocol of an ideal randomized trial (through the definition of eligibility criteria, treatment strategies, treatment assignment, estimands and contrasts) and then emulating this protocol using observational data. This methodology has become central in epidemiology and clinical research, as it clarifies what is being estimated and help to reduce biases induced by fallacious comparisons. \medskip

A central feature of randomized trials is the distinction between intention-to-treat (ITT) and per-protocol (PP) analyses. ITT analyses compare groups as assigned at baseline, regardless of subsequent adherence. In contrast, PP analyses aim to estimate the effect of sustained adherence to a treatment strategy, typically by excluding or censoring individuals who deviate from the protocol \citep{Hollis670}. Since the notion of "treatment assignment" is most often not explicitly recorded in observational data 
ITT analyses are generally discarded, and the focus is made on PP estimands, which require defining adherence in a way clinically meaningful while remaining practically implementable from the observed trajectories. \medskip

A naive approach to comparing treatment strategies from observational data consists in partitioning individuals based on post-baseline information, such as whether and when they initiate treatment. This approach is generally invalid and can induce bias, notably the immortal time bias in time-to-event settings \citep{suissa2008immortal,yadav2021immortal,maringe2020reflection}. 
Immortal time bias arises when individuals assigned to a group must survive long enough to satisfy its defining criteria. For example, when comparing treatment durations (e.g., ``treat six months'' vs.\ ``treat three months''), grouping patients by their observed treatment duration requires those in the ``treat six months'' group to have survived at least six months, artificially inflating the group's survival. This can severely bias effect estimates and compromise the validity of the analysis.\medskip


The clone-censor-weight (CCW) methodology is an approach that overcomes these issues. Originally proposed in \cite{cain2010start}, CCW arranges identical pseudo-populations (through cloning) in which individuals are sequentially removed (reffered as \emph{artificial} censoring) from the population when their observed trajectory deviates from the treatment protocol defining the population.  Individuals are then reweighted to account for the selection bias induced by the artificial censoring. This procedure ensures comparability across the different arms and has been increasingly used in clinical epidemiology studies, see e.g. \citep{dickerman2019avoidable}. \medskip



However, defining adherence in the context of observational data can be challenging. First, treatment strategies may be broadly defined, such that several types of trajectories can qualify as adherent.
Second, even for more prescriptive strategies, practical considerations such as positivity constraints can complicate the definition of adherence. A common example is the use of grace periods, which allow treatment initiation within a prespecified time window. Such periods may be clinically motivated (for instance, in surgical settings where immediate intervention is unrealistic, as for instance in \cite{kirkegaard2024effect}) but they may also be needed to ensure sufficient treatment positivity in the data, for instance when studying treatment duration as in \cite{amiot2024stop}. \medskip 

Despite its increasing use, a formal statistical study of CCW that accommodates general definitions of adherence and provides theoretical guarantees remains, to our knowledge, to be developed. In many applied studies, the focus is placed on implementation without a precise characterization of the causal estimand being targeted. In particular, different practical choices in the design of the analysis, such as the length of grace periods are frequently treated as technical details, even though they define distinct interventions and therefore distinct causal estimands \citep{wanis2024grace}. Such detail appears also to be important in other context e.g. when studying discontinuation strategies. Furthermore, these estimands may themselves depend on the data distribution in intricate ways. As a result, two identical analyses based on different observational data may not be computing the same estimand, despite appearing to address exactly the same clinical question. \medskip

Although the CCW approach has been illustrated in several applied examples and is closely related to other causal inference frameworks, including marginal structural models, and dynamic treatment regimes \citep{robins2000marginal,Robins1986,keil2014gformula,naimi2017gmethods,tsiatis2020}, a general formalization of the methodology has been lacking. Methodological work has only recently begun to investigate the question of targeted estimands, notably for treatment initiation windows \citep{webster2024complex} or for sustained treatments with grace periods \citep{wanis2024grace}, where the estimand may depend on the empirical data distribution in the study population.\medskip

This work proposes a statistical formalization of the CCW methodology for longitudinal time-to-event data and establishes the theoretical properties of CCW estimators. We introduce a unified framework encompassing a broad class of treatment strategies and adherence definitions, characterize the target estimand, prove consistency under standard identification assumptions, and justify bootstrap-based inference. The framework also enables sensitivity analyses to key design choices, such as the length of grace periods.\medskip



The rest of the paper is organized as follows. Section~\ref{sec:setting} introduces the formal framework of the paper. Section~\ref{sec:consistency} presents the clone-censor-weight methodology and establishes the consistency of the CCW estimator, while Section~\ref{sec:bootstrap} discusses the validity of bootstrap-based inference in this context.  Section~\ref{sec:simulation} illustrates the theoretical results through a serie of simulation study, highlighting the impact of the choice of adherence definitions (e.g., grace periods) on the final estimators and estimands.

\section{Setting} \label{sec:setting}
\subsection{General notations} \label{subsec:notations}

We start by introducing the notations of the paper. 
Although the formalism we present is valid for any kind of outcome, we focus on time-to-event outcomes.
For the remaining, unless stated otherwise, the subscrit $t \leq \tau$ denotes a discret time index with $\tau \in \mathbb{N}$.
For a time-varying variable $Z_t$, we denote with an overline the variable history from time $0$ to time $t$: $\overline Z_t := (Z_0,\dots,Z_t)$. 
Likewise, we denote by an underline the future values of the variable: $\underline Z_t := (Z_t, Z_{t+1}, \dots)$. We do the same for Cartesian products of sets: if $\cZ_t$ is a sequence of sets, then $\overline \cZ_t := \cZ_0 \times \dots \times \cZ_t$. \medskip

For each patient, we observe an ordered sequence 
\begin{equation} \label{eq:observations}
	X_0, A_0, \qquad C_1, Y_1, X_1,A_1, \qquad \dots \qquad C_t, Y_t, X_{t},A_{t}, \qquad \dots \qquad
	C_\tau, Y_\tau, X_{\tau};
\end{equation} 
where $X_t \in \cX_t$ are measured covariates at time $t$ (often the same covariates are measured over time and $\cX_t=\cX$ for all $t$);
 	 $A_t \in \mathcal{A}_t$ is the treatment allocation indicator at time $t$ (often $\cA_t=\{0,1\})$; $C_t \in \{0,1\}$ is the natural censoring status ($C_t=1$ if the individual has been lost to follow-up between time $t-1$ and $t$); and $Y_t \in \{0,1\}$ is the event  indicator ($Y_t=1$ if the individual has experienced the event between time $t-1$ and $t$). We let
     $$O_t = (X_0,A_0,\dots,C_t,Y_t,X_t) \in \cO_t,$$ 
     be the total observations up to time $t$. \medskip

We assume that the follow-up of the patient stops when $Y_t = 1$ , $C_t = 1$, or $t = \tau$, whichever happens first. After $C_t=1$, the remaining variables exists but are not observed, and $C_s$ stays to $1$ for $s > t$. After $Y_t=1$, the values of $Y_s$ are set to $1$ and the ones of $C_s$, $X_s$ and $A_s$ are set to a dummy value $\aleph$. Finally, we let $H_t = (\overline X_t, \overline A_{t-1}) \in \cH_t$, where $\cH_t := \overline \cX_t \times \overline \cA_{t-1}$. The variable $H_t$ is called the history and corresponds to the information available at time $t$ that should help to take the treatment decision $A_t$. Note that with this convention $Y_t$ is $X_t$-measurable since $Y_t = \ind\{X_t = \aleph\}$; for this reason, we exclude $Y_t$ from the history variable $H_t$, when it is usually included in the literature. We summarize all these variables in Table~\ref{tab:variables}.

\begin{table}[ht]
\centering
\begin{tabular}{llll}
\hline
Time & Quantity & Domain & Interpretation \\
\hline

interval $(t-1,t]$ & $C_{t}$ & $\{0,1\}$ & censoring occurs in $(t-1,t]$ \\

interval $(t-1,t]$ & $Y_{t}$ & $\{0,1\}$ & event occurs in $(t-1,t]$ \\

at time $t$ & $X_{t}$ & $\cX_t$ & updated covariates at time $t$\\

at time $t$ & $O_t$ & $\cO_t$ & all observations available at time $t$ \\

at time $t$ & $H_t$ & $\cH_t$ & data guiding treatment decision at time $t$ \\

right after time $t$ & $A_t$ & $\cA_t$ & treatment decision at time $t$ \\




\hline
\end{tabular}
\caption{Summary of the different observed variables.} \label{tab:variables}
\end{table}

\subsection{Admissible trajectories and stochastic interventions}


In many clinical epidemiology reseach context, formulation of treatment rules is broad enougth to allow for flexibility, e.g. in treatment initiation timing or temporary deviations, resulting in the absence of a single prototypical trajectory.
For instance, in studies including grace periods, patients who initiate treatment at different times within the allowed period are all regarded as adhering to the same strategy, even though their observed trajectories differ. This phenomenon extends beyond initiation rules, with e.g. persistence or discontinuation strategies that allow gaps in the treatment trajectory, see e.g. \cite{Dumas_2025}. \medskip

From a statistical standpoint, restricting attention to a single trajectory would lead to severe positivity issues, since only a negligible fraction of the observed data would exactly match that trajectory. Defining a (broader) set of admissible trajectories alleviates these positivity concerns by enlarging the support of the intervention, while preserving the clinical meaning of the strategy.


\begin{dfn}[Admissible trajectories] 
    Let $\Omega = \{\Omega_t\}_t$ be a sequence of sets of trajectories $\Omega_t \subset  \cH_t \times \cA_t$. We say that $\Omega$ is a sequence of \emph{admissible trajectories} if, for all $s \leq t$ the traces of trajectories of $\Omega_t$ on $\cH_s \times \cA_s$ is contained in $\Omega_s$. 
\end{dfn}

\begin{ex}[Surgery with grace period] \label{ex:surgery}

\noindent\cite{maringe2020reflection} compare two treatments strategies for lung cancer: surgery within six months after diagnosis vs. no surgery within the first six months. Time zero $t=0$ denotes the time of diagnosis, while subsequent time indices refer to the $t^{th}$ monthly visit. We assume that all patients undergo surgery at most once. Since the intervention is static, i.e does not depend on the evolution of the covariate process $X_t$, we can remove the covariates from the trajectory to only keep the sequence of treatment allocation  $\overline a_t \subset \{0,1\}^{[t]}$, with $a_t = 1$ denoting that surgery happened at time $t$. Letting $|\overline a_t|$ denotes the sum of all entries of $\overline a_t$, the set of admissible trajectories in this case is given by
\begin{align*}
	\Omega_t := 
	\begin{cases}
		\{\overline{a_t} \mid |\overline{a_t}| \leq 1 \} & \text{if } t \leq 5\\
		\{\overline{a_t} \mid |\overline{a_t}| \leq 1, |\overline{a_6}| = 1 \} & \text{if } t \geq 6. 
	\end{cases}
\end{align*}
\end{ex}

A common implicit assumption is that all trajectories within a given set are equivalent in terms of their causal effect, so that the strategy can be represented solely through this support. This is often unrealistic in practice: trajectories that differ in timing or duration may lead to different effect on the outcome, even if they are all clinically acceptable versions of the same rule. Since different trajectories may lead to different outcome effects, it is necessary to complement the admissibility constraints with a mechanism that assigns probabilities to admissible actions at each time. This leads to the notion of stochastic interventions, which  specify how trajectories are generated within the admissible sets.

\begin{dfn}[Stochastic intervention]
A stochastic intervention $q$ is a family of kernel
$$ q =\{q_t:  \cH_{t} \to \cP(\mathcal{A}_{t}),~t \in [\tau-1]\},$$ 
that, given the available information $H_{t}$ at time $t$, provides a treatment distribution $q_t(\cdot \mid  H_{t})$ on $\cA_t$ for the patient at time $t$.
\end{dfn}

\begin{rem} A deterministic intervention $q^*$ can be seen as a stochastic intervention through the degenerate kernel
$$
q_t(\cdot \mid H_{t}) = \delta_{q_t^*(H_{t})},
$$
where $\delta$ denotes the Dirac distribution. We thus make no difference between the two in the following.
\end{rem}


For a given set of admissible trajectories, there is generally a large class of stochastic interventions that induce exactly the same support, differing only in how probability mass is distributed across admissible actions at each time point. In other words, admissibility alone does not identify a unique intervention, but rather a family $\cQ(\Omega)$ of data-generating mechanisms compatible with the same clinical rule. 

\begin{dfn}[Compatible stochastic intervention]
    Let $\Omega$ be a sequence of admissible trajectories. We define the set of compatible stochastic intervention $\cQ(\Omega)$ as the set of all intervention $q$ such that $q_s(A_s\mid H_{s}) > 0$ if, and only if $(H_s,A_s) \in \Omega_s$. 
\end{dfn}

Among all the compatible intervention, one plays a distinguished role: the one induced by the observational data-generating process. This \emph{natural} stochastic intervention preserves the empirical treatment assignment probabilities while enforcing the admissibility constraint, and therefore serves as a canonical choice in the analysis.

\begin{dfn}[Natural intervention] \label{def:qnat}
    Among all possible kernels of $\cQ(\Omega)$, the one induced by the distribution of the data is called the \emph{natural} intervention associated with $\Omega$, and is denoted by $q^{\rm nat}_\Omega$. Formally, it is defined as 
    $$
    q_{\Omega,t}^{\rm nat}(a \mid H_{t}) \propto \begin{cases}
     0 \quad &\text{if} \quad a \notin \cA_{\Omega,t}(H_{t}) \\
     \bbP\left(A_t = a \mid H_{t}, Y_t=C_t=0\right) \quad &\text{otherwise},
    \end{cases}
    $$
    where $\cA_{\Omega,t}(H_{t})$ is the set of $a \in \cA_t$ such that $(H_{t},a) \in \Omega_t$.
\end{dfn}



\begin{ex}[Example \ref{ex:surgery}, continued] \label{ex:surgery-cont}
The natural stochastic intervention $q^{\rm nat}$ associated with the admissible trajectories of Exemple~\ref{ex:surgery} is given by
\begin{align*}
	q_{\Omega,t}^{\rm nat}(a \mid H_t)  = 
	\begin{cases}
		\ind\{a=0\} \quad &\text{if} \quad t \geq 7 \quad \text{or} \quad~~ |\overline A_{t-1}| =1,\\
		 \bbP(A_t=a \mid \overline X_t, \overline A_{t-1} = 0, Y_t=C_t=0)  &\text{if} \quad  t < 6 \quad \text{and} \quad \overline A_{t-1}= 0,\\
      \ind\{a=1\}  &\text{if} \quad  t = 6 \quad \text{and} \quad \overline{A}_5 = 0.
	\end{cases}
\end{align*}
Note that, even though the intervention does not depend on the covariates $\overline X_t$, the natural intervention might.
\end{ex}
We refer to Section~\ref{sec:examples} for other examples including discontinuation and persistence strategies.

\subsection{Causal inference framework} \label{subsec:causalinf-framework}

To draw a causal comparison of different interventions, one needs to define what the outcome would have been under each of them, possibly contrary to fact. To do so, we  embed the previous formalism into the potential outcome framework, where each stochastic intervention is associated with a collection of counterfactual variables. Within this framework, causal effects can be expressed as contrasts between intervention-specific outcomes, and their identification from observational data relies on a set of standard assumptions, which we introduce in this section.

\paragraph{Potential outcomes.} For all $t \in [\tau]$ and all $\overline a_t \in \overline{\cA}_t$, we posit the existence of the counterfactual variables $Y_t(\overline a_{t-1})$, $X_t(\overline a_{t-1})$ and $\overline C_t(\overline a_{t-1})$, which correspond to the values of $Y_t$, $X_t$ or $C_t$ respectively if, possibly contrary to the fact, the treatment allocated between time $0$ and $t-1$ had been $\overline a_{t-1}$. By convention, if $\overline a \in \overline A_\tau$ is a full-time treatment trajectory, we will write $Y_t(\overline a)$ for $Y_t(\overline a_{t-1})$, and likewise for $X_t(\overline a)$ and $C_t(\overline a)$. This will allow us to consider, for instance, the counterfactual future process $\underline Y_t(\overline a)$, etc.

Given a stochastic intervention $q$, we define the potential outcomes $Y_t(q)$, $X_t(q)$ and $C_t(q)$ as the potential observations that would occur at time $t$, if possibly contrary to the fact the stochastic intervention $q$ had been followed from time $0$ to $t-1$.

\paragraph{Causal target and contrast.} Several causal estimands can be considered depending on the outcome of interest and the chosen scale. In the potential outcome framework, each intervention $q$ induces a counterfactual outcome process from which one can define, for instance, the  survival function $S^q(t) := \bbP(Y_t(q)=0)$ or functionals of the latter such as the restricted mean survival time RMST):
$$
\rho^q(\tau) := \int_0^\tau S^q(t) \diff t.
$$
Given two interventions $q_0$ and $q_1$, causal contrasts are obtained by comparing the two associated causal estimands either through differences (survival difference, RMST difference) or ratios (risk ratio, hazard ratio) depending on the scientific question. In this paper, we will focus either on the difference in survival curve or on the RMST:
\begin{equation}
\Delta S(t) := S^{q_1}(t) - S^{q_0}(t)
\qquad \text{and} \qquad
\Delta \rho(\tau) := \rho^{q_1}(\tau) - \rho^{q_0}(\tau).
\label{eq:contrats-definitions}
\end{equation}

\paragraph{Identification assumptions} Although causal estimands are well-defined in terms of counterfactual outcomes, they are not directly observable from the data, since each individual is only observed under the treatment actually received. Working through this setting of "missing data" requires a set of identification assumptions that link the distribution of counterfactual outcomes under the interventions of interest to the observed data-generating process. 
Without them, the causal contrasts defined above cannot be expressed or estimated from the observed data.

\begin{ass}[Consistency] \label{ass:consistency} If $\overline A_{t-1} = \overline a_{t-1}$, then
   \[
   Y_t=Y_t(\overline{a}_{t-1}),  \quad X_t=X_t(\overline{a}_{t-1}), \quad \text{and}\quad C_t=C_t(\overline{a}_{t-1}).
   \]
\end{ass}
\begin{ass}[Sequential ignorability] \label{ass:ignorability} For all trajectories of treatment $\overline a$,
    \[
    \underline{Y}_{t+1}(\overline a), \underline{X}_{t+1}(\overline a), \underline{C}_{t+1}(\overline a) \indep  A_t \mid \overline A_{t-1}=\overline a_{t-1}, \overline X_t, Y_{t}=C_t=0
    \] 
\end{ass}

\begin{ass}[Conditionally independent censoring] \label{ass:censoring} For all treatment trajectory $\overline a$, 
    $$
    \underline{X}_{t+1}(\overline a),\underline{Y}_{t+1}(\overline a) \indep  C_{t+1}(\overline a) \mid \bar A_{t}=\bar a_t,\overline X_t, Y_t=C_t=0.
    $$
\end{ass}
\begin{ass}[Treatment positivity] \label{ass:positivity_treatment} There exists $\ve > 0$ such that
     \[
    q_t(a\mid H_{t}) > 0 \implies \bbP(A_t = a \mid H_{t},Y_t=C_t=0) > \ve \quad \text{almost-surely}. 
    \] 
\end{ass}

\begin{ass}[Censoring positivity]\label{ass:positivity_censoring}
There exists $\ve > 0$ such that
    \[
    q_t(A_t \mid H_{t})>0 \implies \bbP(C_{t+1}=0 \mid A_t, H_t, Y_t=C_t=0) > \ve \quad \text{almost-surely.}
    \]
\end{ass}

Consistency (Assumption~\ref{ass:consistency}) states that the observed outcome coincides with the counterfactual outcome under the realized treatment history. Sequential ignorability (Assumption~\ref{ass:ignorability}) requires that, conditional on the observed history, treatment assignment at each time is independent of the future counterfactual process, ruling out unmeasured time-varying confounding. Conditional independent censoring (Assumption~\ref{ass:censoring}) plays an analogous role for loss to follow-up, ensuring that censoring does not depend on the future counterfactual process given the observed past. The last two assumptions are dependent on the intervention being evaluated. The positivity assumptions for treatment (Assumption~\ref{ass:positivity_treatment}) and censoring (Assumption~\ref{ass:positivity_censoring}) guarantee that the intervention of interest is supported by the data: all actions under the considered intervention must occur with nonzero probability in the observed process. 

\section{Consistency of the clone-censor-weight method} \label{sec:consistency}

\subsection{The clone-censor-weight methodology}

In observational settings, partitioning the patient data into groups based on post-baseline information can induce bias, for instance when defining groups according to future treatment decisions (e.g., "treated within 6 months" versus "not treated"). In time-to-event settings, this typically leads to the so-called \emph{immortal time bias} \citep{suissa2008immortal}: individuals classified in a given group at a specific time must, by construction, survive long enough to satisfy the defining condition, which artificially inflates survival in that group. \medskip

The clone-censor-weight (CCW) approach addresses this issue by circumventing the partition paradigm of the observed data. First, each individual is cloned into as many copies as there are interventions being compared. Second, each clone is \emph{artificially} censored (as opposed to \emph{naturally} through $C_t$) at the first time its observed trajectory $(H_t,A_t)$ becomes incompatible with the admissible set $\Omega_t$ defining the assigned intervention. 
Finally, this artificial censoring induces selection bias that is corrected through inverse probability weighting (IPW). The latter step, reweights the remaining observations to recover, under the identification assumptions, the distribution of outcomes that would have been observed had all individuals followed the intervention. Estimation then proceeds by applying standard survival estimators within each cloned population. \medskip 

The CCW approach can be viewed as a 
way to overcome immortal time bias issues 
\citep{hernanRobins2016} in longitudinal observational settings. In a target trial, individuals are randomized at baseline to well-defined treatment strategies and followed untill the completion of the study. CCW reconstructs this ideal experiment, with each weighted pseudo-population corresponding to the PP population that would have been observed under the target trial. We summarize in Table~\ref{tab:trialprotocol} how CCW allows emulating a target trial. 

\begin{table}[ht]
\centering
\begin{tabular}{|p{5cm}|p{9cm}|}
\hline
\textbf{Target trial's protocol} &  \textbf{CCW methodology} \\
\hline
Eligibility criteria & Selection at baseline uppon $X_0$ \\
\hline 
Treatment strategies & 
Stochastic interventions $q_0$ and $q_1$ \\
\hline
Treatment assignment &   Randomization is emulated
via cloning\\
\hline 
Start and end of follow-up & 
Time $t=0$ and $t=\tau$ \\
\hline 
Outcome  & Survival time
\\
\hline 
Causal contrast  & Survival probability difference $\Delta S(t)$ or RMST $\Delta \rho(\tau)$, see \eqref{eq:contrats-definitions}
 \\
\hline 
Statistical analysis & Per protocol, enforced by artificial censoring \\
\hline 
\end{tabular}
\caption{Protocol elements of the target trial and its counterparts in the CCW framework}
\label{tab:trialprotocol}
\end{table}


\subsection{Definition of the clone-censor-weight estimators} 

Formally, given an intervention $q$ and its associated admissible trajectories $\Omega$, we define
$$
\Gamma_{\Omega,t} := \ind\{(H_t,A_t) \in \Omega_t\} = \ind\{A_t \in \cA_{\Omega,t}(H_t)\} ,
$$
which is the artificial censoring variable associated with the intervention $q$ at time $t$. We let 
$$
p^C_t(A_{t-1}, H_{t-1})
:= \bbP(C_t = 0 \mid A_{t-1}, H_{t-1}, C_{t-1}=Y_{t-1}=0).
$$ 
be the natural censoring probability, and 
$$
p^A_t(a \mid H_{t}) := \bbP(A_t = a \mid  H_{t}, C_{t}=Y_{t}=0)
$$
be the probability of observed treatment. We define the \emph{at-risk} indicator $R_{t}$ to be
$$
R_t := \prod_{s=1}^t \ind\{C_s = 0, Y_{s-1} = 0\}. 
$$
The variable $R_t$ select the individual that are still alive at time $t$ and that haven't been censored. We could add the artificial censoring variables $\Gamma_{\Omega,t}$ in $R_t$, but we can also omit it as it will appear implicitly in the definition of the weights below.
Likewise, the event indicator $E_t$ is
$$
E_t := \ind\{Y_{t-1}=0, Y_t =1\}.
$$
Finally, we define the weight at time $t$ as:
$$
W_{t}^{q}
:= \prod_{s=0}^{t}
\frac{q_s(A_{s} \mid H_{s})}{p^A_s(A_{s} \mid H_{s})}
\times
\frac{1}{p^C_{s+1}(A_{s},H_s)}.
$$

\begin{rem}
When censoring is independent, there is no need to model the censoring mechanism as the weights will cancel themselves in the Kaplan-Meier estimator as they do not depend of history. By contrast, the natural censoring weights must be modelled and estimated when censoring is informative. This may arise, for example, when the prognosis of newly enrolled patients changes over calendar time, inducing a shift in the case mix towards patients with better or worse prognoses. Such changes may occur, for instance, as a consequence of improvements in the standard of care.
\end{rem}

All quantities related to an individual $i$ will be denoted with a subscript $i$. Note that by design, $W_{i,t}^q$ assigns a weight zero to the individual $i$ if the latter has deviated from the intervention $q$ at or before time $t$, that is if $\Gamma_{\Omega,i,t} = 0$. The weights thus act as an \emph{artificial} censoring filter and enforce the adherence to the intervention $q$. Assuming that there is $n$ individuals in the observed data, we define 
\begin{equation} \label{eq:nd}
N_t^q := \sum_{i=1}^n W_{i,t-1}^q R_{i,t} \quad \text{and} \quad D_t^q := \sum_{i=1}^n W_{i,t-1}^q R_{i,t} E_{i,t}. 
\end{equation}
The CCW estimator of $S^q(t)$ is then defined as 
\begin{equation}\label{eq:ccw}
\hat S^q_{\rm CCW}(t) := \begin{cases} \displaystyle \prod_{s \leq t}\left(1 - \frac{D_s^q}{N_s^q}\right) \quad &\text{if} \quad N_t^q \neq 0,\\
0 \quad &\text{otherwise.}
\end{cases}
\end{equation}
This corresponds to a usual Kaplan-Meier type estimator under a IPCW reweighting scheme, see \cite{RobinsFinkel200}.

\subsection{Consistency results}

We turn to the theoretical properties of the CCW estimators introduced above. In particular, we establish their consistency for the causal estimands $S^q(t)$ defined in Section~\ref{sec:setting} and  under the identification assumptions stated in the same section. 

\begin{thm} \label{thm:consistency} Under Assumptions \ref{ass:consistency}-\ref{ass:positivity_censoring}, 
$\hat S^{q}_{\rm CCW}(t)$ is a consistent estimator of
$
S^{q}(t).
$
\end{thm}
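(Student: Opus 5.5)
The estimator is a product over the finitely many times $s \leq t \leq \tau$ of the factors $1 - D_s^q/N_s^q$. The plan is therefore to show that each ratio converges in probability to a deterministic limit. Then $\hat S^q_{\rm CCW}(t)$ converges to the product of the limits by the continuous mapping theorem, and it remains to identify that product with $S^q(t)$. Throughout, I would assume the nuisance quantities $p^A_s$ and $p^C_{s+1}$ are known, or consistently estimated, with the positivity bounds preserved.

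\textbf{Convergence of the ratio.} Dividing numerator and denominator by $n$, the law of large numbers gives
\[
\frac{N_s^q}{n} \to \bbE\brac{W_{s-1}^q R_s}
\qquad\text{and}\qquad
\frac{D_s^q}{n} \to \bbE\brac{W_{s-1}^q R_s E_s}
\]
almost surely. These expectations are finite because Assumptions~\ref{ass:positivity_treatment} and~\ref{ass:positivity_censoring} bound the weights: $W^q_{s-1} \leq \ve^{-2s}$ on $\{q>0\}$, since $q_s \leq 1$. The denominator limit must be strictly positive; I will obtain this from the identification step below, provided $S^q(s-1)>0$, which I will assume. Under that condition $N_t^q \neq 0$ with probability tending to one, so the degenerate branch of \eqref{eq:ccw} is asymptotically irrelevant. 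If the weights are estimated, one extra step is needed: use uniform consistency of the nuisance estimates together with the bounded weights to show that the plug-in error vanishes.

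\textbf{Identification step (main obstacle).} I would prove by induction on $s$ a g-formula type identity: for any bounded function $f$ of $(Y_{s-1}, Y_s)$,
\[
\bbE\brac{W^q_{s-1} R_s\, f(Y_{s-1},Y_s)} = \bbE\brac{f(Y_{s-1}(q),Y_s(q))}.
\]
The induction works as follows.
\begin{itemize}
\item Condition on $H_{s-1}$ and the at-risk event.
\item Integrate out $A_{s-1}$. The factor $q_{s-1}/p^A_{s-1}$ converts the observed treatment law into $q_{s-1}$; the positivity assumptions guarantee that the support of $q$ lies inside that of $p^A$, and the admissible-set indicator is already encoded in $q$.
\item Integrate out $C_s$. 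The factor $1/p^C_s$ cancels the probability of remaining uncensored, and Assumption~\ref{ass:censoring} ensures that conditioning on $C_s=0$ leaves the law of $(X_s,Y_s)$ unchanged.
\item Replace observed variables by the counterfactuals $X_s(\overline a_{s-1})$ and $Y_s(\overline a_{s-1})$ using Assumption~\ref{ass:consistency}.
\item Use Assumption~\ref{ass:ignorability} to drop the conditioning on $A_{s-1}$, so that the resulting sequential kernels are exactly those of the process generated under $q$.
\end{itemize}
The delicate points are the bookkeeping of conditioning events ($Y=C=0$) and the convention that the weight index lags by one; these must line up with the definition of $R_s$. A cleaner route is to first state the identity for the joint law of $(\overline X_s, \overline A_{s-1})$ under $q$, and then specialise it to functions of $Y$.

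\textbf{Conclusion.} Applying the identity with $f \equiv \ind\{Y_{s-1}=0\}$ and with $f = \ind\{Y_{s-1}=0, Y_s=1\}$ gives the two limits $\bbP(Y_{s-1}(q)=0)$ and $\bbP(Y_{s-1}(q)=0,\,Y_s(q)=1)$. The ratio of these is the discrete hazard of the counterfactual process under $q$ at time $s$. Taking the product of $(1-\text{hazard})$ over $s \leq t$ telescopes to $\bbP(Y_t(q)=0) = S^q(t)$, which proves Theorem~\ref{thm:consistency}.
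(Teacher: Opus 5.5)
Your architecture matches the paper's. You apply the law of large numbers to $N_s^q/n$ and $D_s^q/n$ with the weights treated as known, prove an inverse-weighting identification lemma by induction, and write $S^q(t)$ as a product of one-step conditional survival probabilities. However, the identity at the heart of your identification step is false as stated. Because $R_s$ contains the factor $\ind\{Y_{s-1}=0\}$, taking $f\equiv 1$ gives $\bbE[W^q_{s-1}R_s]=\bbP(Y_{s-1}(q)=0)$ on the left but $1$ on the right. Likewise $f=\ind\{Y_{s-1}=1\}$ gives $0$ against $\bbP(Y_{s-1}(q)=1)$. The correct statement, the paper's Lemma~\ref{lem:identification}, is $\bbE[W^q_{s-1}R_s\psi(Y_s)]=\bbE[\ind\{Y_{s-1}(q)=0\}\psi(Y_s(q))]$. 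Both functions you actually use carry $\ind\{Y_{s-1}=0\}$, so your conclusion survives. Still, the lemma must be restated this way before you try to prove it.

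Second, an induction hypothesis about functions of $(Y_{s-1},Y_s)$ alone does not carry over from one step to the next. Once you condition on $H_{s-1}$ and integrate out $A_{s-1}$, $C_s$ and $X_s$, what remains is a function of the whole history $H_{s-1}$. The hypothesis must therefore cover arbitrary bounded functions of the history, so the ``cleaner route'' you mention is in fact required. The paper handles this with a backward induction. It sets $u_s(h_s)=\bbE[\ind\{Y_{t-1}(q)=0\}\psi(Y_t(q))\mid H_s(q)=h_s]$ and lets $v_s$ be the weighted observed analogue given $H_s=h_s$, $C_s=0$. It then proves $u_s=v_s$ on the support of $H_s(q)$ from $s=t$ down to $s=0$, using exactly the one-step manipulations you list. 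It concludes from the fact that $X_0(q)$ and $X_0$ have the same law. A forward induction on the law of $H_s$ under $q$ works equally well. Finally, you need not assume $S^q(t-1)>0$. If $\bbP(Y_{t-1}(q)=0)=0$, the lemma gives $\bbE[W^q_{t-1}R_t]=0$, so $N^q_t=0$ almost surely, and both the estimator and $S^q(t)$ equal $0$.
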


The proof of this result can be found in Appendix~\ref{app:consistency}. Usual applications of the CCW methodology \citep{cain2010start, Cain2016} use the (unstabilized) weights 
$$
W_{i,t}^{\Omega}
:= \prod_{s=0}^{t}
\frac{\Gamma_{\Omega,i,s}}{p^\Omega_s(H_{i,s})}
\times
\frac{1}{p^C_{s+1}(A_{i,s},H_{i,s})},
$$
where 
$$
p^\Omega_s(H_{s}) := \bbP((H_s,A_s) \in \Omega_s \mid H_{s}, Y_s=C_s=0). 
$$
We call this version the \emph{natural} CCW estimator. It only depends on the admissible trajectories $\Omega$ and does not rely on the specification of a compatible stochastic intervention $q$. We thus denote this estimator by $\hat S^{\Omega}_{\rm CCW}(t)$.   This next result clarifies the precise intervention targeted by the natural CCW procedure.

\begin{cor}\label{cor:natural_weights} Under Assumptions \ref{ass:consistency}-\ref{ass:positivity_censoring}, the \emph{natural} CCW estimator $\hat S^{\Omega}_{\rm CCW}(t)$ is a consistent estimator of $S^{q^{\rm nat}_\Omega}(t)$ where $q^{\rm nat}_\Omega$ is the natural intervention associated with $\Omega$. 
\end{cor}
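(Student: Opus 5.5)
The plan is to reduce Corollary~\ref{cor:natural_weights} to Theorem~\ref{thm:consistency} applied with $q = q^{\rm nat}_\Omega$. The key observation is that the natural CCW weights $W^{\Omega}_{i,t}$ coincide with the weights $W^{q}_{i,t}$ when $q = q^{\rm nat}_\Omega$. Since the estimator $\hat S^q_{\rm CCW}$ depends on the data only through $N^q_t$ and $D^q_t$, which are linear in the weights, equality of weights gives $\hat S^{\Omega}_{\rm CCW}(t) = \hat S^{q^{\rm nat}_\Omega}_{\rm CCW}(t)$ exactly. Consistency toward $S^{q^{\rm nat}_\Omega}(t)$ then follows from Theorem~\ref{thm:consistency}.

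\textbf{Step 1: identify the normalizing constant.} In Definition~\ref{def:qnat}, the normalizing constant of $q^{\rm nat}_{\Omega,s}(\cdot\mid H_s)$ is
\[
\sum_{a \in \cA_{\Omega,s}(H_s)} \bbP(A_s = a \mid H_s, Y_s=C_s=0) = \bbP\big((H_s,A_s)\in\Omega_s \mid H_s, Y_s=C_s=0\big) = p^\Omega_s(H_s).
\]
Hence $q^{\rm nat}_{\Omega,s}(a\mid H_s) = \ind\{(H_s,a)\in\Omega_s\}\, p^A_s(a\mid H_s)/p^\Omega_s(H_s)$.

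\textbf{Step 2: compare the weights factor by factor.} Evaluating at $a = A_s$ gives
\[
\frac{q^{\rm nat}_{\Omega,s}(A_s\mid H_s)}{p^A_s(A_s\mid H_s)} = \frac{\Gamma_{\Omega,s}}{p^\Omega_s(H_s)}.
\]
This holds whenever $p^A_s(A_s\mid H_s)>0$, which is true almost surely for the observed $A_s$ among at-risk individuals. The censoring factors $1/p^C_{s+1}$ are identical in the two weights, so $W^{q^{\rm nat}_\Omega}_{i,t} = W^\Omega_{i,t}$ almost surely for every $i$ and $t$.

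\textbf{Step 3: check the hypotheses of Theorem~\ref{thm:consistency}.} First, $q^{\rm nat}_\Omega$ must be a well-defined element of $\cQ(\Omega)$. This needs $p^\Omega_s(H_s)>0$, which follows from treatment positivity (Assumption~\ref{ass:positivity_treatment}) applied to any admissible action, giving $p^\Omega_s \geq \ve$. It also needs the support to equal $\Omega_s$, which again uses positivity. Second, Assumptions~\ref{ass:positivity_treatment}--\ref{ass:positivity_censoring} are stated relative to an intervention $q$. With $q = q^{\rm nat}_\Omega$, the condition $q_t(a\mid H_t)>0$ is equivalent to $(H_t,a)\in\Omega_t$ together with positive observed probability. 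So the corollary's assumptions, read for this $q$, are exactly those required by the theorem.

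\textbf{Main obstacle.} The delicate point is that $q^{\rm nat}_\Omega$ depends on the unknown distribution of the data. One must make sure Theorem~\ref{thm:consistency} treats $q$ as a fixed, deterministic kernel, which it is here, since it is a population quantity. One must also handle the null events where $p^A_s = 0$ or $p^\Omega_s = 0$, by working almost surely on the at-risk set $R_{i,s}=1$. If the weights in the theorem are estimated rather than known, the same identity holds at the level of the estimated $\hat p^A$ and $\hat p^\Omega$, provided $\hat p^\Omega$ is computed as the sum of the $\hat p^A$ over admissible actions. Otherwise, an extra step is needed showing that both estimated weights converge to the common limit.
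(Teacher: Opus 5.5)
Your proposal is correct and follows the paper's route: the paper likewise deduces the corollary from the identity $W^{\Omega}_{i,t} = W^{q^{\rm nat}_\Omega}_{i,t}$ together with Theorem~\ref{thm:consistency}. Your Steps 1--2, which identify the normalizing constant of $q^{\rm nat}_{\Omega,s}$ as $p^\Omega_s(H_s)$ on the at-risk set, spell out that identity explicitly; the membership check $q^{\rm nat}_\Omega\in\cQ(\Omega)$ in Step 3 is not strictly needed, since Theorem~\ref{thm:consistency} applies to any $q$ satisfying the assumptions, and your remark on estimated weights goes beyond the paper's setting, where the weights are population quantities.
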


This follows for the fact that $W_{i,t}^\Omega = W_{i,t}^{q^{\rm nat}_\Omega}$ and the application of Theorem~\ref{thm:consistency}.

\begin{rem}
The natural weights corresponds to the unstabilized weights used in marginal structural models (see \cite{robins2000marginal}). However, weight stabilization does not operate in the same way as in marginal structural models. In the Kaplan–Meier-type estimator considered here, stabilizing the censoring weights generally changes the causal estimand, leading to the estimation of a different survival quantity, corresponding to a $S^q$  where the stochastic intervention $q$ is determined by the stabilization scheme.
\end{rem}

\section{Bootstrap-based inference in the CCW framework} \label{sec:bootstrap}

In medical applications, quantifying the uncertainty around CCW estimates is as crucial as point estimation. Closed-form variance expressions exist in principle, but they are often difficult to derive and to implement in practice, especially in complex longitudinal settings. As a result, practitioners frequently rely on bootstrap methods as a convenient alternative. The bootstrap \citep{efron1994introduction} is a resampling procedure in which one repeatedly draws samples, with replacement, from the original dataset at the individual level, recomputes the full estimator on each resampled dataset, and uses the empirical distribution of these replicated estimates to approximate the sampling distribution of the estimator. Namely, from $n$ observed individual data $\bbO_n := (O_{1},\dots,O_{n})$, we re-sample $n$-times with replacement to get a distribution of the form
\begin{equation}\label{eq:pstar}
\bbP^*_{n} = \frac{1}{n}\sum_{i=1}^n \mathfrak{m}_{i} \delta_{O_{i}} \quad \text{where} \quad \mathfrak{m} := ({\mathfrak{m}}_{1},\dots,\mathfrak{m}_{n}) \sim {\rm Mult}(n,1/n,\dots,1/n),
\end{equation}
and where $\mathfrak{m}$ is sampled independently of $\bbO_n$. We call $\bbP^*_n$ a \emph{bootstrap} distribution. \medskip

We let $\Omega_0$ and $\Omega_1$ be two sets of admissible trajectories. Our goal is to estimate the variance of the CCW estimator of either of the contrasts:
$$
\Delta S(t) = S^{q^{\rm nat}_{\Omega_1}}(t) - S^{q^{\rm nat}_{\Omega_0}}(t) \quad \text{or} \quad \Delta \rho(\tau) =  \rho^{q^{\rm nat}_{\Omega_1}}(\tau) - \rho^{q^{\rm nat}_{\Omega_0}}(\tau).
$$
We denote by $\Delta$ either one of these estimands and $\hat \Delta_n$ the corresponding CCW estimator. The computation of $\hat\Delta_n$ require the knowledge of $W_{t-1}^{\Omega_1}$ and $W_{t-1}^{\Omega_0}$, and in turn the one of $p_t^C$, $p_t^{\Omega_1}$ and $p_t^{\Omega_0}$ for $t\in\{1,\dots,\tau\}$. We parametrize these functions by some finite dimensional \emph{nuisance} parameters $\eta$, and we assume that we estimate it by some Z-estimator $\hat\eta_n$ as described in detail in Appendix~\ref{app:bootstrap}. This covers various usual estimation procedure, such as for instance, pooled logistic regressions \citep{d1990relation}.
We let $\hat\Delta_n^*$ be the CCW bootstrap estimator built from $\bbP^*_n$ as in \eqref{eq:pstar} and using Eqs~\eqref{eq:nd}-\eqref{eq:ccw} with $\hat\eta_n^*$ being the Z-estimator computed from $\bbP^*_n$.  

\begin{thm}
\label{thm:bootstrap}
Under Assumptions~\ref{ass:consistency}-\ref{ass:positivity_censoring} and technical conditions given in Appendix~\ref{app:bootstrap_assumption} (\ref{ass:Zidentifiability}--\ref{ass:non_sing} and \ref{ass:bootstrap_sampling}--\ref{ass:bootstrap_precision}), it holds 
\begin{enumerate}
    \item The CCW estimator $\hat \Delta_n$ is asymptotically normal with 
    $$
    \sqrt{n}(\hat \Delta_n- \Delta) \xrightarrow[n\to\infty]{\cL} \cN(0,\sigma_\Delta^2). 
    $$
\item The bootstrap CCW estimator $\hat \Delta^*_n$ is such that, for all bounded continuous function $\vp$,
$$
\bbE_[\vp(\sqrt{n}(\hat \Delta^*_n-\hat \Delta_n)) \mid \bbO_{n}] \xrightarrow[n\to\infty]{\bbP} \bbE[\vp(Z)] \quad \text{where} \quad Z \sim \cN(0,\sigma_\Delta^2),
$$
and where $\bbO_{n}=( O_{1},\dots, O_{n})$.
\end{enumerate}
\end{thm}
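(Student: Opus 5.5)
The plan is to write $\hat\Delta_n$ as a smooth functional of the empirical measure and of the Z-estimated nuisance $\hat\eta_n$, and then apply the functional delta method together with standard Z-estimator and bootstrap results. For each arm $j\in\{0,1\}$ and each $t\le\tau$, define the population quantities
$$
N_t^{j}(\eta) := \bbE\big[W^{\Omega_j}_{t-1}(\eta) R_t\big], \qquad D_t^{j}(\eta) := \bbE\big[W^{\Omega_j}_{t-1}(\eta) R_t E_t\big],
$$
where $W^{\Omega_j}_{t-1}(\eta)$ is the weight computed with nuisance models parametrized by $\eta$. The empirical versions $\bbP_n[W^{\Omega_j}_{t-1}(\eta)R_t]$ coincide with $N^{q}_t/n$ and $D^q_t/n$ from \eqref{eq:nd}. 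Since time is discrete with $\tau<\infty$, we have
$$
\hat S^{j}(t)=\prod_{s\le t}\big(1-\bbP_n[WRE]_s/\bbP_n[WR]_s\big),
$$
and the RMST is a finite sum of these. Hence $\hat\Delta_n=\Phi\big(\bbP_n f_{\hat\eta_n}\big)$, where $f_\eta$ is the finite vector of functions $(W R, WRE)_{t,j}$ and $\Phi$ is a rational map. By the positivity Assumptions~\ref{ass:positivity_treatment}--\ref{ass:positivity_censoring}, the denominators $N_t^j(\eta_0)$ are bounded away from zero, so $\Phi$ is differentiable at the limit. Moreover, $\Phi(P f_{\eta_0})=\Delta$ by the identification argument underlying Theorem~\ref{thm:consistency} and Corollary~\ref{cor:natural_weights}.

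\textbf{Asymptotic linearity.} First I would invoke the Z-estimator conditions (\ref{ass:Zidentifiability}--\ref{ass:non_sing}) to get
$$
\sqrt n(\hat\eta_n-\eta_0)=-\dot\Psi_{\eta_0}^{-1}\sqrt n\,\bbP_n\psi_{\eta_0}+o_P(1).
$$
Next I would expand
$$
\bbP_n f_{\hat\eta_n}-Pf_{\eta_0}=(\bbP_n-P)f_{\eta_0}+\big[Pf_{\hat\eta_n}-Pf_{\eta_0}\big]+(\bbP_n-P)(f_{\hat\eta_n}-f_{\eta_0}).
$$
The middle term is linearized by differentiability of $\eta\mapsto Pf_\eta$: the weights are products of finitely many logistic-type probabilities, and positivity keeps them bounded, which lets us differentiate under the expectation. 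The last term is $o_P(n^{-1/2})$ by a Donsker/stochastic-equicontinuity argument, since $\{f_\eta:\|\eta-\eta_0\|<\delta\}$ is a finite-dimensional Lipschitz-in-parameter class with a bounded envelope. Combining these gives an influence function $\phi=\nabla\Phi\cdot\big(f_{\eta_0}-Pf_{\eta_0}-\dot P_{\eta_0}\dot\Psi^{-1}_{\eta_0}\psi_{\eta_0}\big)$, and the CLT then yields part~1 with $\sigma^2_\Delta=\mathrm{Var}(\phi(O))$.

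\textbf{Bootstrap.} For part~2, I would repeat the same decomposition with $\bbP_n^*$ in place of $\bbP_n$ and $\hat\eta_n$ as the centering. Three ingredients are needed:
\begin{enumerate}
\item bootstrap consistency of Z-estimators, namely $\sqrt n(\hat\eta_n^*-\hat\eta_n)=-\dot\Psi^{-1}_{\eta_0}\sqrt n(\bbP_n^*-\bbP_n)\psi_{\eta_0}+o_{P}(1)$ conditionally in probability (van der Vaart--Wellner, Thm.~3.9.11/13.4), which uses \ref{ass:bootstrap_sampling}--\ref{ass:bootstrap_precision};
\item the conditional multiplier CLT for the empirical bootstrap over the Donsker class $\{f_\eta,\psi_\eta\}$;
\item the delta method for the bootstrap (van der Vaart, Thm.~23.9), applied to $\Phi$, which is differentiable and not merely Hadamard-differentiable in an infinite-dimensional sense, since everything is finite-dimensional here.
\end{enumerate}
Together these show that $\sqrt n(\hat\Delta^*_n-\hat\Delta_n)$ has the same linear representation, now driven by $\sqrt n(\bbP_n^*-\bbP_n)\phi$. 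That representation converges conditionally in bounded-Lipschitz distance to $\cN(0,\sigma^2_\Delta)$, which is equivalent to the statement for all bounded continuous $\vp$.

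\textbf{Main obstacle.} I expect the main difficulty to be controlling the nuisance-estimation terms uniformly, and in particular the bootstrap analogue $(\bbP_n^*-\bbP_n)(f_{\hat\eta^*_n}-f_{\hat\eta_n})=o_P(n^{-1/2})$. There are two issues here. First, the weights contain indicator factors $\Gamma_{\Omega,s}$ and inverse probabilities. Second, the event $\{N_t^*\neq0\}$ in \eqref{eq:ccw} must hold with conditional probability tending to one. For the first issue, I would show that the indicators do not depend on $\eta$, so $f_\eta$ is a fixed indicator times a smooth function of $\eta$ bounded via positivity on a neighbourhood of $\eta_0$. This gives a Lipschitz-in-$\eta$ parametric class with square-integrable envelope, hence Donsker, and the equicontinuity bounds then follow directly. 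For the second issue, positivity yields $N^j_t(\eta_0)>0$, and the law of large numbers for $\bbP_n^*$ handles the event $\{N_t^*\neq0\}$.
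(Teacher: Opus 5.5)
Your argument is sound in outline but organized differently from the paper's. The paper never linearizes $\bbP_n f_{\hat\eta_n}$ by hand. It appends the hazard equations $W^j_{s-1}(\eta)R_s(E_s-h^j_s)$ to the nuisance scores and treats $\theta=(\eta,h^0,h^1)$ as one stacked Z-estimator; solving the hazard equations exactly gives back your ratios $D^j_s/N^j_s$. It then gets asymptotic normality and bootstrap validity of $\hat\theta_n$ in one step from \cite[Theorem 10.16]{Kosorok2008}, with Lemma~\ref{lem:Donsker} supplying the Donsker condition. The ordinary and bootstrap delta methods applied to the polynomial maps $g_{\rm RD}$, $g_{\rm RMST}$ finish the proof.

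Your two-stage route Z-estimates only $\eta$, treats the hazards as ratios of weighted empirical means, and controls $(\bbP_n-P)(f_{\hat\eta_n}-f_{\eta_0})$ and its bootstrap analogue by equicontinuity. This re-derives explicitly what that theorem does internally. It costs more work, because you must check separately the Donsker property of $\{f_\eta\}$, the differentiability of $\eta\mapsto Pf_\eta$, and bootstrap equicontinuity. All of these follow from \ref{ass:derivatice} applied to the hazard blocks of the stacked $\psi$. In return you get an explicit influence function $\phi$ that displays the nuisance-correction term. The two variances agree. Since $\psi^{\rm nuis}$ does not depend on $h$, $A$ is block lower-triangular, and inverting it reproduces your $\phi$. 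Hence $\mathrm{Var}(\phi(O))$ equals the paper's $\dot g_{\rm RD}(\theta_0)A^{-1}\bbE[\psi_{\theta_0}\psi_{\theta_0}^{\transpose}]A^{-\transpose}\dot g_{\rm RD}(\theta_0)^{\transpose}$.

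One justification needs fixing. Assumptions~\ref{ass:positivity_treatment}--\ref{ass:positivity_censoring} do not imply $N^j_t(\eta_0)>0$. By Lemma~\ref{lem:identification}, $N^j_t(\eta_0)=\bbP(Y_{t-1}(q^{\rm nat}_{\Omega_j})=0)$, which can vanish (if the event has surely occurred by $t-1$ under the intervention) even when all propensities are bounded away from zero. Likewise, those assumptions constrain only the true propensities, not the modeled ones for $\eta$ near $\eta_0$, so they do not bound $W(\eta)$ on a neighbourhood. Under the theorem's hypotheses, positivity of the denominators comes from \ref{ass:non_sing}. The hazard diagonal block of $A$ is $-\mathrm{diag}(\bbE[W^j_{s-1}(\eta_0)R_s])$, so nonsingularity forces these expectations to be nonzero; this is exactly \ref{ass:glm6} in the GLM case. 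The envelope for the weights near $\eta_0$ comes from \ref{ass:derivatice}. With these substitutions your proof goes through.
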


We prove these results in Appendix~\ref{app:bootstrap}, and illustrate them in Section~\ref{sec:simulation}. We show in the next result that the technical assumptions of Appendix~\ref{app:bootstrap_assumption} are in particular satisfied for two widely-used nuisance parameter estimation methods, namely the pooled logistic regression and the log-log regression. 
\begin{cor}
    Assumptions \ref{ass:Zidentifiability}--\ref{ass:non_sing} holds under the Assumptions~\ref{ass:glm1}--\ref{ass:glm6} given in Appendix~\ref{app:pooled-glm}. Hence, under the bootstrap assumptions \ref{ass:bootstrap_sampling}--\ref{ass:bootstrap_precision}, Theorem \ref{thm:bootstrap} holds when the CCW are estimated using pooled logistic or complementary log-log regression.
\end{cor}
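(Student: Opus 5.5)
The corollary has two parts. First, the Z-estimation conditions (identifiability, smoothness, non-singularity of the derivative) must hold for the pooled GLM nuisance models. Second, this is fed into Theorem~\ref{thm:bootstrap}. The second part is immediate once the first is shown, so the work is to check each nuisance assumption for pooled logistic and complementary log-log regression under Assumptions~\ref{ass:glm1}--\ref{ass:glm6}.

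\textbf{Step 1: write the estimating equations.} For each nuisance function ($p^C_t$, $p^{\Omega_0}_t$, $p^{\Omega_1}_t$), the pooled GLM takes the form
\[
p_\eta(z) = \psi(\eta^\transpose \phi(z)),
\]
where $\phi$ is a finite-dimensional design map of the history and time, and $\psi$ is either the logistic link or $u\mapsto 1-\exp(-e^u)$. The score is
\[
\sum_t R_t\,\frac{\psi'(\eta^\transpose\phi)}{\psi(1-\psi)}\bigl(B_t-\psi(\eta^\transpose\phi)\bigr)\,\phi,
\]
where $B_t$ is the relevant binary response: $\ind\{C_t=0\}$ or $\Gamma_{\Omega,t}$. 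Stacking the three scores gives the joint Z-estimating function for $\eta$.

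\textbf{Step 2: check the conditions in order.}
\begin{enumerate}
\item \emph{Identifiability of the zero.} Under correct specification (presumably one of Assumptions~\ref{ass:glm1}--\ref{ass:glm6}), the true $\eta_0$ solves the population equation. Uniqueness follows from strict concavity of the expected log-likelihood. This holds because the logit log-likelihood is concave with Hessian $-\bbE[\sum_t R_t\,\psi'\,\phi\phi^\transpose]$, and the cloglog log-likelihood is also log-concave in the linear predictor. Positive definiteness then comes from the design-matrix rank condition.
\item \emph{Smoothness, envelope and Donsker properties.} The parameter set is compact and the covariates $\phi$ are bounded (or have finite moments). Since $\psi$ is smooth, the score class is Lipschitz in $\eta$ with a square-integrable envelope. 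It is therefore Donsker, and the map $\eta\mapsto$ score is differentiable with a continuous derivative.
\item \emph{Non-singularity.} The derivative at $\eta_0$ is the negative Fisher information. It is invertible by the same rank condition together with $\psi'$ bounded below on compacts. Here positivity (Assumptions~\ref{ass:positivity_treatment}--\ref{ass:positivity_censoring}) keeps $\psi$ away from $0$ and $1$, so the weight $\psi'/(\psi(1-\psi))$ stays bounded.
\end{enumerate}

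\textbf{Main obstacle.} The delicate step is the cloglog case. Its score weight $\psi'/(\psi(1-\psi))$ is not identically $1$, so concavity and bounded-weight arguments need the linear predictor confined to a compact set. I would get this from the compact parameter space and bounded $\phi$, then verify concavity of $B\log\psi(u)+(1-B)\log(1-\psi(u))$ in $u$ directly. Both $\log(1-e^{-e^u})$ and $-e^u$ are concave.

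Once Step 2 is done, the hypotheses \ref{ass:Zidentifiability}--\ref{ass:non_sing} are met. Combined with \ref{ass:bootstrap_sampling}--\ref{ass:bootstrap_precision}, Theorem~\ref{thm:bootstrap} then applies verbatim.
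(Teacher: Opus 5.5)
\textbf{Main gap: the hazard blocks are never treated.} Your reduction of the second sentence to the first is fine, but your Step~2 verifies only half of \ref{ass:Zidentifiability}--\ref{ass:non_sing}. These conditions concern the full stacked function $\psi_\theta$ with $\theta=(\eta,h^0,h^1)$. Besides the nuisance scores, $\psi_\theta$ contains the $2\tau$ hazard blocks $W^j_{s-1}(\eta)R_s(E_s-h^j_s)$, and your Step~2 only treats the nuisance scores. Three things are missing.

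First, identification of the hazards. Once $\eta_\ast=\eta_0$ is known, the hazard equations reduce to $(h^j_{0,s}-h^j_{\ast,s})\,\bbE[W^j_{s-1}(\eta_0)R_s]=0$. Uniqueness therefore requires \ref{ass:glm6}, which your proposal never uses.

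Second, nonsingularity. $A$ is block lower triangular, with $A_\eta=\operatorname{diag}(A_{\Omega_0},A_{\Omega_1},A_C)$ and hazard block $D_h=-\operatorname{diag}(\bbE[W^j_{s-1}(\eta_0)R_s])$. Invertibility of the Fisher information of $\eta$ says nothing about $D_h$, whose invertibility is again exactly \ref{ass:glm6}.

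Third, smoothness and envelopes of the hazard blocks. $W^j_{s-1}(\eta)$ is a product of reciprocals of fitted probabilities, so you need $W^j_{s-1}$ and $\partial W^j_{s-1}/\partial\eta$ to be uniformly bounded on a compact neighbourhood of $\Theta$. This follows from bounded features, compactness and continuity of $g_k$ (\ref{ass:glm2}--\ref{ass:glm4}), which keep every fitted probability in $[\varepsilon,1-\varepsilon]$ for all $\theta$. It does not follow from the positivity assumptions, which only constrain the true probabilities.

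Two smaller points. \ref{ass:Zidentifiability} is a well-separatedness statement, so you need uniqueness plus continuity of $\Psi_\bbP$ and compactness of $\Theta$, not uniqueness alone. And \ref{ass:root} needs the nuisance approximate root together with exact solution of the hazard equations.

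\textbf{Specification problem in Step~1.} You fit the admissibility models on all at-risk rows and assert that positivity keeps the probabilities away from $0$ and $1$. For the response $\Gamma_{\Omega,t}$ this is false in two ways.
\begin{itemize}
\item By the trace property of admissible sets, an individual who has already deviated has $\Gamma_{\Omega,t}=0$ with probability one.
\item Whenever every structurally possible action is admissible (e.g.\ inside a grace period), $\Gamma_{\Omega,t}=1$ with probability one.
\end{itemize}
Logistic and cloglog links cannot reproduce these values with a finite $\eta$. Hence no interior $\eta_0$ satisfies correct specification, and \ref{ass:glm1} cannot hold for your model as written. The paper avoids this by restricting the pooled likelihood to rows with $V^{\Omega_j}_s=R_s\ind\{Y_s=0\}\Gamma_{\Omega_j,s-1}=1$ and $D^{\Omega_j}_s=1$, and by setting the modelled probability to one when $D^{\Omega_j}_s=0$.

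\textbf{Your concavity route for the nuisance zero.} With that restriction in place, your strict-concavity argument is a valid alternative to the paper's. The paper instead shows that
\[
(\delta^k)^\transpose\bbE[\psi^k(O;\eta^k_\ast)]=\bbE\Bigl[\sum_s V^k_sD^k_s\,b_k(z_\ast)\{g_k(z_0)-g_k(z_\ast)\}(z_0-z_\ast)\Bigr]>0 \quad \text{for } \delta^k\neq 0,
\]
with $z_0=(Z^k_s)^\transpose\eta^k_0$ and $z_\ast=(Z^k_s)^\transpose\eta^k_\ast$. That monotonicity argument works for any strictly monotone link satisfying \ref{ass:glm4}. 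Yours relies on the log-concavity specific to the logit and cloglog links.
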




\section{Numerical study} \label{sec:simulation}

\subsection{Data generation process}
We consider a longitudinal setting with visits at time $t \in \{0,\dots,12\}$, with time-unit being \emph{month} for the sake of the simulation setting. We consider a two dimensional baseline covariates \(M=(M^1,M^2)\) with $M_1$ being continuous and $M_2$ being binary, which can be understood as being included in $X_0$, to align with the notation of the previous setting. Likewise, we consider a two dimensional time-varying covariates \(X_t=(X_{t}^1,X_{t}^2)\) with $X^1_t$ continuous and $X^2_t$ binary. We consider a binary treatment \(A_t\in\{0,1\}\). 
We model that $Y_t$ depends on $\overline A_t$ only through the sum $|\overline A_t|$ which corresponds to the total number of months under treatment. We refer to Appendix~\ref{app:simulation} for a precise description of the data generating process and to
Figure~\ref{fig:dag_longitudinal_schematic} for a DAG of the related random variables. \medskip

\begin{figure}[ht]
\centering
\begin{tikzpicture}[
    >=Stealth,
    every node/.style={font=\scriptsize},
    box/.style={
      draw,
      rounded corners=2pt,
      minimum width=1.15cm,
      minimum height=0.58cm,
      align=center
    },
    base/.style={box, fill=gray!15},
    xone/.style={box, fill=blue!12},
    xtwo/.style={box, fill=green!12},
    trt/.style={box, fill=orange!18},
    evt/.style={box, fill=red!10},
    hist/.style={box, fill=purple!25},
    lab/.style={font=\scriptsize},
    note/.style={font=\scriptsize, align=left}
]

\node[base] (M1) at (0, 1.2) {$M_1$};
\node[base] (M2) at (0, 0.0) {$M_2$};

\node[lab] at (3.0, 2.45) {visit $t-1$};
\node[lab] at (6.3, 2.45) {visit $t$};
\node[lab] at (9.6, 2.45) {visit $t+1$};

\draw[gray!70, ->] (1.5,2.1) -- (10.8,2.1);
\node[gray!70, anchor=west] at (10.9,2.1) {time};

\node[xone] (X1km1) at (3.0, 1.2) {$X_{1,t-1}$};
\node[xtwo] (X2km1) at (3.0, 0.0) {$X_{2,t-1}$};
\node[trt]  (Akm1)  at (3.0,-1.2) {$A_{t-1}$};
\node[hist] (Hkm1)  at (3.0,-2.2) {$|A_{t-1}|$};
\node[evt]  (Ykm1)  at (3.0,-3.2) {$Y_{t-1}$};

\node[xone] (X1k) at (6.3, 1.2) {$X_{1,t}$};
\node[xtwo] (X2k) at (6.3, 0.0) {$X_{2,t}$};
\node[trt]  (Ak)  at (6.3,-1.2) {$A_t$};
\node[hist] (Hk)  at (6.3,-2.2) {$|A_t|$};
\node[evt]  (Yk)  at (6.3,-3.2) {$Y_t$};

\node[xone] (X1kp1) at (9.6, 1.2) {$X_{1,t+1}$};
\node[xtwo] (X2kp1) at (9.6, 0.0) {$X_{2,t+1}$};
\node[trt]  (Akp1)  at (9.6,-1.2) {$A_{t+1}$};
\node[hist] (Hkp1)  at (9.6,-2.2) {$|A_{t+1}|$};
\node[evt]  (Ykp1)  at (9.6,-3.2) {$Y_{t+1}$};

\node at (1.7,-1.2) {$\cdots$};
\node at (1.7, 1.2) {$\cdots$};
\node at (1.7, 0.0) {$\cdots$};
\node at (1.7,-2.2) {$\cdots$};
\node at (1.7,-3.2) {$\cdots$};


\draw[->] (M1) to[bend left=10] (X1km1);
\draw[->] (M1) to[bend left=15] (Akm1);
\draw[->] (M1) to[bend left=22] (Ykm1);

\draw[->] (M2) to[bend right=5] (X2km1);
\draw[->] (M2) to[bend right=10] (Akm1);
\draw[->] (M2) to[bend right=16] (Ykm1);

\draw[->, gray!60] (M1) to[bend left=18] (Ak);
\draw[->, gray!60] (M2) to[bend right=12] (Ak);
\draw[->, gray!50] (M1) to[bend left=26] (Yk);
\draw[->, gray!50] (M2) to[bend right=20] (Yk);

\draw[->, gray!45] (M1) to[bend left=28] (Akp1);
\draw[->, gray!45] (M2) to[bend right=24] (Akp1);
\draw[->, gray!35] (M1) to[bend left=33] (Ykp1);
\draw[->, gray!35] (M2) to[bend right=28] (Ykp1);

\draw[->] (X1km1) to[bend left=48] (Akm1);
\draw[->] (X2km1) to[bend right=48] (Akm1);
\draw[->] (Akm1)  -- (Hkm1);
\draw[->] (X1km1) to[bend left=48] (Ykm1);
\draw[->] (X2km1) to[bend right=48] (Ykm1);
\draw[->] (Akm1)  to[bend right=48] (Ykm1);
\draw[->] (Hkm1)  -- (Ykm1);

\draw[->] (X1k) to[bend left=48] (Ak);
\draw[->] (X2k) to[bend right=48] (Ak);
\draw[->] (Ak)  -- (Hk);
\draw[->] (X1k) to[bend left=48] (Yk);
\draw[->] (X2k) to[bend right=48] (Yk);
\draw[->] (Ak)  to[bend right=48] (Yk);
\draw[->] (Hk)  -- (Yk);

\draw[->] (X1kp1) to[bend left=48] (Akp1);
\draw[->] (X2kp1) to[bend right=48] (Akp1);
\draw[->] (Akp1)  -- (Hkp1);
\draw[->] (X1kp1) to[bend left=48] (Ykp1);
\draw[->] (X2kp1) to[bend right=48] (Ykp1);
\draw[->] (Akp1)  to[bend right=48] (Ykp1);
\draw[->] (Hkp1)  -- (Ykp1);

\draw[->] (X1km1) -- (X1k);
\draw[->] (X2km1) -- (X2k);
\draw[->] (Akm1) -- (X1k);
\draw[->] (Akm1) -- (X2k);
\draw[->] (Akm1) -- (Ak);
\draw[->] (Hkm1) -- (Hk);

\draw[->] (X1k) -- (X1kp1);
\draw[->] (X2k) -- (X2kp1);
\draw[->] (Ak) -- (X1kp1);
\draw[->] (Ak) -- (X2kp1);
\draw[->] (Ak) -- (Akp1);
\draw[->] (Hk) -- (Hkp1);

\draw[->] (Yk) -- (Ykp1);
\draw[->] (Ykm1) -- (Yk);

\draw[->, dashed] (Akm1) to[bend left=12] (X1kp1);
\draw[->, dashed] (Akm1) to[bend left=8]  (X2kp1);
\draw[->, dashed] (Akm1) to[bend right=14] (Akp1);

\node[note, anchor=west] at (11.2,1.2) {time-varying\\continuous covariate};
\node[note, anchor=west] at (11.2,0.0) {time-varying\\binary covariate};
\node[note, anchor=west] at (11.2,-1.2) {treatment};
\node[note, anchor=west] at (11.2,-2.2) {cumulative\\treatment history};
\node[note, anchor=west] at (11.2,-3.2) {event indicator\\on current interval};

\node[note, anchor=west] at (-0.4,-4.5) {
\begin{tabular}{ll}
\textcolor{black}{\tikz\draw[->] (0,0)--(0.35,0);} & solid arrows: contemporaneous and lag-1 effects \\
\textcolor{black}{\tikz\draw[->, dashed] (0,0)--(0.35,0);} & dashed arrows: lag-2 treatment effects 
\end{tabular}
};

\node[note, anchor=west] at (-0.4,-5.6) {%
\parbox{12.5cm}{\emph{Notes.} At visit \(t\), treatment \(A_t\) depends on baseline variables \(M\), current covariates \(X_{t}\), and the two most recent treatment values. The covariates \(X_{t}\) depend on their previous value and on the two most recent treatment values. The event indicator \(Y_k\) depends on \((M,X_{t},A_t)\) and on the cumulative treatment \(| A_t|\). 
}
};

\end{tikzpicture}
\caption{Schematic longitudinal causal graph for the observational data-generating mechanism. 
}
\label{fig:dag_longitudinal_schematic}
\end{figure}

We aim to compare the two treatments strategies \emph{treat the first two months} v.s. \emph{treat the first seven months.} We consider the four following pair of interventions:
\begin{enumerate}[noitemsep,topsep=0pt]
    \item {\bf Strict adherence}: an individual follows the intervention if they are treated exactly the first two (resp. seven) months. 
    \item {\bf Grace period (1 month)}:  an individual follows the intervention if they are treated the first two (resp. seven) months, plus or minus one months.
    \item {\bf Grace period (2 month)}: an individual follows the intervention if they are treated the first two (resp. seven) months, plus or minus two months (and at least one month).
    \item {\bf Discontinuation}: an individual follows the intervention if they are treated for exactly two (resp. seven) months in total, with one month of discontinuation allowed.
\end{enumerate}
We refer to Figure~\ref{fig:admissible-trajectories} for a visual representation of these different interventions.\medskip

\begin{figure}
\centering

\begin{minipage}[t]{0.48\textwidth}
\vspace{0.3cm}
\centering

\textbf{Strict adherence}

\vspace{0.2cm}
\begin{tabular}{>{\centering\arraybackslash}m{1.45cm} m{4.2cm}}
\trajlineDOne{\DOnePct{3.16\%}}{1,1,0,0,0,0,0,0,0,0} 
\addlinespace[0.1cm]
\trajlineDTwo{\DTwoPct{1.39\%}}{1,1,1,1,1,1,1,1,0,0}
& \trajaxis \\[0.12cm]
\end{tabular}

\vspace{3.3cm}
\textbf{Grace period (1 visit)}

\vspace{0.2cm}
\begin{tabular}{>{\centering\arraybackslash}m{1.45cm} m{4.2cm}}
\trajlineDOneFaded{\DOnePct{5.11\%}}{1,0,0,0,0,0,0,0,0,0}
\trajlineDOne{}{1,1,0,0,0,0,0,0,0,0}
\trajlineDOneFaded{}{1,1,1,0,0,0,0,0,0,0}
\addlinespace[0.1cm]
\trajlineDTwoFaded{\DTwoPct{1.77\%}}{1,1,1,1,1,1,1,0,0,0}
\trajlineDTwo{}{1,1,1,1,1,1,1,1,0,0}
\trajlineDTwoFaded{}{1,1,1,1,1,1,1,1,1,0}
\end{tabular}
\end{minipage}
\hfill
\begin{minipage}[t]{0.48\textwidth}
\vspace{0.3cm}
\centering

\textbf{Grace period (2 visits)}

\vspace{0.2cm}
\begin{tabular}{>{\centering\arraybackslash}m{1.45cm} m{4.2cm}}
\trajlineDOneFaded{\DOnePct{5.18\%}}{1,0,0,0,0,0,0,0,0,0}
\trajlineDOne{}{1,1,0,0,0,0,0,0,0,0}
\trajlineDOneFaded{}{1,1,1,0,0,0,0,0,0,0}
\trajlineDOneFaded{}{1,1,1,1,0,0,0,0,0,0}
\addlinespace[0.1cm]
\trajlineDTwoFaded{\DTwoPct{1.78\%}}{1,1,1,1,1,1,0,0,0,0}
\trajlineDTwoFaded{}{1,1,1,1,1,1,1,0,0,0}
\trajlineDTwo{}{1,1,1,1,1,1,1,1,0,0}
\trajlineDTwoFaded{}{1,1,1,1,1,1,1,1,1,0}
\trajlineDTwoFaded{}{1,1,1,1,1,1,1,1,1,1}
\end{tabular}

\vspace{0.35cm}
\textbf{Discontinuation}

\vspace{0.2cm}
\begin{tabular}{>{\centering\arraybackslash}m{1.45cm} m{4.2cm}}
\trajlineDOneFaded{\DOnePct{3.19\%}}{1,0,1,0,0,0,0,0,0,0}
\trajlineDOne{}{1,1,0,0,0,0,0,0,0,0}
\addlinespace[0.1cm]
\trajlineDTwo{\DTwoPct{2.33\%}}{1,1,1,1,1,1,1,1,0,0}
\trajlineDTwoFaded{}{1,1,1,1,1,1,1,0,1,0}
\trajlineDTwoFaded{}{1,1,1,1,1,1,0,1,1,0}
\trajlineDTwoFaded{}{1,1,1,1,1,0,1,1,1,0}
\trajlineDTwoFaded{}{1,1,1,1,0,1,1,1,1,0}
\trajlineDTwoFaded{}{1,1,1,0,1,1,1,1,1,0}
\trajlineDTwoFaded{}{1,1,0,1,1,1,1,1,1,0}
\trajlineDTwoFaded{}{1,0,1,1,1,1,1,1,1,0}
\trajlineDTwoFaded{}{0,1,1,1,1,1,1,1,1,0}
\end{tabular}
\end{minipage}
\caption{Admissible treatment trajectories under the four different adherence scenarii. Filled squares indicate treatment intake ($A_t=1$), and empty squares indicate that no treatment was taken ($A_t=0$). Trajectories in $\Omega_0$ are shown in blue and trajectories in $\Omega_1$ in green. Strict trajectories are displayed with full opacity, whereas other admissible trajectories are displayed in a lighter shade. Percentages indicate the proportion of individuals in the observed data whose treatment trajectory belongs to the corresponding admissible set.} 
\label{fig:admissible-trajectories}
\end{figure}

A dataset of \(n=15{,}000\) observations is generated from the data generating process. We apply, for each of the four pairs of interventions described above, the CCW methodology, and get an estimate $\hat \Delta$ of the RMST constrast $\Delta$ between the two arms at time $\tau=12$:

$$
\Delta := \rho^{q^{\rm nat}_{\Omega_1}}(\tau=12) -   \rho^{q^{\rm nat}_{\Omega_0}}(\tau=12) 
$$

The process is then replicated over 500 Monte-Carlo (the number of replications is chosen based on computational resources time). Results are displayed in Figure~\ref{fig:monte_carlo}. The solid blue line in Figure~\ref{fig:monte_carlo} indicated the ground truth treatment effect under strict adherence. 
By contrast, the estimates obtained under the three other adherence settings are not centered around the value. This is expected, as these estimators no longer target the same causal estimand but the effect of a different stochastic intervention, as proven in Theorem~\ref{thm:consistency}. As one can note, in this setting, adding a grace period of one month does not change much the effect of the considered intervention. \medskip

For each admissible strategy, we also computed the associated ground truth (also by Monte Carlo computation) under the corresponding stochastic intervention. As shown in Figure~\ref{fig:monte_carlo} (Left), the estimates are centered around their own strategy-specific truth, confirming that the discrepancy with the strict adherence target does not reflect estimation failure, but rather a change in the estimand induced by relaxing the treatment adherence. At last, for a one Monte Carlo replicate and for each admissible strategy, 500 bootstrap samples were generated. The resulting estimates are displayed in Figure~\ref{fig:monte_carlo} (Right). As one can see, the bootstrap boxplot results reproduces precisely the Monte-Carlo boxplot results on the left panel, nicely illustrating the theoretical results presented in Section~\ref{sec:bootstrap}. This is further emphasized in the next subsection.

\begin{figure}[ht]
    \centering
    \includegraphics[width=0.49\linewidth]{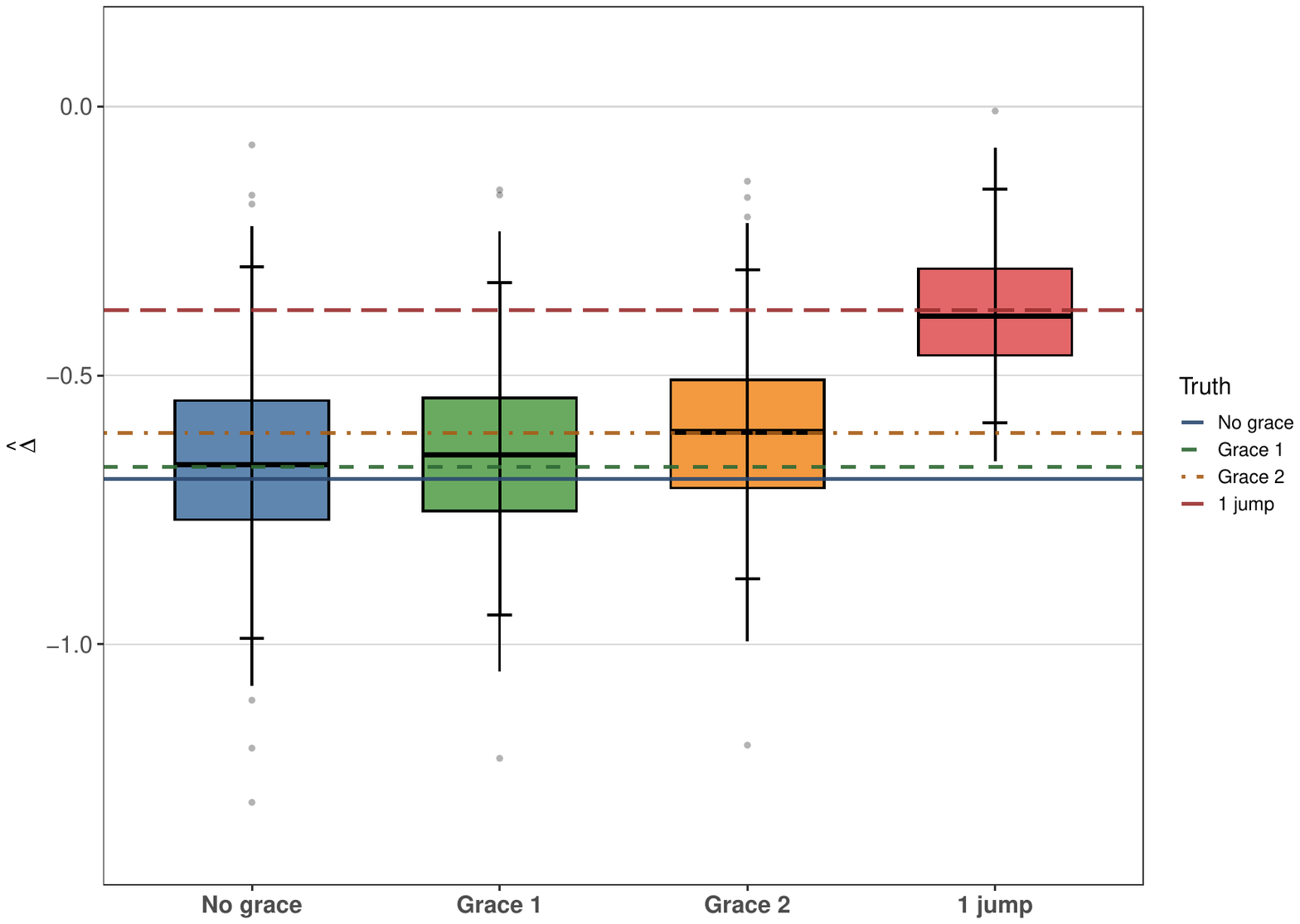}
    \includegraphics[width=0.49\linewidth]{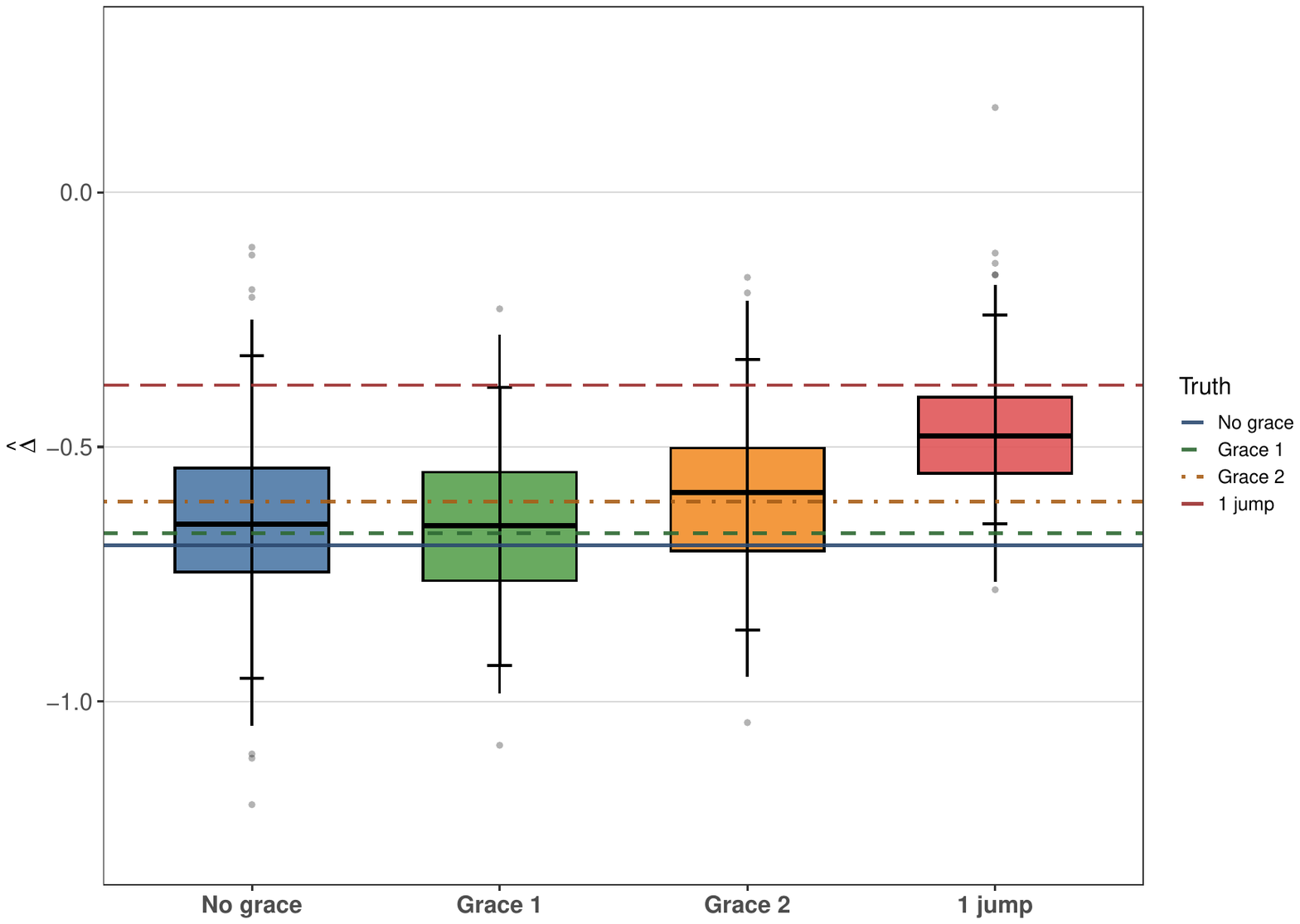}
    \caption{(Left) Boxplots of Monte-Carlo distributions of the CCW estimators obtained from 500 datasets under the different admissible treatment strategies. The horizontal reference lines represent the corresponding strategy-specific truths. (Right) Boxplots of bootstrapped CCW estimators obtained from a single dataset under the four adherence scenarios. For each scenario, 500 bootstrap samples were generated.}
    \label{fig:monte_carlo}
\end{figure}


\subsection{Bootstrap-based approximation of the sampling distribution}

We now turn to the evaluation of bootstrap-based methods for each admissible intervention. To this end, we generated 100 Monte Carlo replicates of the previous dataset and, for each replicate, performed 500 bootstrap resamples. For each Monte Carlo replicate and each admissible strategy, we constructed a 95\% percentile bootstrap confidence interval from the empirical 2.5\% and 97.5\% quantiles of the 500 bootstrap estimates, and recorded whether this interval contained the corresponding ground truth. Figure~\ref{fig:coverage_bootstrap_admissible} shows that the bootstrap confidence intervals achieve an empirical coverage close to the nominal 95\% level, with observed coverages around 97\% for grace periods of one and two time units, and around 90\% when one delayed treatment jump is allowed.

\begin{figure}[ht]
    \centering

    \subfloat[Grace period = 1]{
        \includegraphics[width=0.48\linewidth]{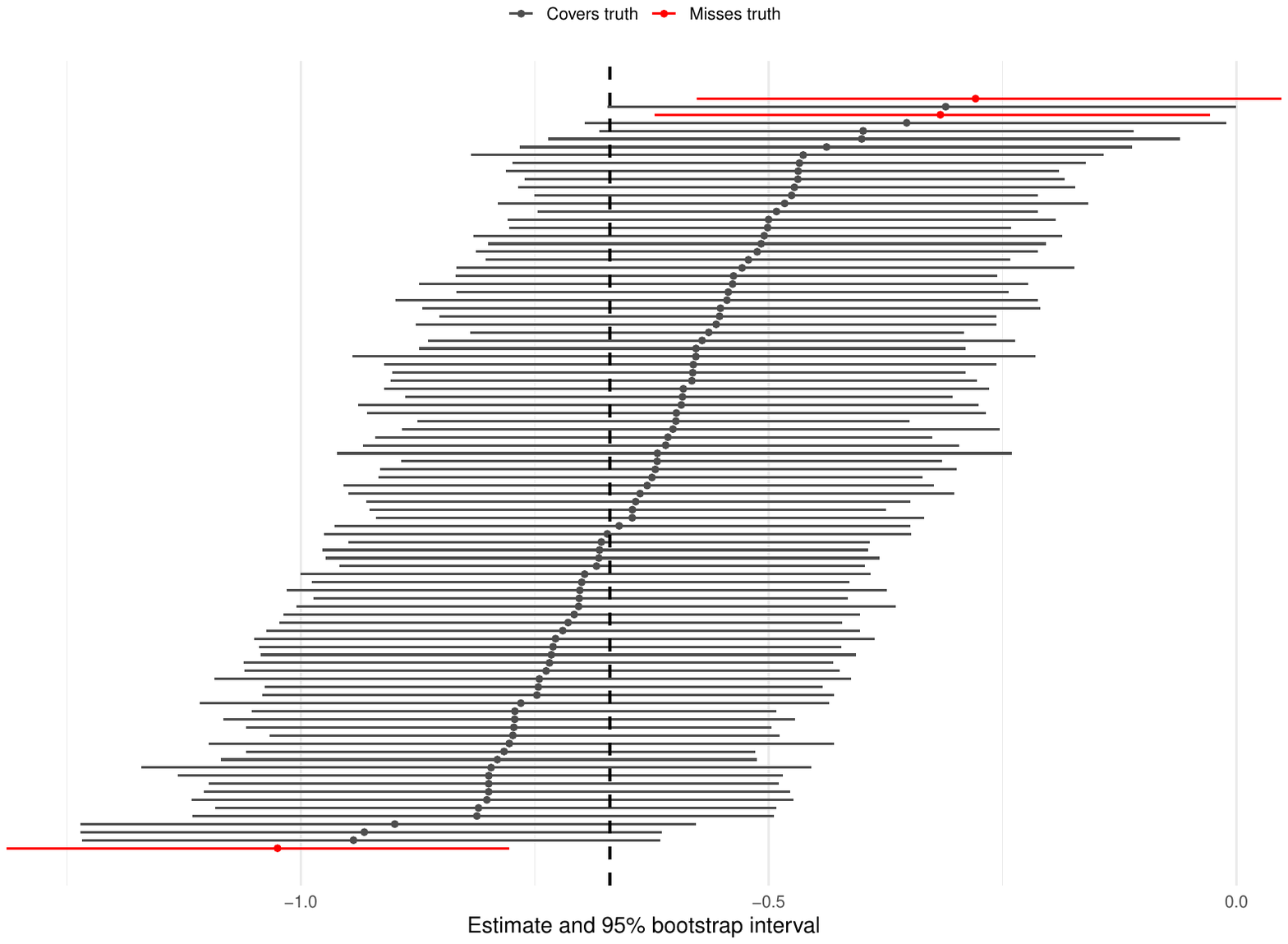}
    }
    \hfill
    \subfloat[Grace period = 2]{
        \includegraphics[width=0.48\linewidth]{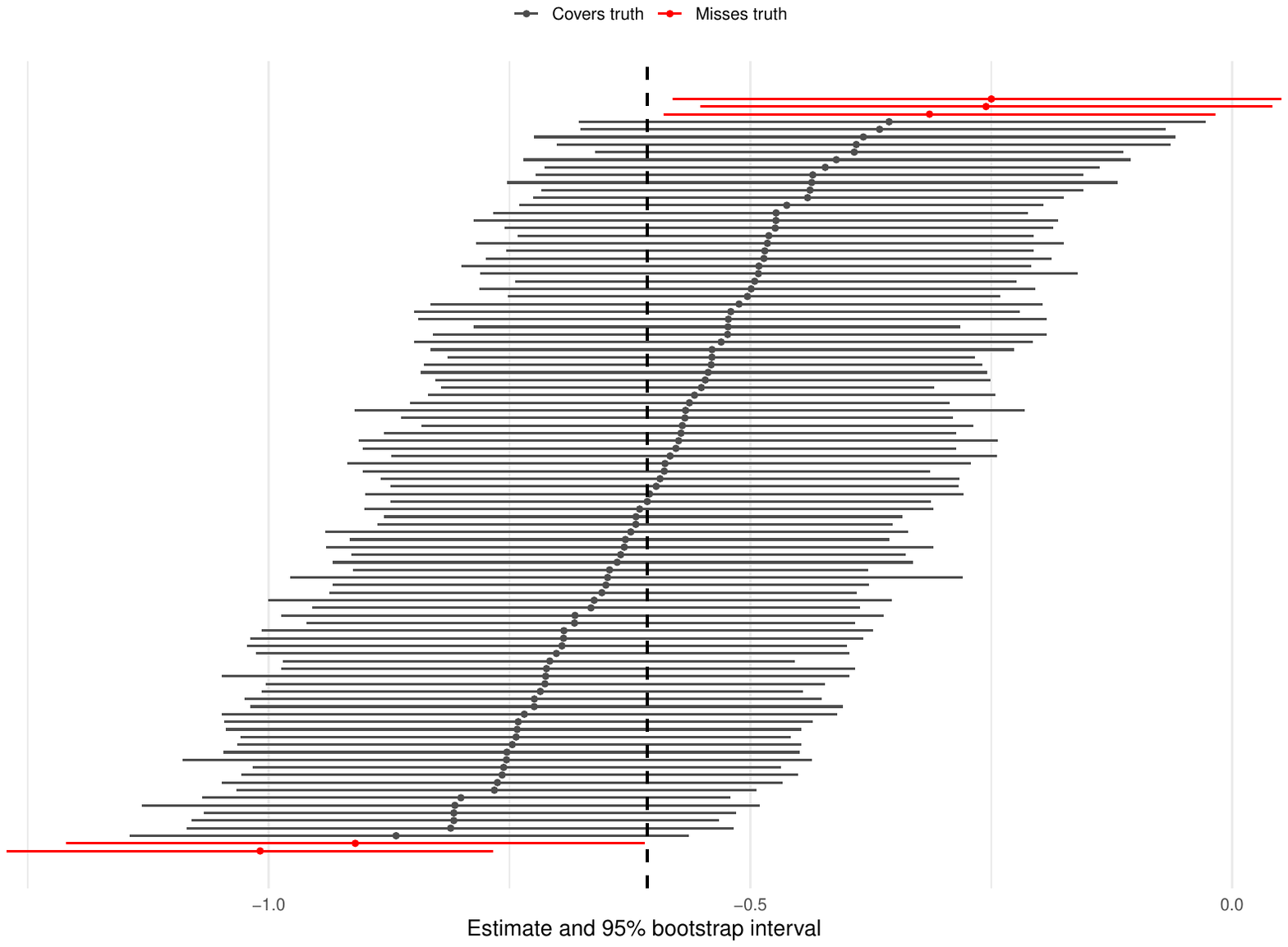}
    }
    \hfill
    \subfloat[Delayed treatment jumps allowed]{
        \includegraphics[width=0.48\linewidth]{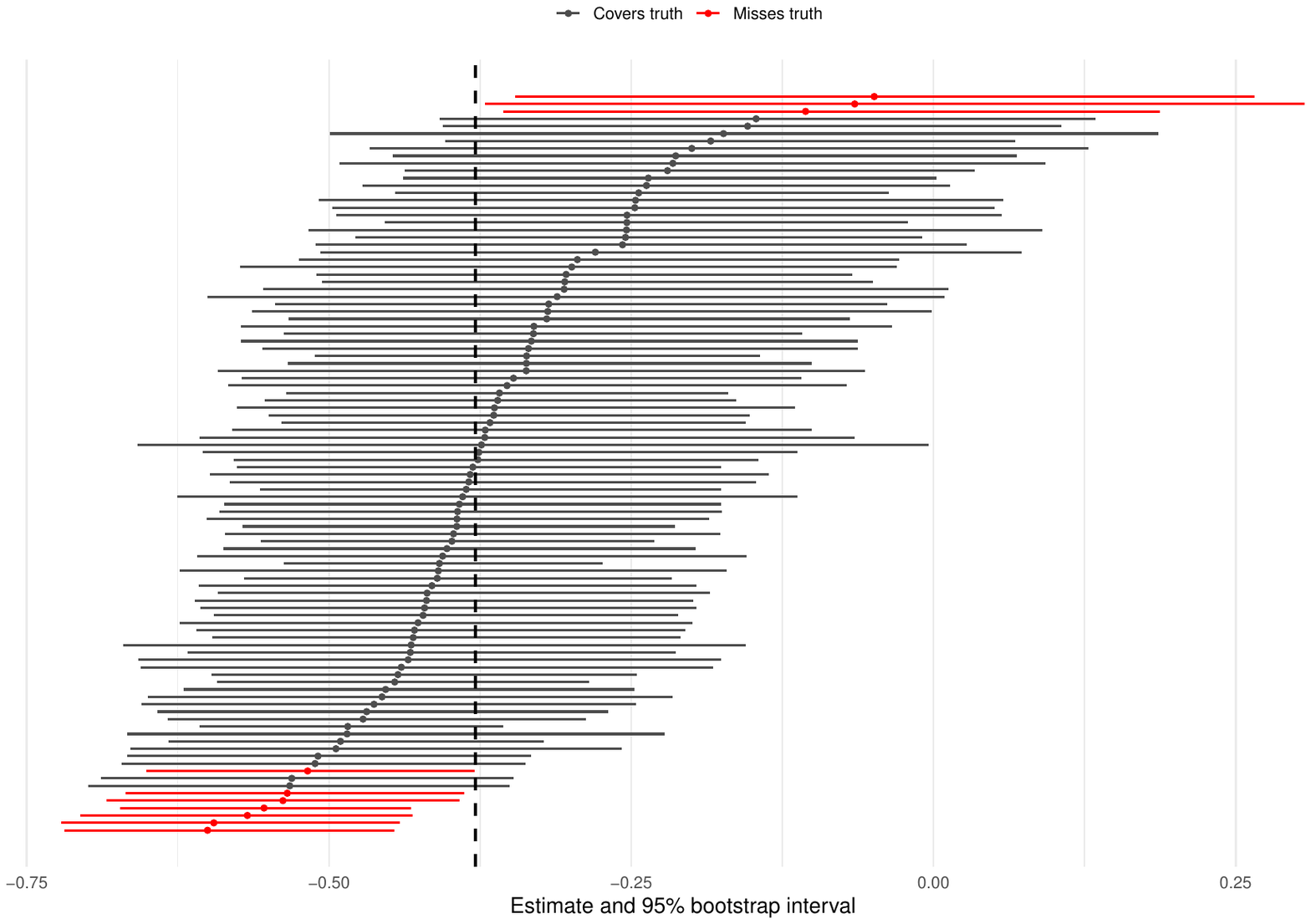}
    }

    \caption{
    Empirical coverage of the 95\% percentile bootstrap confidence intervals across 100 Monte Carlo replicates, based on 500 bootstrap resamples within each replicate, under increasingly flexible admissible treatment strategies.
    }

    \label{fig:coverage_bootstrap_admissible}
\end{figure}



We also compare the empirical distribution of the estimator across Monte Carlo replicates with a re-centered bootstrap distribution. The Monte Carlo distribution was obtained from 100 independent datasets replicates. For each Monte Carlo replicate, we obtain an estimate $\hat{\Delta}$ of the true parameter $\Delta_0$ and for each of these estimates, we also obtain a bootstrap estimates $\hat\Delta_{\rm b}^*$ of $\Delta_0$ which distribution is asymptotically centered at $\hat \Delta$. We recenter the distribution of the difference $\hat\Delta_{\rm b}^* - \hat\Delta$ at the true value of the parameter $\Delta_0$ that we estimate by averaging the $\hat \Delta$ over all Monte-Carlo simulations. The resulting centering writes
\[
\hat\Delta_{{\rm b},\mathrm{cent}}^*
=
\hat\Delta_{\rm b}^* - \hat\Delta + \bar{\hat\Delta}_{\mathrm{MC}},
\]
where \(\bar{\hat\Delta}_{\mathrm{MC}}\) is the mean of the Monte Carlo estimates across all replicates.  Figure~\ref{fig:MC_bootstrap} displays the Monte Carlo distribution together with this centered bootstrap distribution for each admissible intervention. As shown in Figure~\ref{fig:MC_bootstrap}, the centered bootstrap density closely tracks the Monte Carlo density, suggesting that the bootstrap provides a good approximation of the sampling distribution of the estimator in this setting. 


 

\begin{figure}[ht!]
    \centering

    \subfloat[Grace period = 1]{
        \includegraphics[width=0.48\linewidth]{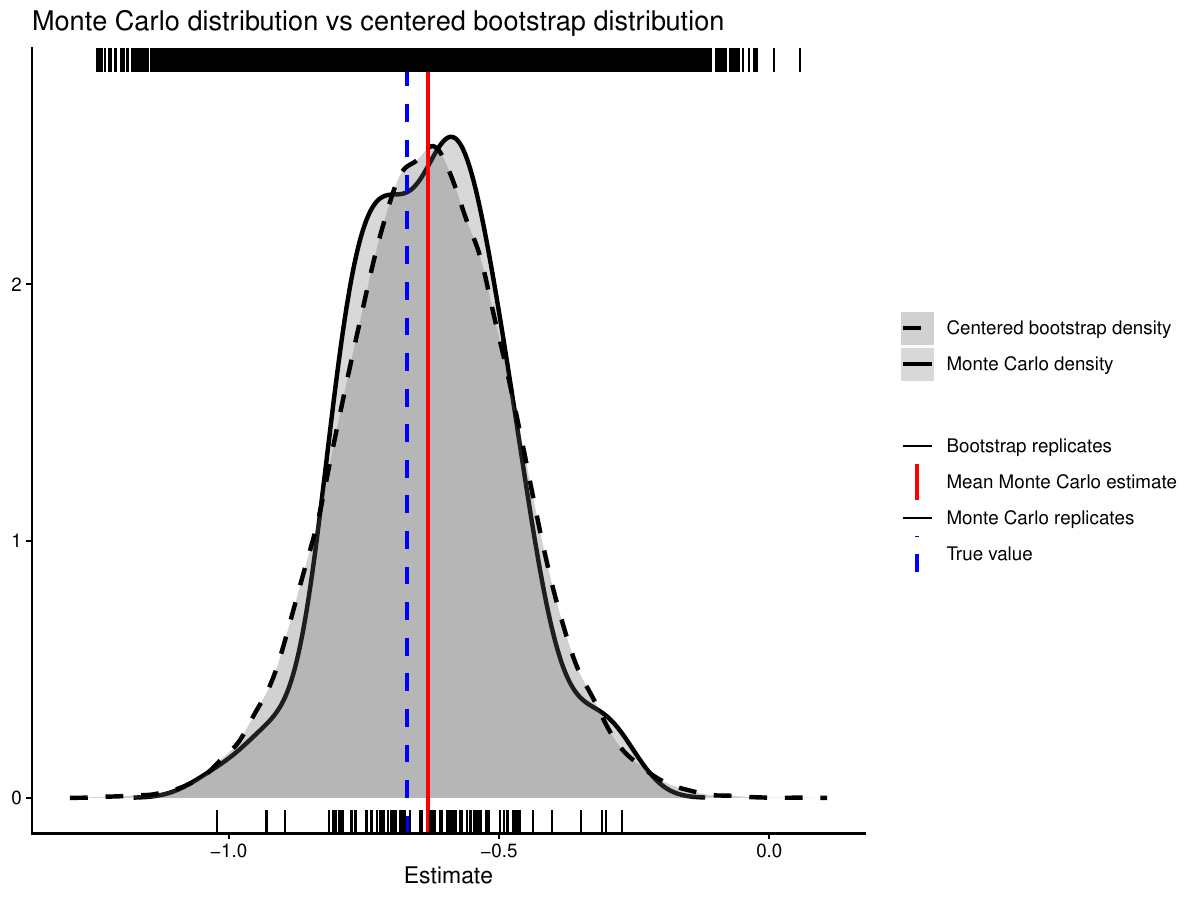}
    }
    \hfill
    \subfloat[Grace period = 2]{
        \includegraphics[width=0.48\linewidth]{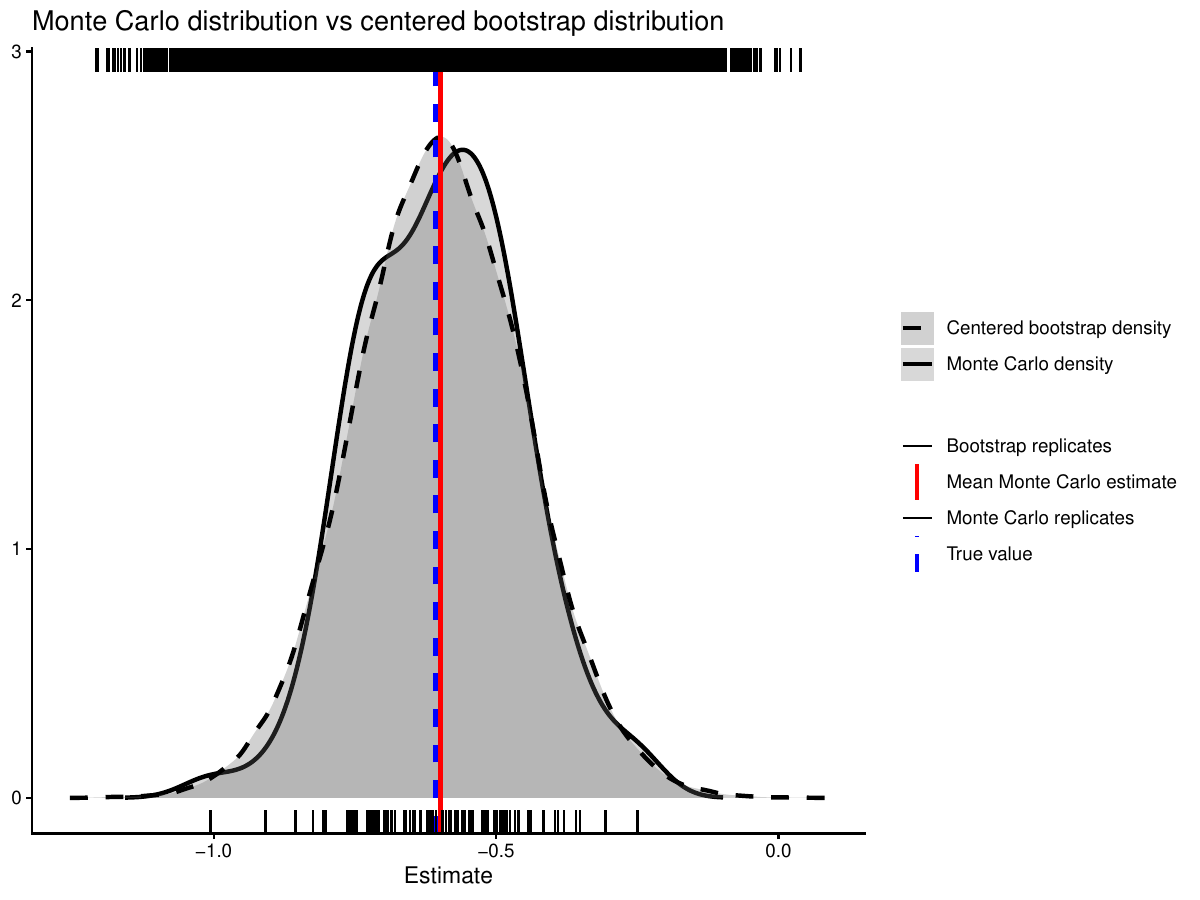}
    }
    \hfill
    \subfloat[Delayed treatment jumps allowed]{
        \includegraphics[width=0.48\linewidth]{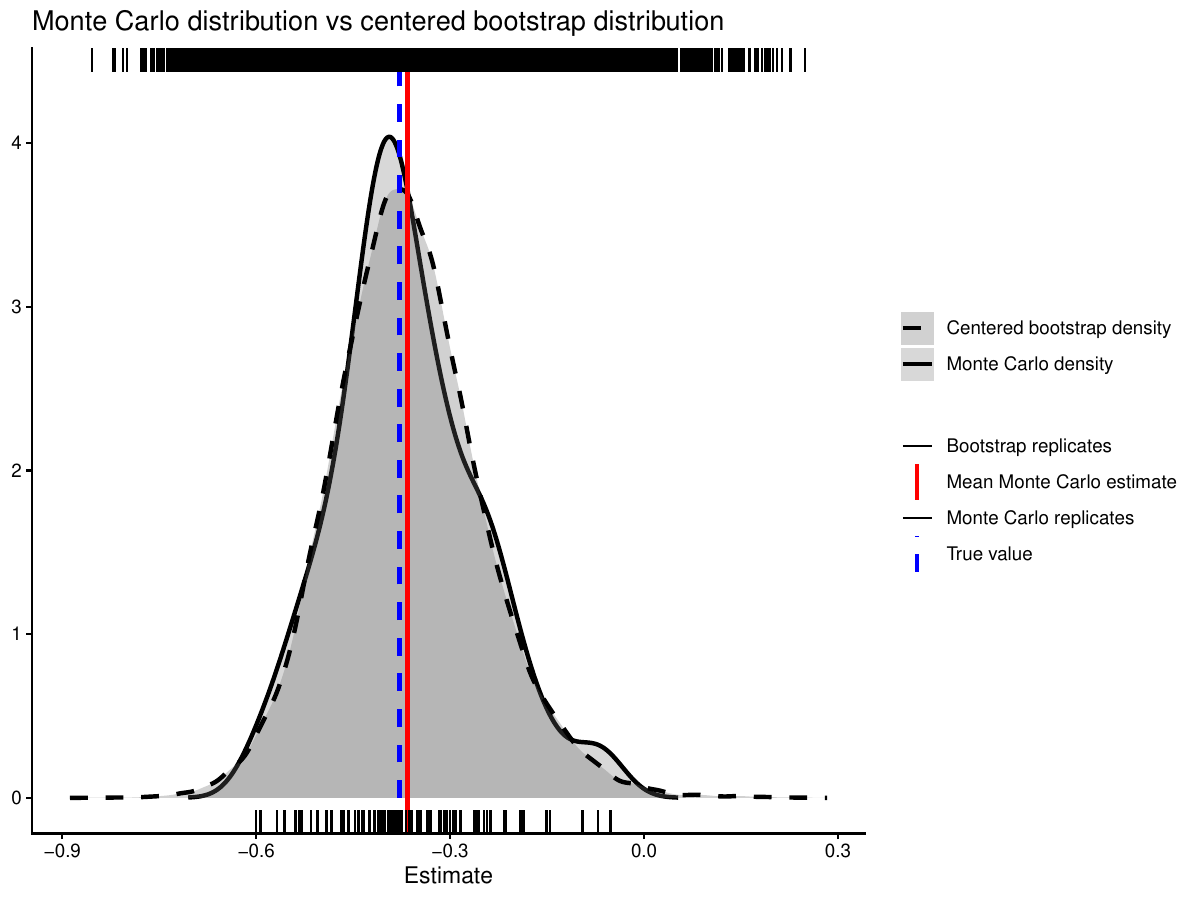}
    }

    \caption{
Distribution of the bootstrap variance estimates across 100 Monte Carlo replicates, based on 500 bootstrap resamples within each replicate, under increasingly flexible admissible treatment strategies.
    }

    \label{fig:MC_bootstrap}
\end{figure}

\section{Discussion}

Several directions for future research naturally emerge. A first avenue is the development of sensitivity analyses that quantify how estimation depends on the choice of admissible trajectory sets. For instance, when grace periods are introduced to address practical positivity constraints, their length is often chosen heuristically. The framework developed here suggests a systematic approach: define multiple admissible sets corresponding to different grace periods on the same dataset, each inducing a distinct estimand via its associated treatment mechanism. A key open question is then how to compare the proximity of the  resulting estimates to a scientifically meaningful target parameter. \medskip

A second direction is to extend this framework to more complex settings, such as dynamic treatment regimes with time-dependent confounding, where notions like grace periods become ill-defined. This arises, for example, in transplantation contexts involving waiting lists. In such settings, target trial emulation must address immortal time bias while accounting for the stochastic availability of treatment. One possible approach is to consider a sequence of nested target trials, as described by \cite{garcia2017value}, initiated at times when treatment becomes available. Within this framework, artificial censoring and weighting would still be required. However, cloning could be replaced by a form of comparability construction (e.g., matching) between treated individuals and those remaining at risk, based on time-varying confounders governing treatment availability. The nested sequence of trials would require a repeated construction of comparison sets which would induce a dependence structure distinct from that arising under cloning, as baseline populations would be redefined in a manner driven by the evolving confounding process. \medskip

The clone-censor-weight approach is widely used for causal analysis in target trial emulation with longitudinal time-to-event data, yet its statistical interpretation remains unclear. This work clarifies the estimand targeted by CCW by showing how it depends on the treatment mechanism and adherence definition. We provide theoretical guarantees for CCW estimators, including consistency and bootstrap inference under standard assumptions, and highlight how design choices such as the definition of admissible trajectories directly shape the causal estimand.

\section{Disclosure}

CV was a Sanofi employee at the time the work was done and may hold shares and/or stock options in the company. This work was
supported by Sanofi and INRIA (Institut National de
Recherche en Informatique et en Automatique) through the CIFRE program.

\section*{Acknowledgement} We are grateful to Raphaël Porchet for his time and remarks, which helped improve the manuscript.


\nocite{*}
\bibliographystyle{apalike}
\bibliography{ref}

\appendix

\section{Appendix to Section~\ref{sec:setting}} \label{sec:examples}

In this section, we present a list of examples formalized with the definitions of Section~\ref{subsec:notations}.

%
    




\begin{ex}[Example~\ref{ex:surgery}-\ref{ex:surgery-cont}, continued]\textbf{Surgery with grace period.} 

\noindent We describe the control intervention of \cite{maringe2020reflection}. The control population is represented by patient that never get the surgery, so that
\begin{align*}
	\Omega_{0,t} := 
		\{\overline 0_t\}  
\end{align*}
The associated natural kernel is degenerated with a Dirac distribution that puts all the mass at 0 for all times $t > 0$ 
$$q^{\rm nat}_{\Omega_0, t}(a  \mid H_{t}) = \ind\{a=0\}.$$
\end{ex}

\begin{ex}\textbf{Early intubation strategy.}

\noindent \cite{wanis2023emulating} emulate a target trial based on ICU database of patient with respiratory distress. The target trial compares early intubation ($\leq 8$ hours after admission) vs late intubation ($\geq 48$ hours after admission). For the “early intubation” strategy, admissible trajectories are given by 
\begin{align*}
	\Omega_{1,t} := 
	\begin{cases}
		\{0,1\}^t & \text{if } t \leq 7, \\
		\{\overline{a_t}~|~ |\overline{a_8}|\geq 1\},   & \text{if } t \geq 8. 
	\end{cases}
\end{align*}

In this setting, the natural kernel forces intubation if it has not occurred at the end of the 8 hours following entry in ICU. As there is no specification on the duration of the intubation in the trial, the kernel allows for either continuing or stopping the intubation once it is started. 
\begin{align*}
    q_{\Omega_1,t}^{\rm nat}(a \mid H_t)  = 
	\begin{cases}
		 \bbP(A_t=a \mid H_{t}, Y_{t}=C_{t}=0)  &\text{if } t \leq 7 \text{ or } |\overline{A}_{t-1}| \geq 1, \\[6pt]
      \ind\{a=1\}  &\text{if } t = 8 \text{ and } |\overline{A_7}| = 0.
	\end{cases}
\end{align*} 


Regarding the comparative  strategy “delayed intubation”, adherence requires that patients not be intubated within the first 48 hours, after which the physician may decide whether to proceed with intubation. Consequently, this strategy includes both patients intubated after 48 hours and those never intubated. The set of admissible trajectories is therefore given by    
\begin{align*}
	\Omega_{0,t} := \begin{cases}
 \{\overline 0_t\} \quad &\text{if} \quad t \leq 47
 \\
 \{\overline{a_t}~|~ |\overline{a_{47}}| = 0\} \quad &\text{if} \quad t > 47.
 \end{cases}
\end{align*}
The corresponding natural intervention is given by 
\begin{align*}
    q_{\Omega_0,t}^{\rm nat}(a \mid H_t)  = 
	\begin{cases}
      \ind\{a=0\}  &\text{if } t \leq  47, \\
      \bbP(A_t=a \mid H_{t}, Y_t = C_t = 0)  &\text{if } t \geq 48.
	\end{cases}
\end{align*} 
\end{ex}

\begin{ex}\textbf{Dynamic treatment strategy with grace period.} \label{ex:cd4}
    
    \noindent \cite{cain2010start} extend the work of \cite{hernan2006}, on when to start combined antiretroviral therapy as a treatment for HIV, considering cell count being below a threshold $x$ taking values from 200 and 500 in increment of 10. 
    Let $X_t \in [0,\infty)$ denote the CD4 cell count at time $t$. Treatment strategies correspond to initiating treatment within $m$ months after $X_t$ falls below a threshold $x$. 
    Thus, there are as many treatment strategies as there are thresholds $x$. Let $U_x := \inf \{ t : X_t \leq x \}$ denote the first time the CD4 cell count falls below $x$. 
    The set of  admissible trajectories is given by 
$$
\Omega_t := \{
		\overline{a_t}~|~ a_{U_x + m + s} = 1 \: \forall s \text{ s.t. } U_x + m + s \le t \},
$$
and the natural intervention by
\begin{align*}
	q_{\Omega,t}^{\rm nat}(a \mid H_t)  = 
	\begin{cases}
		\bbP(A_t=a\mid  H_t, Y_t=C_t=0) \quad &\text{ if } t < U_x + m, \\
		  \ind\{a=1\}   &\text{ if } t \geq U_x + m.   
	\end{cases}
\end{align*}
Similarly to Example~\ref{ex:surgery}, these treatment strategies involve a grace period, here defined as a time window of $m$ months. In contrast to previous examples, individuals may remain compatible with multiple treatment strategies throughout the entire follow-up. Indeed, an individual whose CD4 count decreases from 273 to 198 and who initiates treatment within $m$ months of falling to 198 will be compatible with all strategies corresponding to thresholds $x \in \{200, 210, \ldots, 270\}$. Such an individual remains cloned across eight strategies over the entire observation period.

\end{ex}

\begin{ex}\textbf{(Dis)continuation strategies.}

\noindent \cite{amiot2024stop} evaluate the optimal duration of immune checkpoint inhibitor (ICI) administration through four emulated trials, each comparing the causal effect of discontinuing versus continuing ICI at a specific time point $t^* \in \{6, 12, 18, 24\}$ (time index varies on months scale). The analysis is on patients who are still receiving treatment and exhibit disease control at time $t^*$. The treatment strategies can be formulated as follows: at the time of disease control $t^*$, decide whether to discontinue or continue treatment. A grace period of $m$ months is introduced to emulate a trial in which participants would be randomized either to discontinue treatment within three months or to continue treatment for at least an additional three months. The set of admissible trajectories is
\[
\Omega_t := \left\{
\overline{a}_t \; \middle| \; 
\overline{a}_{t^*} = \overline 1_{t^*}  \text{ and } \exists s \in \{t^*+1, \dots, t^* + m \} \text{ s.t } \ a_{s } = a_{s+ 1} = a_{s + 2} = 0
\right\}. 
\]
The corresponding natural intervention enforces treatment continuation up to time $t^*$, after which the strategy allows a grace period of $m$ months to initiate a 3-month \textit{discontinuation}. 
Letting 
$$
U := \inf\{s \geq t^* \mid A_{s}=A_{s+1}=A_{s+2}=0\},
$$
the natural intervention reads
\begin{align*}
	q_{\Omega,t}^{\rm nat}(a \mid H_t)  =  
	\begin{cases}
    \ind\{a=1\}   &\text{ if } t \leq t*, \\
	\bbP(A_t=a\mid H_t, Y_t=C_t=0)	&\text{ if } t \leq t^* + m - 3,\\
    \bbP(A_t=a\mid H_t, Y_t=C_t=0)	&\text{ if } t \geq t^* + m - 2 \text{ and } U \leq t^*+m-3,
    \\
    \ind\{a=0\}   &\text{ otherwise}. 
	\end{cases}
\end{align*}


\end{ex}

\section{Appendix to Section~\ref{sec:consistency}} \label{app:consistency}

\begin{proof}[Proof of Theorem~\ref{thm:consistency}]
If $\bbP(Y_{s-1}(q)=0)=0$, then $N_t^q$ is zero almost-surely and $\hat S^q_{\rm CCW}(t) = 0$ by convention, so that there is nothing to prove. Else, we write
$$
S^{q}(t)
= \prod_{s \leq t} \bbP(Y_s(q)=0 \mid Y_{s-1}(q)=0)
= \prod_{s \leq t}
\left(
1 - \frac{\bbP(Y_s(q)=1, Y_{s-1}(q)=0)}{\bbP(Y_{s-1}(q)=0)}
\right).
$$
By the law of large number, it is thus enough to show that
$$   
\bbE[W^q_{t-1} R_t]
= \bbP(Y_{t-1}(q)=0)
\quad \text{and} \quad 
\bbE[W^q_{t-1} R_t E_t]
= \bbP(Y_t(q)=1, Y_{t-1}(q)=0).
$$
Applying Lem~\ref{lem:identification} for $\psi(Y_t) = 1$ and $\psi(Y_t) = \ind\{Y_t=1\}$ ends the proof.
\end{proof}
\begin{lem}\label{lem:identification}
For any bounded measurable function $\psi$, it holds 
\[
\bbE[W_{t-1}^q R_t \psi(Y_t)] = \bbE[\ind\{Y_{t-1}(q)=0\} \psi(Y_t(q))].
\]
\end{lem}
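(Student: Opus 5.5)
I will prove the identity by induction on $t$, using the standard importance-sampling/g-formula argument. The idea is that both sides can be written as the same iterated integral (a g-formula) over the joint law of $(X_0,A_0,C_1,X_1,A_1,\dots)$, where the observed treatment kernel $p^A_s$ and censoring kernel $p^C_{s+1}$ are replaced by $q_s$ and by ``no censoring''. It is convenient to prove the more general statement
\[
\bbE\big[W_{t-1}^q R_t\, \phi(H_t, Y_t)\big] = \bbE\big[\ind\{Y_{t-1}(q)=0\}\,\phi(H_t(q),Y_t(q))\big]
\]
for bounded measurable $\phi$ of the history $H_t = (\overline X_t,\overline A_{t-1})$. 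Here $H_t(q)$ is the counterfactual history under $q$, with $\overline A_{t-1}(q)$ generated sequentially from the kernels $q_s(\cdot\mid H_s(q))$. The lemma is the special case $\phi = \psi(Y_t)$, recalling that $Y_t=\ind\{X_t=\aleph\}$ is $X_t$-measurable.

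\textbf{Base case $t=0$.} Here $W_{-1}=R_0=1$ and $H_0=X_0=X_0(q)$, so the claim is trivial.

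\textbf{Inductive step $t-1\to t$.} I would condition successively on the last time step. First, I would write
\[
W_{t-1}^q R_t = W_{t-2}^q R_{t-1}\,\ind\{Y_{t-1}=0\}\,\frac{q_{t-1}(A_{t-1}\mid H_{t-1})}{p^A_{t-1}(A_{t-1}\mid H_{t-1})}\,\frac{\ind\{C_t=0\}}{p^C_t(A_{t-1},H_{t-1})}.
\]
Conditioning on $(H_{t-1},A_{t-1})$ and the at-risk event, the censoring factor $\ind\{C_t=0\}/p^C_t$ is handled as follows. Assumption~\ref{ass:censoring} combined with consistency (Assumption~\ref{ass:consistency}) gives that $X_t$ is conditionally independent of $C_t$, so the expectation of $\phi(H_{t-1},A_{t-1},X_t)\ind\{C_t=0\}/p^C_t$ equals the expectation of $\phi$ under the uncensored conditional law of $X_t$. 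Assumption~\ref{ass:positivity_censoring} ensures $p^C_t>\ve$ wherever $q_{t-1}>0$, so the division is legitimate. Next, summing over $a$ with the ratio $q_{t-1}(a\mid H_{t-1})/p^A_{t-1}(a\mid H_{t-1})$ converts the observed treatment distribution into $q_{t-1}(\cdot\mid H_{t-1})$. This step is justified by Assumption~\ref{ass:positivity_treatment}, since the support of $q$ is included in that of $p^A$. Then Assumption~\ref{ass:ignorability} with consistency lets me replace the conditional law of $X_t$ given $(H_{t-1},A_{t-1}=a)$, at risk, by the law of $X_t(\overline a)$ given $H_{t-1}$ along the counterfactual path.

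After these reductions, I obtain a bounded measurable function $\tilde\phi(H_{t-1})$ of the form
\[
\tilde\phi(h)=\ind\{y_{t-1}=0\}\sum_a q_{t-1}(a\mid h)\,\bbE[\phi(h,a,X_t(\overline a))\mid \cdot\,].
\]
The induction hypothesis at time $t-1$ then applies to $\tilde\phi$. Reassembling gives exactly the sequential generative description of $(H_t(q),Y_t(q))$ under the intervention $q$. On the event $Y_{t-1}=1$, both sides vanish because of the $\ind\{Y_{t-1}=0\}$ factor, together with the dummy-value conventions.

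\textbf{Main obstacle.} The delicate step is the ignorability/consistency manipulation: it must be made rigorous that the conditional law of the counterfactual $X_t(q)$ given the counterfactual past $H_{t-1}(q)$ coincides with the observed conditional law given the observed past, at risk and uncensored. Assumptions~\ref{ass:ignorability}--\ref{ass:censoring} are stated for deterministic trajectories $\overline a$. To transfer them to the stochastic $q$, I would first prove the deterministic-regime g-formula for each fixed $\overline a$ by the same induction. I would then integrate over $\overline a$ against $\prod_s q_s(a_s\mid h_s)$, defining $Y_t(q)$ through independent randomization of the kernels, which is what the paper's notion of $Y_t(q)$ implicitly requires.
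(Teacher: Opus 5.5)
Your argument is sound in outline, but it follows a different route from the paper. The paper fixes $t$ and runs a \emph{backward} induction on $s$. It shows that the counterfactual iterated expectation $u_s(h_s)=\bbE[\ind\{Y_{t-1}(q)=0\}\psi(Y_t(q))\mid H_s(q)=h_s]$ equals the weighted observed one $v_s(h_s)$, and concludes because $X_0(q)$ and $X_0$ have the same law. You instead run a \emph{forward} induction on the horizon $t$. This forces you to strengthen the claim to arbitrary bounded $\phi(H_t)$, since peeling off the last step produces a function of the whole history. Your decomposition of $W^q_{t-1}R_t$ and your handling of the censoring and treatment ratios via Assumptions~\ref{ass:censoring}--\ref{ass:positivity_censoring} are the same local computations the paper performs. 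Your version buys two things. First, the weighted observed law identifies the whole law of $H_t(q)$ at once. Second, your mixing construction, which randomizes the kernels $q_s$ independently of the counterfactuals, makes explicit how potential outcomes under a stochastic $q$ relate to the deterministic-regime variables that Assumptions~\ref{ass:consistency}--\ref{ass:censoring} actually constrain; the paper's proof uses this tacitly. The paper's backward form is an iterated-conditional-expectation argument at a fixed functional, which avoids generalizing the test function.

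Both proofs hinge on the same fact. For $q_{t-1}(a\mid h)>0$, the law of $X_t(q)$ given $(H_{t-1}(q),A_{t-1}(q))=(h,a)$ must equal the observed law of $X_t$ given $(H_{t-1},A_{t-1})=(h,a)$, at risk and uncensored. The paper asserts this in one line from Assumptions~\ref{ass:consistency}--\ref{ass:ignorability}, and you rightly isolate it.

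However, your remedy of proving the deterministic-regime g-formula ``by the same induction'' does not close as stated. A hypothesis about functions of $\overline X_{t-1}(\overline a)$ only fixes the law of the counterfactual past, not its joint law with $X_t(\overline a)$, so the same kernel step reappears. Instead, fix $t$ and induct on the time $s$ at which the factor $\ind\{A_s=a_s,C_{s+1}=0\}/\{p^A_s(a_s\mid H_s)\,p^C_{s+1}(a_s,H_s)\}$ is inserted. Keep the test function a function of the whole counterfactual path $\overline X_t(\overline a)$, whose part after $s$ is unobserved; inserting $\ind\{Y_s=0\}$ is free by consistency. Each step is then exactly one use of Assumption~\ref{ass:ignorability} and one of Assumption~\ref{ass:censoring}, both of which concern the entire future process $\underline X_{s+1}(\overline a)$. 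This gives
\[
\bbE\bigl[\ind\{Y_{t-1}(\overline a)=0\}\,g(\overline X_t(\overline a))\bigr]
=\bbE\Bigl[\frac{\ind\{\overline A_{t-1}=\overline a_{t-1}\}\,R_t\,g(\overline X_t)}{\prod_{s<t}p^A_s(a_s\mid H_s)\,p^C_{s+1}(a_s,H_s)}\Bigr]
\]
for every bounded $g$ vanishing where $\prod_{s<t}q_s(a_s\mid\cdot)=0$, so positivity applies. Take $g=\phi(\cdot,\overline a_{t-1})\prod_{s<t}q_s(a_s\mid\cdot)$ and sum over $\overline a_{t-1}$. The right side becomes exactly $\bbE[W^q_{t-1}R_t\phi(H_t)]$, and by your independent randomization the left side becomes $\bbE[\ind\{Y_{t-1}(q)=0\}\phi(H_t(q))]$. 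Your outer induction over $t$ then becomes unnecessary.
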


\begin{proof} We define
    \begin{align*}
u_s(h_{s}) &:= \bbE\left[ \ind\{Y_{t-1}(q)=0\} \psi(Y_t(q)) \mid H_s(q) = h_s\right], \quad \text{and} \\
v_s(h_s) &:=  \bbE\left[\prod_{u=s}^{t-1} \frac{q_u(A_u\mid H_u)}{p_u^A(A_u\mid H_u)} \frac{\ind\{C_{u+1}=0\}}{p_{u+1}^C(H_u,A_u)} \ind\{Y_{t-1}=0\} \psi(Y_t) \mid H_s = h_s, C_s=0\right].
\end{align*}
We show the following by reverse induction on $s \leq t$:
\begin{equation}
    u_s(h_s) = v_s(h_s) \quad \text{for $h_s$ in  the support of $H_s(q)$.}
    \label{eq:rec}
\end{equation}
Fot $s=t$, we have, since $y_s = \ind\{x_s = \aleph\}$, 
$$
v_t(h_t)= \bbE\left[\ind\{Y_{t-1}=0\} \psi(Y_t) \mid H_t = h_t, C_t=0\right] = \ind\{y_{t-1}=0\} \psi(y_t). 
$$
Likewise,
$$
u_t(h_{t}) = \bbE\left[ \ind\{Y_{t-1}(q)=0\} \psi(Y_t(q)) \mid H_t(q) = h_t\right] = \ind\{y_{t-1}=0\} \psi(y_t).
$$
Now assume that Eq~\eqref{eq:rec} holds for some $s+1 \leq t$, and we show that it holds for $s$. Take $h_s$ in the support of $H_s(q)$. If $x_s = \aleph$, then both $v_s(h_s)$ and $u_s(h_s)$ are $0$ and there is nothing to prove. Otherwise, the event $H_s=h_s$ contains $Y_s=0$ and we can freely condition with the latter. 
It holds, by the induction hypothesis:
\begin{align*}
v_s(h_s) &=  \bbE\left[\frac{q_s(A_s\mid h_s)}{p_s^A(A_s\mid h_s)} \frac{\ind\{C_{s+1}=0\}}{p_{s+1}^C(h_s,A_s)} u_{s+1}(H_{s+1}) ~\middle|~ H_s = h_s, Y_s=C_s=0\right] \\
&= 
\sum_{a} q_s(a\mid h_s) \bbE\left[\frac{\ind\{C_{s+1}=0\}}{p_{s+1}^C(h_s,a)} u_{s+1}(H_{s+1}) ~\middle|~ A_s = a, H_s = h_s, Y_s=C_s=0\right],
\end{align*}
where we used that $p_s^A(a\mid h_s) = \bbP(A_s=a \mid H_s= h_s, Y_s=C_s=0)$. By Assumption~\ref{ass:censoring}, $C_{s+1}$ conditional to $A_s=a,H_s=h_s,Y_s=C_s=0$ is independent of $X_{s+1}$ hence
\begin{align*}
    \bbE&\left[\frac{\ind\{C_{s+1}=0\}}{p_{s+1}^C(h_s,a)} u_{s+1}(H_{s+1}) ~\middle|~ A_s = a, H_s = h_s, Y_s=C_s=0\right] \\
    &\quad\quad= \bbE\left[u_{s+1}(H_{s+1}) ~\middle|~ A_s = a, H_s = h_s, Y_s=C_s=0\right].
\end{align*}
By Assumptions~\ref{ass:consistency} and \ref{ass:ignorability}, for any $a$ such that $q_s(a \mid h_s)>0$, it holds
$$
\bbE\left[u_{s+1}(H_{s+1}) ~\middle|~ A_s = a, H_s = h_s, Y_s=C_s=0\right] = \bbE\left[u_{s+1}(H_{s+1}(q)) ~\middle|~ A_s(q) = a, H_s(q) = h_s, Y_s(q)=C_s(q)=0\right].
$$
Again, one can remove $Y_s(q)=0$ because of the conditioning $H_s(q)=h_s$, and we then can remove $C_s(q)=0$ by  Assumption~\ref{ass:censoring}. In the end, we find
\begin{align*}
v_s(h_s) = \sum_a q_s(a\mid h_s) \bbE\left[u_{s+1}(H_{s+1}(q)) ~\middle|~ A_s(q) = a, H_s(q) = h_s\right] = u_s(h_s),
\end{align*}
which ends the induction. Now, since $X_0(q)$ and $X_0$ have the same law, we find that
\begin{align*}
\bbE[\ind\{Y_{t-1}(q)=0\}\psi(Y_t(q))] &= \bbE[u_0(X_0(q))] \\
&= \bbE[v_0(X_0)] \\
&= \bbE\left[\prod_{u=0}^{t-1} \frac{q_u(A_u\mid H_u)}{p_u^A(A_u\mid H_u)} \frac{\ind\{C_{u+1}=0\}}{p_{u+1}^C(H_u,A_u)} \ind\{Y_{t-1}=0\} \psi(Y_t)\right] \\
&= \bbE[W_{t-1}^q R_t \psi(Y_t)],
\end{align*}
which ends the proof.

\end{proof}

\section{Appendix to Section~\ref{sec:bootstrap}} \label{app:bootstrap}

In this appendix, we provide precise technical conditions ensuring the validity of the bootstrap of a $Z$-estimator approach to CCW. We do not give the most general conditions but try to strike a balance between generality and readability. 

We first observe that if the hazards and the CCW weights admits estimators that are asymptotically normal with weakly consistent bootstrap versions, then the corresponding results for the difference in survival probabilities at a given time and for the RMST follow from the $\Delta$-method. 
Second, we note that the hazards satisfy estimating equations and that the weights are functions of certain conditional probability distributions. 
Therefore, if these conditional probability distributions are estimated using a Z-estimator, the full set of parameters can be estimated using a stacked Z-estimator. Then classical results on Z-estimators implie the weak bootstrap consistency of the hazard and conditional probability distributions estimators. the corresponding properties for the contrasts is, as explained above, obtained via the $\Delta$-method.

\subsection{Contrasts and hazards}
We first show that the hazards and the weights can be estimated through a Z-estimation approach under suitable assumptions.

We fix two set $\Omega_0$ and $\Omega_1$ of admissible trajectories and consider the natural interventions $q^{{\rm nat}}_{\Omega_0}$ and $q^{{\rm nat}}_{\Omega_1}$ respectively associated with $\Omega_0$ and $\Omega_1$. For legibility, we simply write with a superscript $j$ the quantities associated with $q^{{\rm nat}}_{\Omega_j}$. We set 
 \[
 h^j_s=\bbP(Y^j_s=1 \mid Y_{s-1}^j=0)= \frac{\bbP(Y_s^j=1, Y_{s-1}^j=0)}{\bbP(Y_{s-1}^j=0)}
 \]
the discrete hazard under regime  $q^{j}$, $j=0,1$. We set  
\[
h^j=(h^j_1,\dots,h^j_\tau)^\transpose\in\mathbb R^{\tau}.
\]
Note that this is equivalent as seeing the hazard as a function on the finite time axis $1,\ldots,\tau$ defined by
\[
h^j(t)=\sum_{s=1}^\tau h^j_s \ind\{t=s\}.
\]
The probability of survival at time $t$ under regime $j$ is
\[
S^{j}(t)=\prod_{s=1}^{t}(1-h^j_s),
\]
and the difference of risk of survival at time $t$ is
\[
\Delta(t)=S^{1}(t)-S^{0}(t).
\]
We define the restricted mean survival time (RMST) up to time $t$ under regime $j$ by
\[
\mu^{j}(t)=\sum_{s=1}^{t} S^{j}(s),
\qquad j\in\{0,1\},
\]
The difference of RMST  up to time $t$ is given by
\[
\Delta_{\mathrm{RMST}}(t)=\mu^{1}(t)-\mu^{0}(t).
\]

\subsection{Estimating equation}\label{subsec:estimating_equation}

It follows from Lemma \ref{lem:identification} applied once with $\psi(Y_s) = 1$ and $\psi(Y_s)=\ind\{Y_s=1\}$ that 
\[
h^j_s= \frac{\bbE[W^j_{s-1} R_s E_s]}{\bbE[W^j_{s-1} R_s]}.
\]
which yields
\begin{equation*}
\bbE[W^j_{s-1} R_s (E_s-h_s^j)]=0. 
\end{equation*}

Hence to estimate $h^j$, it is sufficient to be able to estimate the $W_s^j$ for all $1 \leq s \leq \tau$. In view of Corollary \ref{cor:natural_weights}, this in turn  reduces to estimate the quantities $p^{\Omega_j}_s(X_s, H_{s-1})$ and $p^{C}_{s+1}(H_{s})$. 

We assume that these conditional probabilities belong to finite-dimensional parametric families
\[
p_s^{\Omega_j}(H_{s})=p_s^{\Omega_j}(H_{s};\eta^{\Omega_j}),
\]
and
\[
p_{s+1}^{C}(A_s,H_s)=p_{s+1}^{C}(A_s, H_s;\eta^{C}),
\]
where
\[
\eta^{\Omega_j}\in \Theta^{\Omega_j}\subset \R^{d_{\Omega_j}},
\qquad
\eta^{C}\in \Theta^{C}\subset \R^{d_{C}}.
\]
Define the stacked nuisance parameter
\[
\eta=\Big(\eta^{\Omega_0},\eta^{\Omega_1},\eta^{C}\Big) \in \Theta^{\Omega_0}  \times \Theta^{\Omega_1} \times  \Theta^{C}
\]
which collects all parameters indexing the artificial and natural censoring models.

We assume that the true nuisance parameter $\eta_0$ is characterized as the unique solution of the stacked estimating equation
\[
\mathbb E\big[\psi^{\rm nuis}(O,\eta_0)\big] = 0,
\]
where
\[
\psi^{\rm nuis}(O;\eta)
=
\Big(
\psi^{\Omega_0}(O;\eta^{\Omega_0}),\psi^{\Omega_1}(O;\eta^{\Omega_1}),
\psi^{C}(O;\eta^{C})
\Big).
\]
We define the full parameter vector
\[
\theta=(\eta,h^0,h^1)\in\Theta\subset \R^d \; \textnormal{with} \; d=d_{\Omega_0}+d_{\Omega_1}+d_{C}+2\tau,
\]
where 
\[
h^j=(h^j_1,\dots,h^j_{\tau})^\transpose\in \R^\tau,\qquad j=0,1.
\]
are the hazards associated to the set of admissible trajectories $\Omega^j$ and $\eta$ is the stacked nuisance parameter.
Let $\psi(O;\theta)\in\mathbb R^d$ denote the stacked estimating function,
\begin{equation}\label{eq:critical_curve}
\psi(O;\theta)=
\begin{pmatrix}
\psi^{\rm nuis}(O;\eta)\\
\psi^0_{1}(O;\eta,h^0_1)\\
\vdots\\
\psi^0_{\tau}(O;\eta,h^0_{\tau})\\
\psi^1_{1}(O;\eta,h^1_{1})\\
\vdots\\
\psi^1_{\tau}(O;\eta,h^1_{\tau})
\end{pmatrix},
\end{equation}
where
\[
\psi_{j,s}(O;\eta,h^j_s)
=
W^j_{s-1}(\eta)\,R_{s}^j\,(E_{s}-h^j_s),
\qquad j=0,1,\;s=1,\dots,\tau.
\]

It follows that the true parameter $\theta_0$ is characterized as the unique solution of the stacked estimating equation
\[
\Psi_\bbP(\theta)=\bbE(\psi(O;\theta))=0
\]
Let $\bbP_n$ denote the empirical measure, $\bbE_{\bbP_n} f = \frac1n\sum_{i=1}^n f(O_i)$. The empirical counterpart of $\Psi_\bbP$ is
\[
\Psi_n(\theta):=\bbE_{\bbP_n} \psi(O;\theta).
\]
Let $\hat\theta_n$ be an estimator obtained by approximately solving the stacked estimating equation
\[
\Psi_n(\theta)=0.
\]
The required approximation rate is stated in Assumption \ref{ass:root}. Writing
\[
\hat\theta_n=(\hat\eta_n,\hat h_n^0,\hat h_n^1),
\]
the estimator $\hat\eta_n$ is the Z-estimator of the nuisance parameter, and
\[
\hat h_n^j=(\hat h_{n,1}^j,\dots,\hat h_{n,\tau}^j)^\top,\qquad j=0,1,
\]
is the corresponding estimator of the hazard vector. 

We now introduce 
\begin{equation*}
g_{\rm RD}(t,\theta)=\prod_{s=1}^{t}(1-h^1_s)-\prod_{k=1}^{t}(1-h^0_s)\quad \textnormal{and} \quad g_{\mathrm{RMST}}(t,\theta)
=
\sum_{k=0}^{t}\prod_{s=1}^{k}(1-h^1_s)
-
\sum_{k=0}^{t}\prod_{s=1}^{k}(1-h^0_s),
\end{equation*}

We obtain respectively estimator of  $\Delta(t)$ and $\Delta_{\mathrm{RMST}}(t)$ by setting

\begin{equation*}
\hat{\Delta}(t)=g_{\mathrm{RD}}(t,\hat{\theta})\quad \textnormal{and} \quad \hat{\Delta}_{\mathrm{RMST}}(t)=g_{\mathrm{RMST}}(t,\hat{\theta}).
\end{equation*}

\subsection{Assumption} \label{app:bootstrap_assumption}

For legibility, we shorten $\psi(O;\theta)$ in $\psi_\theta$.

\paragraph{Assumptions for consistency and asymptotic normality.}
Let $\psi_\theta(O)$ denote the stacked estimating function defined by equation \eqref{eq:critical_curve}, and recall the following notations
\[
\Psi_\bbP(\theta)=\bbE_\bbP(\psi_\theta),
\qquad
\Psi_n(\theta)=\bbE_{\mathbb P_n}(\psi_\theta).
\]

\begin{enumerate}[label=(Z\arabic*)]
\item \label{ass:Zidentifiability} \textbf{Identification.}
There exists $\theta_0\in \mathring\Theta$ such that $\Psi_\bbP(\theta_0)=0$, and for any sequence
$\theta_n\in\Theta$,
\[
\Psi_\bbP(\theta_n)\to 0
\quad\Longrightarrow\quad
\theta_n\to\theta_0.
\]

\item\label{ass:GC} \textbf{Global Glivenko-Cantelli property.}
The class
$
\{\psi_\theta:\theta\in\Theta\}
$
is $\bbP$-Glivenko-Cantelli:
\[
\sup_{\theta\in\Theta}\|\Psi_n(\theta)-\Psi_\bbP(\theta)\|\to 0 \quad \text{almost-surely}.
\]


\item\label{ass:root} \textbf{Approximate empirical root.}
The estimator $\hat\theta_n$ satisfies
\[
\Psi_n(\hat\theta_n)=o_{\bbP}(n^{-1/2}).
\]

\item \label{ass:derivatice} \textbf{Local smoothness and square-integrable derivative envelope.}
There exists a neighborhood $N(\theta_0)\subset\Theta$ such that, for $\bbP$-almost every $o \in \mathcal{O}$,
the map $\theta\mapsto\psi_\theta(o)$ is continuously differentiable on $N(\theta_0)$,
and there exists a measurable function $F(o)$ with
\[
\sup_{\theta\in N(\theta_0)}\|\dot\psi_\theta(o)\|\le F(o)
\quad \bbP-\text{a.s.},
\qquad
\bbE_\bbP F^2<\infty.
\]

\item \label{ass:square_in} \textbf{Square integrability at the truth.}
\[
\bbE_\bbP\|\psi_{\theta_0}\|^2<\infty.
\]

\item \label{ass:non_sing} \textbf{Differentiability and nonsingularity of the population Jacobian.}
The map $\Psi_\bbP$ is differentiable at $\theta_0$, with derivative matrix
\[
A=\dot\Psi_\bbP(\theta_0)=\bbE_\bbP \left( \dot\psi_{\theta_0} \right),
\]
and $A$ is nonsingular.
\end{enumerate}

\begin{rem}
    In practical setting, the weights are often estimated via pooled logistic regression. the assumption $(Z1)$ is a classical identifiability assumption (see \cite{vanderVaart1998}). In the case of pooled logistic regression, it can be obtained by assuming the population log-likelihood to be concave. In this setting, the assumption $(Z2)$ can be obtained by assuming that the parameter space is compact and the covariates are bounded. The aim of $(Z1)$ is to ensure the identifiability of the parameter and $(Z2)$ allows to prove the consistency of the estimator and its bootstraped version. If these results can be proved directly these assumptions can be bypassed. 
\end{rem}

\subsection{Consequence}

\begin{lem}\label{lem:Donsker}
Suppose there exists a neighborhood $N(\theta_0)\subset\Theta$ such that, for $\bbP$-almost every $o \in \mathcal{O}$,
the map
\[
\theta\mapsto \psi_\theta(o)
\]
is continuously differentiable on $N(\theta_0)$, that $\bbE_{\bbP} \|\psi_{\theta_0}\|^2 < \infty$, and that there exists a measurable function
$F$ satisfying
\[
\sup_{\theta\in N(\theta_0)}\|\dot\psi_\theta(o)\|
\le F(o)
\qquad \bbP-\text{a.s.},
\qquad
\bbE F^2<\infty.
\]
Then there exists $\eta>0$ such that:

\begin{enumerate}
\item[(i)] the class
\[
\mathcal F_\eta
=
\{\psi_\theta:\|\theta-\theta_0\|\le \eta\}
\]
is $\bbP$-Donsker;

\item[(ii)] 
\[
\bbE\|\psi_\theta-\psi_{\theta_0}\|^2\to 0
\qquad\text{as }\theta\to\theta_0.
\]
\end{enumerate}
\end{lem}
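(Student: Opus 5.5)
The plan is to show that the class $\mathcal F_\eta$ is Lipschitz in the parameter, with a square-integrable Lipschitz constant. Then (i) follows from the classical bracketing result for parametric Lipschitz classes (van der Vaart 1998, Example 19.7), and (ii) from the same Lipschitz bound directly.

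First, choose $\eta>0$ small enough that the closed ball $B_\eta := \{\theta : \|\theta-\theta_0\|\le\eta\}$ is contained in $N(\theta_0)$. This is possible because $\theta_0\in\mathring\Theta$ and, shrinking $N(\theta_0)$ if needed, we may take it to be an open set around $\theta_0$. Since $B_\eta$ is convex and $\theta\mapsto\psi_\theta(o)$ is $C^1$ on it for $\bbP$-a.e. $o$, the mean value inequality applied along the segment $[\theta_1,\theta_2]\subset B_\eta$ gives
\[
\|\psi_{\theta_1}(o)-\psi_{\theta_2}(o)\| \le \sup_{\theta\in B_\eta}\|\dot\psi_\theta(o)\|\,\|\theta_1-\theta_2\| \le F(o)\,\|\theta_1-\theta_2\|
\]
for almost every $o$. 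Applying this with $\theta_2=\theta_0$ also yields an envelope for the class: $\|\psi_\theta\|\le\|\psi_{\theta_0}\|+\eta F$. This envelope is square-integrable because $\bbE\|\psi_{\theta_0}\|^2<\infty$ and $\bbE F^2<\infty$.

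For (i), since $\psi_\theta$ is $\R^d$-valued, we treat each coordinate $\psi_\theta^{(k)}$ separately; each satisfies the same Lipschitz bound with constant $F$. Cover $B_\eta\subset\R^d$ by $O((\eta/\varepsilon)^d)$ balls of radius $\varepsilon$ centered at points $\theta_l$. The brackets $[\psi^{(k)}_{\theta_l}-\varepsilon F,\ \psi^{(k)}_{\theta_l}+\varepsilon F]$ have $L_2(\bbP)$-size $2\varepsilon\|F\|_{L_2}$. Hence $N_{[\,]}(\varepsilon\|F\|_{L_2},\mathcal F^{(k)}_\eta,L_2(\bbP))\lesssim (\eta/\varepsilon)^d$, and the bracketing entropy integral $\int_0^1\sqrt{d\log(1/\varepsilon)}\,d\varepsilon$ is finite. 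The bracketing CLT (van der Vaart 1998, Theorem 19.5) then makes each coordinate class Donsker. A vector of Donsker classes is Donsker, so $\mathcal F_\eta$ is $\bbP$-Donsker.

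For (ii), squaring the Lipschitz bound and taking expectations gives $\bbE\|\psi_\theta-\psi_{\theta_0}\|^2\le \bbE F^2\,\|\theta-\theta_0\|^2\to0$.

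The main obstacle is the measurability and almost-sure caveat in the mean value step. The null set on which differentiability fails must not depend on $\theta$. This holds because the hypothesis is stated as "for a.e. $o$, the map is $C^1$ on $N(\theta_0)$", so a single null set works for all $\theta$. One also needs $N(\theta_0)$ to contain a convex ball, which I would secure by the shrinking argument above.
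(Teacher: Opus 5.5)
Your proposal is correct and follows the paper's proof. Both pick a ball inside $N(\theta_0)$ and apply the mean value inequality to get the Lipschitz bound with square-integrable constant $F$ and envelope $\|\psi_{\theta_0}\|+\eta F$; both then get (i) from van der Vaart's Example 19.7, whose bracketing content you unpack via Theorem 19.5, and (ii) by squaring the Lipschitz bound. Your closed ball is slightly more careful than the paper, which takes an open $B(\theta_0,\eta)$ but defines $\mathcal F_\eta$ on the closed one, and you do not need $\theta_0\in\mathring\Theta$ since $N(\theta_0)$ being a neighborhood already supplies the ball.
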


\begin{proof}
Since $N(\theta_0)$ is a neighborhood of $\theta_0$, it contains a ball $B(\theta_0,\eta)$ of radius $\eta>0$ centered in $\theta_0$

\noindent (i) Let $\theta_1,\theta_2$ in $B(\theta_0,\eta)$. Since
$\theta\mapsto\psi_\theta(o)$ is continuously differentiable on $B(\theta_0,\eta)$ for $\bbP$-almost
every $o \in \mathcal{O}$, the mean value inequality yields
\[
\|\psi_{\theta_1}(o)-\psi_{\theta_2}(o)\|
\le
\sup_{\theta\in B(\theta_0,\eta)}\|\dot\psi_\theta(o)\|\,\|\theta_1-\theta_2\|
\le
F(o)\,\|\theta_1-\theta_2\|
\]
almost surely. In particular, on $B(\theta_0,\eta)$, we have almost surely
\begin{equation}\label{ineq:loclip}
\|\psi_\theta(o)-\psi_{\theta_0}(o)\|
\le
F(o)\,\|\theta-\theta_0\|.
\end{equation}
Hence, the class
\[
\mathcal F_\eta
=
\{\psi_\theta:\|\theta-\theta_0\|\le \eta\}
\]
is a finite-dimensional parametric class that is locally Lipschitz in $\theta$ with
$L^2(\bbP)$ envelope $\eta F+ \|\psi_{\theta_0}\|$. Hence, by \cite[Example 19.7]{vanderVaart1998},
$\mathcal F_\eta$ is $\bbP$-Donsker.\\

\noindent (ii) Squaring Inequality \eqref{ineq:loclip} yields
\[
\bbE_\bbP\|\psi_\theta-\psi_{\theta_0}\|^2
\le
\|\theta-\theta_0\|^2\,\bbE_\bbP F^2.
\]
Since $\bbE_\bbP F^2<\infty$, it follows that
\[
\bbE_\bbP\|\psi_\theta-\psi_{\theta_0}\|^2\to 0
\qquad\text{as }\theta\to\theta_0.
\]
\end{proof}

\subsection{Bootstrap weak validity}

\subsubsection*{Bootstrap weak validity for the stacked Z-estimator}

We assume that we carry out the empirical bootstrap as follows

\begin{enumerate}[label=(B\arabic*)]
\item \label{ass:bootstrap_sampling}
Conditional on $\mathbb{O}_n$, the bootstrap sample $\mathbb{O}_n^\ast=(  O_1^*,\dots, O_n^*)$ is drawn by resampling baseline individuals with replacement.

\item \label{ass:bootstrap_procedure}
In each bootstrap sample, one recomputes the admissibility indicators, artificial censoring times, nuisance estimator, weighted hazards, and weighted Kaplan--Meier estimators by applying exactly the same stacked estimating-equation procedure to $\mathbb P_n^*$ where $\mathbb P_n^*$ is the bootstrap empirical distribution.

\item \label{ass:bootstrap_precision} Writing
\[
\Psi_n^*(\theta):=\bbE_{\mathbb P_n^*}\psi_\theta,
\]
the bootstrap estimator satisfies
\[
\Psi_n^*(\hat\theta_n^*)=o_{\mathbb P^*}(n^{-1/2}).
\]
\end{enumerate}

\begin{prp}[Bootstrap weak validity of the stacked estimator]\label{prop:Z_weakboot}
Under assumptions, \ref{ass:Zidentifiability}--\ref{ass:non_sing} and \ref{ass:bootstrap_sampling}--\ref{ass:bootstrap_precision}
\begin{enumerate}
\item 
\begin{equation}\label{eq:normstd}
\sqrt n(\hat\theta-\theta_0)\xrightarrow[n\to\infty]{\cL} \cN\left(0,\,A^{-1}\bbE_\bbP[\psi_{\theta_0}\psi_{\theta_0}^{\transpose}]A^{-\transpose}\right),
\end{equation}

\item the bootstrap estimator $\hat \theta^*_n$ is such that, for all bounded continuous function $\vp$,
\begin{equation}\label{eq:normboot}
\bbE[\vp(\sqrt n(\hat\theta^*-\hat\theta)) \mid \bbO_{n}] \xrightarrow[n\to\infty]{\bbP} \bbE[\vp(Z)] \quad \text{where} \quad Z \sim \cN\left(0,\,A^{-1} \bbE_\bbP[\psi_{\theta_0}\psi_{\theta_0}^{\transpose}]A^{-\transpose}\right),
\end{equation}
and where $\bbO_{n}=(O_{1},\dots,O_{n})$.
\end{enumerate}
\end{prp}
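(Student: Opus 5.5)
We follow the classical route for Z-estimators (van der Vaart, 1998, Thm.~5.21 and its bootstrap analogue, Thm.~10.16 in Kosorok's book / Giné--Zinn), in two stages: first consistency, then a linearization around $\theta_0$.

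\textbf{Consistency of $\hat\theta_n$ and $\hat\theta_n^*$.}
By \ref{ass:GC} and \ref{ass:root},
\[
\|\Psi_\bbP(\hat\theta_n)\| \le \sup_\theta\|\Psi_\bbP(\theta)-\Psi_n(\theta)\| + \|\Psi_n(\hat\theta_n)\| \to 0
\]
in probability. Identification \ref{ass:Zidentifiability}, applied along subsequences, then gives $\hat\theta_n\to\theta_0$ in probability.

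For the bootstrap, the Glivenko--Cantelli property transfers to the multinomial bootstrap empirical measure. This is the bootstrap Glivenko--Cantelli theorem of Giné--Zinn type, which needs an integrable envelope; one is available locally from \ref{ass:derivatice}--\ref{ass:square_in}. Hence $\sup_\theta\|\Psi_n^*(\theta)-\Psi_\bbP(\theta)\|\to0$ in outer probability. Combined with \ref{ass:bootstrap_precision}, this gives $\hat\theta_n^*\to\theta_0$ in probability.

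\textbf{Linearization for $\hat\theta_n$.}
By Lemma~\ref{lem:Donsker}, the class $\mathcal F_\eta$ is Donsker and $L^2$-continuous at $\theta_0$. Asymptotic equicontinuity of the empirical process $\mathbb G_n=\sqrt n(\bbP_n-\bbP)$ on $\mathcal F_\eta$, together with consistency, yields
\[
\mathbb G_n\psi_{\hat\theta_n}-\mathbb G_n\psi_{\theta_0}=o_\bbP(1).
\]
Writing $0=\sqrt n\Psi_n(\hat\theta_n)+o_\bbP(1)$ and splitting $\sqrt n\Psi_n(\hat\theta_n)=\mathbb G_n\psi_{\hat\theta_n}+\sqrt n\Psi_\bbP(\hat\theta_n)$, we use the differentiability in \ref{ass:non_sing}, $\Psi_\bbP(\hat\theta_n)=A(\hat\theta_n-\theta_0)+o(\|\hat\theta_n-\theta_0\|)$. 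This gives
\[
\sqrt n(\hat\theta_n-\theta_0)(A+o_\bbP(1)) = -\mathbb G_n\psi_{\theta_0}+o_\bbP(1).
\]
Since $\mathbb G_n\psi_{\theta_0}=O_\bbP(1)$ by \ref{ass:square_in} and $A$ is invertible, $\sqrt n$-consistency follows, and then
\[
\sqrt n(\hat\theta_n-\theta_0)=-A^{-1}\mathbb G_n\psi_{\theta_0}+o_\bbP(1).
\]
The multivariate CLT gives \eqref{eq:normstd}.

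\textbf{Bootstrap linearization.}
Repeat the argument with $\mathbb G_n^*=\sqrt n(\bbP_n^*-\bbP_n)$. The key input is the bootstrap Donsker theorem (van der Vaart--Wellner, Thm.~3.6.1): since $\mathcal F_\eta$ is Donsker with square-integrable envelope, $\mathbb G_n^*$ converges conditionally in distribution, in probability, to the same Brownian bridge. It is also conditionally asymptotically equicontinuous, so
\[
\mathbb G_n^*(\psi_{\hat\theta_n^*}-\psi_{\theta_0})=o_{\bbP}(1)
\]
using $\hat\theta_n^*\to\theta_0$ and Lemma~\ref{lem:Donsker}(ii).

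Now subtract the two expansions:
\[
\sqrt n\Psi_n^*(\hat\theta^*_n) = \mathbb G_n^*\psi_{\hat\theta_n^*} + \mathbb G_n\psi_{\hat\theta_n^*} + \sqrt n\Psi_\bbP(\hat\theta_n^*),
\qquad
\sqrt n\Psi_n(\hat\theta_n) = \mathbb G_n\psi_{\hat\theta_n} + \sqrt n\Psi_\bbP(\hat\theta_n).
\]
Using equicontinuity for both processes and differentiability at $\theta_0$, we get
\[
\sqrt n(\hat\theta_n^*-\hat\theta_n)=-A^{-1}\mathbb G_n^*\psi_{\theta_0}+o_{\bbP}(1)
\]
(unconditionally, hence also in conditional probability). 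The conditional CLT for $\mathbb G_n^*\psi_{\theta_0}$ then gives \eqref{eq:normboot}, phrased through bounded Lipschitz, hence bounded continuous, test functions $\vp$.

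\textbf{Main obstacle.}
The delicate step is the bootstrap linearization. One must carefully mix conditional and unconditional $o_\bbP$ statements, since $o_\bbP(1)$ unconditionally implies $o_{\bbP}(1)$ conditionally in probability by Markov's inequality. One must also check that $\hat\theta_n^*$ lies in $B(\theta_0,\eta)$ with probability tending to one, so that the local Donsker class applies. I would handle this by intersecting with the event $\{\hat\theta_n,\hat\theta_n^*\in B(\theta_0,\eta)\}$, whose probability tends to one by the consistency step.
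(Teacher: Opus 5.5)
Your proposal is correct and is essentially the paper's argument. The paper checks that \ref{ass:Zidentifiability}--\ref{ass:non_sing} and \ref{ass:bootstrap_sampling}--\ref{ass:bootstrap_precision} are the hypotheses of Kosorok's Theorem~10.16, using Lemma~\ref{lem:Donsker} for the local Donsker condition and the interior-point property to localize, and then cites that theorem; you instead unpack its proof (consistency, Donsker-based linearization, bootstrap Donsker theorem). One small fix: the integrability needed for your global bootstrap Glivenko--Cantelli step cannot come from the local envelope in \ref{ass:derivatice}--\ref{ass:square_in}; it comes from \ref{ass:GC} itself, since an almost-surely Glivenko--Cantelli class automatically satisfies $\bbE^*\sup_{\theta\in\Theta}\|\psi_\theta-\Psi_\bbP(\theta)\|<\infty$.
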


\begin{proof}
 Assumptions \ref{ass:Zidentifiability}, \ref{ass:GC}, \ref{ass:root}, \ref{ass:bootstrap_procedure}, \ref{ass:bootstrap_precision}, \ref{ass:square_in}, and \ref{ass:non_sing} imply conditions (A), (B), (D), (E) of \cite[Theorem 10.16]{Kosorok2008}. Condition (C) is verified thanks to assumption \ref{ass:derivatice} and \ref{ass:square_in} together with Lemma \ref{lem:Donsker}. Furthermore, even though $\Theta$ is not assumed open, the fact that $\theta_0$ is in the interior of $\Theta$ and the consistency allows to work on an open subset of $\Theta$. Therefore, Theorem 10.16 of Kosorok applies directly to the stacked Z-estimator and yields
\[
\sqrt n(\hat\theta-\theta_0)\xrightarrow[n\to\infty]{\cL}\cN \left(0,\,A^{-1}\bbE_\bbP[\psi_{\theta_0}\psi_{\theta_0}^{\transpose}]A^{-\transpose}\right).
\]
The same theorem also yields point $2.$
\end{proof}

\begin{cor}[Bootstrap weak validity for the fixed-time survival contrast]
\label{thm:appendix-main}
Under Assumptions \ref{ass:Zidentifiability}--\ref{ass:non_sing}, and \ref{ass:bootstrap_sampling}--\ref{ass:bootstrap_precision}, the following hold.

\begin{enumerate}
\item If
\[
\sigma_\Delta^2=\dot g_{\mathrm{RD}}(\theta_0)
A^{-1}\mathbb E_\bbP[\psi_{\theta_0}\psi_{\theta_0}^{\transpose}]A^{-\transpose}
\dot g_{\mathrm{RD}}(\theta_0)^{\transpose},
\]
then
\[
\sqrt n(\hat\Delta(t)-\Delta(t))\xrightarrow[n\to\infty]{\cL} \cN(0,\sigma_\Delta^2).
\]

\item The bootstrap estimator $\hat \Delta^*(t)$ is such that, for all bounded continuous function $\vp$,
\[
\bbE[\vp(\sqrt n(\hat\Delta^*(t)-\hat\Delta(t))) \mid \bbO_{n}] \xrightarrow[n\to\infty]{\bbP} \bbE[\vp(Z)] \quad \text{where} \quad Z \sim \cN(0,\sigma_{\Delta}^2),
\]
and where $\bbO_{n}=(O_{1},\dots,O_{n})$.
\end{enumerate}
\end{cor}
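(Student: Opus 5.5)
This corollary follows from Proposition~\ref{prop:Z_weakboot} by the delta method, applied first to the estimator and then, in its bootstrap form, to the bootstrap estimator. The contrast is a fixed smooth map of the stacked parameter: $\hat\Delta(t)=g_{\mathrm{RD}}(t,\hat\theta_n)$ and $\Delta(t)=g_{\mathrm{RD}}(t,\theta_0)$. The map $\theta\mapsto g_{\mathrm{RD}}(t,\theta)=\prod_{s\le t}(1-h^1_s)-\prod_{s\le t}(1-h^0_s)$ is a polynomial in the hazard coordinates and does not depend on $\eta$. It is therefore $C^\infty$ on $\R^d$, with gradient $\dot g_{\mathrm{RD}}(\theta_0)$. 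Its nonzero entries are
\[
\partial_{h^j_s} g_{\mathrm{RD}} = \mp\prod_{u\le t,\,u\ne s}(1-h^j_{0,u})
\]
for $s\le t$, with sign $-$ for $j=1$ and $+$ for $j=0$. I will also check that $g_{\mathrm{RD}}(t,\theta_0)$ equals $\Delta(t)$. This holds because the hazards $h^j_0$ in $\theta_0$ are exactly the identified hazards of Subsection~\ref{subsec:estimating_equation}, via Lemma~\ref{lem:identification} and Corollary~\ref{cor:natural_weights}.

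For point 1, I will apply the classical delta method (van der Vaart, Theorem 3.1) to \eqref{eq:normstd}. This gives
\[
\sqrt n\bigl(g_{\mathrm{RD}}(t,\hat\theta_n)-g_{\mathrm{RD}}(t,\theta_0)\bigr)=\dot g_{\mathrm{RD}}(\theta_0)\sqrt n(\hat\theta_n-\theta_0)+o_\bbP(1).
\]
The right-hand side converges in law to $\dot g_{\mathrm{RD}}(\theta_0)Z$, where $Z\sim\cN(0,A^{-1}\bbE_\bbP[\psi_{\theta_0}\psi_{\theta_0}^\transpose]A^{-\transpose})$. This limit is the centered normal with variance $\sigma_\Delta^2$.

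For point 2, I will invoke the delta method for the bootstrap in probability (van der Vaart, Theorem 23.5, or Kosorok, Theorem 12.1). Its hypotheses are:
\begin{enumerate}
\item $\sqrt n(\hat\theta_n-\theta_0)$ converges in law, which is \eqref{eq:normstd};
\item $\sqrt n(\hat\theta_n^*-\hat\theta_n)$ converges conditionally in law to the same limit in probability, which is \eqref{eq:normboot};
\item $g$ is differentiable at $\theta_0$, which holds here in the strong sense since $g$ is $C^1$.
\end{enumerate}
If one prefers a direct argument, I would first show $\hat\theta_n^*-\theta_0=O_\bbP(n^{-1/2})$ unconditionally, by combining the two convergences. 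A first-order Taylor expansion of $g$ around $\theta_0$ at both $\hat\theta^*_n$ and $\hat\theta_n$ then gives
\[
\sqrt n\bigl(g(\hat\theta_n^*)-g(\hat\theta_n)\bigr)=\dot g_{\mathrm{RD}}(\theta_0)\sqrt n(\hat\theta_n^*-\hat\theta_n)+o_\bbP(1).
\]
Finally, $\vp\circ\dot g_{\mathrm{RD}}(\theta_0)$ is bounded continuous whenever $\vp$ is, so \eqref{eq:normboot} transfers to the linear image.

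The main technical point is this last transfer. The remainder is $o_\bbP(1)$ in the joint probability of data and bootstrap weights, and I need it to be negligible conditionally on $\bbO_n$ in probability. I will handle this with the standard lemma: if $R_n^*\to0$ in outer probability jointly, then $\bbE[\min(|R_n^*|,1)\mid\bbO_n]\to0$ in probability by Markov's inequality. I will combine this with the bounded Lipschitz characterisation of conditional weak convergence, which requires reducing from bounded continuous to bounded Lipschitz test functions via the portmanteau argument. The same argument applies verbatim to $g_{\mathrm{RMST}}$, which is also polynomial in $h$, so it also covers the RMST contrast of Theorem~\ref{thm:bootstrap}.
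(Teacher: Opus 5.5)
Your proposal is correct and follows the paper's route: $g_{\mathrm{RD}}$ is smooth and depends only on the hazard coordinates, so the ordinary delta method applied to \eqref{eq:normstd} gives point 1, and the bootstrap delta method applied to \eqref{eq:normboot} gives point 2; your direct Taylor-expansion argument is simply an unpacked proof of the latter. One citation needs correcting: the conditional-in-probability version you need is van der Vaart's Theorem 23.9, which is the one the paper invokes, or Kosorok's Theorem 12.1, whereas van der Vaart's Theorem 23.5 is stated conditionally almost surely and assumes $\hat\theta_n\to\theta_0$ almost surely, which is not available here.
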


\begin{proof}
The map
\[
g_{\mathrm{RD}}(\theta)=\prod_{s=1}^{t}\{1-h_s^1\}-\prod_{s=1}^{t}\{1-h_s^0\}
\]
depends only on the hazard coordinates and is continuously differentiable on a neighborhood of $\theta_0$. 
%
Applying the ordinary delta method to formula \eqref{eq:normstd} give point $1.$
and applying the bootstrap delta method \cite[Theorem 23.9]{vanderVaart1998} to formula \eqref{eq:normboot} gives point $2.$ 
\end{proof}

One can similarly prove the following.

\begin{cor}
\label{thm:appendix-rmst}
Under assumptions \ref{ass:Zidentifiability}--\ref{ass:non_sing}, and \ref{ass:bootstrap_sampling}--\ref{ass:bootstrap_precision}, the following hold for
\[
\Delta_{\mathrm{RMST}}(t)=\mu^{1}(t)-\mu^{0}(t).
\]

\begin{enumerate}
\item  If
\[
\sigma_{\mathrm{RMST}}^2=
\dot g_{\mathrm{RMST}}(\theta_0)
A^{-1}\mathbb E_\bbP[\psi_{\theta_0}\psi_{\theta_0}^{\transpose}]A^{-\transpose}
\dot g_{\mathrm{RMST}}(\theta_0)^{\transpose},
\]
then
\[
\sqrt n\{\hat\Delta_{\mathrm{RMST}}(t)-\Delta_{\mathrm{RMST}}(t)\}
\xrightarrow[n\to\infty]{\cL} \cN(0,\sigma_{\mathrm{RMST}}^2),
\]

\item the bootstrap estimator $\hat \theta^*_n$ is such that, for all bounded continuous function $\vp$,
\[
\bbE[\vp(\sqrt n(\hat\Delta_{\mathrm{RMST}}^*(t)-\hat\Delta_{\mathrm{RMST}}(t))) \mid \bbO_{n}] \xrightarrow[n\to\infty]{\bbP} \bbE[\vp(Z)] \quad \text{where} \quad Z \sim \cN(0,\sigma_{\mathrm{RMST}}^2),
\]
and where $\bbO_{n}=(O_{1},\dots,O_{n})$.
\end{enumerate}
\end{cor}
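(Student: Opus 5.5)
The approach mirrors the proof of Corollary~\ref{thm:appendix-main}, replacing $g_{\mathrm{RD}}$ by $g_{\mathrm{RMST}}$. The RMST contrast is a smooth function of the hazard coordinates of the stacked parameter $\theta=(\eta,h^0,h^1)$. So the whole argument reduces to Proposition~\ref{prop:Z_weakboot} plus the (bootstrap) delta method. Concretely, I write
\[
\hat\Delta_{\mathrm{RMST}}(t)=g_{\mathrm{RMST}}(t,\hat\theta_n), \qquad \hat\Delta^*_{\mathrm{RMST}}(t)=g_{\mathrm{RMST}}(t,\hat\theta^*_n), \qquad \Delta_{\mathrm{RMST}}(t)=g_{\mathrm{RMST}}(t,\theta_0).
\]
The last identity uses the identification of the hazards $h^j_s$ through Lemma~\ref{lem:identification}, as in Subsection~\ref{subsec:estimating_equation}.

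\textbf{Step 1: smoothness of $g_{\mathrm{RMST}}$.} The map $\theta\mapsto g_{\mathrm{RMST}}(t,\theta)$ is a finite sum of finite products of the affine functions $1-h^j_s$. It is therefore a polynomial in the coordinates $h^j_s$, constant in $\eta$, and hence continuously differentiable on all of $\R^d$, in particular near $\theta_0$. Its gradient has zero $\eta$-block. Its $h^j_s$-block is
\[
\mp\sum_{k=s}^{t}\prod_{r\leq k,\, r\neq s}(1-h^j_r),
\]
with sign $-$ for $j=1$ and $+$ for $j=0$. An explicit formula is not needed, only the existence of $\dot g_{\mathrm{RMST}}(\theta_0)$. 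The mismatch between the indices $k$ and $s$ in the definition of $g_{\mathrm{RMST}}$ (the $k=0$ term, an empty product equal to $1$) only contributes a constant that cancels in the difference, so it is harmless.

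\textbf{Step 2: asymptotic normality.} Point 1 of Proposition~\ref{prop:Z_weakboot} gives $\sqrt n(\hat\theta_n-\theta_0)\to\cN(0,A^{-1}\bbE_\bbP[\psi_{\theta_0}\psi_{\theta_0}^{\transpose}]A^{-\transpose})$. Applying the ordinary delta method \cite[Theorem 3.1]{vanderVaart1998} with the differentiable map of Step 1 yields point 1, with the stated variance $\sigma^2_{\mathrm{RMST}}=\dot g_{\mathrm{RMST}}(\theta_0)A^{-1}\bbE_\bbP[\psi_{\theta_0}\psi_{\theta_0}^{\transpose}]A^{-\transpose}\dot g_{\mathrm{RMST}}(\theta_0)^{\transpose}$.

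\textbf{Step 3: bootstrap.} Point 2 follows from the bootstrap delta method \cite[Theorem 23.9]{vanderVaart1998}. Its hypotheses are that $g_{\mathrm{RMST}}$ is differentiable at $\theta_0$ (Step 1), that $\sqrt n(\hat\theta_n-\theta_0)$ converges weakly (Step 2), and that $\sqrt n(\hat\theta^*_n-\hat\theta_n)$ converges conditionally in probability to the same limit, in the bounded-Lipschitz/bounded-continuous sense. The last hypothesis is exactly point 2 of Proposition~\ref{prop:Z_weakboot}.

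\textbf{Main obstacle.} The only delicate point is matching the mode of conditional convergence in \eqref{eq:normboot} (bounded continuous test functions, convergence in probability) with the formulation required by Theorem 23.9. Since the limit is Gaussian in finite dimension, I would pass from bounded continuous to bounded Lipschitz test functions, which is immediate. I would then note that Theorem 23.9 gives the conditional statement for $\sqrt n(g(\hat\theta^*_n)-g(\hat\theta_n))$ in the same sense. Because $g_{\mathrm{RMST}}$ is polynomial, one could alternatively argue directly, via a first-order Taylor expansion with a remainder that is $o_{\bbP}(1)$ conditionally, thanks to the tightness of $\sqrt n(\hat\theta^*_n-\hat\theta_n)$.
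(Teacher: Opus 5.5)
Your proposal is correct and follows essentially the same route as the paper, which proves this corollary exactly like Corollary~\ref{thm:appendix-main}: $g_{\mathrm{RMST}}$ depends only on the hazard coordinates and is smooth, so the ordinary delta method applied to \eqref{eq:normstd} gives point 1, and the bootstrap delta method (Theorem 23.9 of van der Vaart) applied to \eqref{eq:normboot} gives point 2. Your extra details are accurate refinements of points the paper leaves implicit: the explicit gradient, the $k=0$ term of $g_{\mathrm{RMST}}$ cancelling in the difference, and the matching of the conditional-convergence modes.
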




\section{Pooled generalized linear models for the nuisance probabilities}
\label{app:pooled-glm}

The estimating-equation construction from Subsection \ref{subsec:estimating_equation} leaves the three nuisance-score
blocks \(\psi^{\Omega_0}\), \(\psi^{\Omega_1}\), and \(\psi^C\) unspecified.
We provide their expression when the conditional probabilities entering the
CCW weights are estimated by pooled Bernoulli generalized linear models.  This
includes the pooled logistic and pooled complementary log--log regressions
commonly used in CCW analyses.

\subsection{Responses, risk indicators, and regressors.}
We start by introducing auxiliary notations that provide homogeneous notations allowing to handle the three nuisance blocks simultaneously.

For the artificial censoring, we set, for \(j\in\{0,1\}\)
\begin{align*}
U^{\Omega_j}_{i,s}:=\Gamma_{\Omega_j,i,s}, \quad
V^{\Omega_j}_{i,s}:=R_{i,s}\ind\{Y_{i,s}=0\}\Gamma_{\Omega_j,i,s-1},
\quad
Z^{\Omega_j}_{i,s}
    &:=m_s^{\Omega_j}(H_{i,s})\in\R^{d_{\Omega_j}},
\end{align*}

The random variable \(V^{\Omega_j}_{i,s}\) selects individuals whose current treatment is
observed and who have remained compatible with \(\Omega_j\) through time
\(s-1\). The maps $m_s^{\Omega_j} \colon \mathcal{H}_s \to \R^{d_{\Omega_j}}$ are measurable feature functions.  At baseline we use the conventions
\(\Gamma_{\Omega_j,i,-1}=1\), \(Y_{i,0}=0\), and \(R_{i,0}=1\).
Thus
\begin{equation}
\label{eq:glm-artificial-probability}
p_s^{\Omega_j}(H_{i,s})
=\bbP\!\left(
    U^{\Omega_j}_{i,s}=1
    \mid H_{i,s},V^{\Omega_j}_{i,s}=1
\right).
\end{equation}

For the natural-censoring model, we set
\begin{align*}
U^C_{i,s}:=\ind\{C_{i,s+1}=0\}, \quad V^C_{i,s}:=R_{i,s}\ind\{Y_{i,s}=0\}, \quad Z^C_{i,s}:=m_s^C(A_{i,s},H_{i,s})\in\R^{d_C}.
\end{align*}
where the $m_s^{C} \colon \mathcal{A}_s \times \mathcal{H}_s \to \R^{d_C}$ are measurable feature functions. This yields
\begin{equation}
\label{eq:glm-natural-probability}
p^C_{s+1}(A_{i,s},H_{i,s})
=\bbP\!\left(U^C_{i,s}=1\mid A_{i,s},H_{i,s},V^C_{i,s}=1\right).
\end{equation}

\begin{rem}
If natural censoring is assumed independent and the corresponding factor in
the weights is known, the \(C\)-component below can simply be omitted.
\end{rem}

For \(s=0,\ldots,\tau-1\), we define the following sigma-algebras:
\begin{align*}
\mathcal F^A_{s}
&:=\sigma\!\left(
    \overline X_{s},
    \overline A_{s-1},
    (C_{r},Y_{r})_{r=1}^{s}
\right)
\label{eq:pre-treatment-filtration}
\end{align*}
with the convention
\[
\mathcal F^A_{0}=\sigma(X_{0}).
\]
We also define
\begin{align*}
\mathcal F^C_{s}
&:=
\mathcal F^A_{s}\vee\sigma(A_{s})                                      \notag\\
&=
\sigma\!\left(
    \overline X_{s},
    \overline A_{s},
    (C_{r},Y_{r})_{r=1}^{s}
\right).
\label{eq:pre-censoring-filtration}
\end{align*}
Thus,
\[
\mathcal F^A_{s} \subseteq \mathcal F^C_{s} \subseteq \mathcal F^A_{s+1}.
\]
For notational purpose, we set for \(j\in\{0,1\}\)
\[
\mathcal F^{\Omega_j}_{s}:=\mathcal F^A_{s}.
\]
To treat the three nuisance models simultaneously, write
\[
\mathcal K:=\{\Omega_0,\Omega_1,C\},
\]
and let \((U^k_{i,s},V^k_{i,s},Z^k_{i,s},\eta^k)\) denote the corresponding
quantities above for \(k\in\mathcal K\). 

We use the following restriction convention. Conditional on \(V^k_s=1\),
the probability space and all random variables are restricted to
\(\{V^k_s=1\}\), with law \(\bbP(\,\cdot\mid V^k_s=1)\); conditioning on
\(\mathcal F^k_s\) then means conditioning on its trace
\[
\mathcal F^k_s\big|_{\{V^k_s=1\}}
:=\{F\cap\{V^k_s=1\}:F\in\mathcal F^k_s\}.
\]
We retain the notation
\(\bbP(\,\cdot\mid\mathcal F^k_s,V^k_s=1)\), and use the analogous
convention for conditional expectations. If \(\bbP(V^k_s=1)=0\), the
corresponding score term is zero and is omitted. Equations
\eqref{eq:glm-artificial-probability} and
\eqref{eq:glm-natural-probability} can therefore be written
\begin{equation*}
\bbP \left(U_s^k=1 \mid\mathcal F^k_s,V^k_s=1\right).
\end{equation*}

\subsection{Pooled GLM specification}

We will now model the probability of remaining uncensored either artificially or naturally using generalized linear models (GLM). We know by assumption that the probability of being naturally uncensored at a time is strictly positive. Moreover, it is reasonable to assume that before the end of follow-up the probability of being naturally censored is strictly positive. It follows from Assumption \eqref{ass:positivity_treatment}  that the probability that the action trajectory remains admissible is bounded away from zero. On the other hand, it may be possible that under a given strategy at a given time step all the structurally possible actions are admissible. The probability for a given action to be admissible is then equal to one. This situation can be encountered in the presence of a grace period (see for instance Example  \ref{ex:cd4}). As logistic and complementary log-log regression are unable to model a probability equal to one, we introduce the following indicator to handle this apparent difficulty.

\begin{equation*}
D^{\Omega_j}_{i,s}:=\ind\!\left\{\mathcal S_s(H_{i,s})\not\subseteq\mathcal A_{\Omega_j,s}(H_{i,s})\right\}
\end{equation*}
where $\mathcal A_{\Omega_j,s}(h)$ is the set of admissible actions (see Definition \ref{def:qnat}) and  $\mathcal S_s(H_{i,s})=\{a \in \mathcal A_s \mid \bbP(A_s = a \mid H_{i,s},Y_{i,s}=C_{i,s}=0)>0 \}$ is the set of actions that are structurally possible after history $H_{i,s}$. We assume $\mathcal S_s(H_{i,s}$ to be known for every relevant history $(H_{i,s}$. Consequently, \(D^{\Omega_j}_{i,s}=0\) exactly when every structurally possible action is admissible, in which case \(U^{\Omega_j}_{i,s}=1\) with conditional probability one.
For notational convenience, we also set
\[
D^C_{i,s}:=1.
\]

For each \(k\in\mathcal K\), let \(G_k:(0,1)\to\R\) be a strictly
monotone bijective link whose inverse \(g_k=G_k^{-1}\) is twice
continuously differentiable.  

Pooled logistic regression uses
\[
G_k(u)=\log\!\left\{\frac{u}{1-u}\right\},
\qquad
g_k(z)=\frac{\exp(z)}{1+\exp(z)}.
\]
For pooled complementary log--log regression, the model is applied to the probability of remaining uncensored \(u\). In this setting,
\[
G_k(u)=\log\{-\log(u)\},
\qquad
g_k(z)=\exp\{-\exp(z)\}.
\]

The pooled GLM for component \(k\) is
\begin{align*}
G_k\!\left(\mu^k_{i,s}(\eta^k)\right)
    &=(Z^k_{i,s})^\transpose\eta^k,\\
\mu^k_{i,s}(\eta^k)
    &:=g_k\!\left((Z^k_{i,s})^\transpose\eta^k\right),\\
\widetilde\mu^k_{i,s}(\eta^k)
    &:=1-D^k_{i,s}+D^k_{i,s}\mu^k_{i,s}(\eta^k).
\label{eq:pooled-glm-model}
\end{align*}
Thus, \(\widetilde\mu^k_{i,s}=1\) on rows where the probability is known and equals the modeled probability \(\mu^k_{i,s}\) otherwise.
We assume that there exists \(\eta_0^k\) such that
\begin{equation}
\label{eq:pooled-glm-sequential-assumption}
\bbP\!\left(U^k_s=1\mid\mathcal F^k_s,V^k_s=1\right)
=\widetilde\mu^k_s(\eta_0^k) \quad \textnormal{for } k \in \mathcal{K}.
\end{equation}
Assuming that the nuisance component models are correctly specified at
\(\eta_0=(\eta_0^{\Omega_0},\eta_0^{\Omega_1},\eta_0^C)\),  it follows from equation \eqref{eq:pooled-glm-sequential-assumption} that,
\[
D^k_{s}\bbE\!\left[U^k_s-\mu^k_s(\eta_0^k)
\mid\mathcal F^k_s,V^k_s=1\right]=0.
\]

\subsection{Estimating equation}
The aim of this section is to set up the estimating equation for $\eta_0=(\eta_0^{\Omega_0},\eta_0^{\Omega_1},\eta_0^C)$. Under equation \eqref{eq:pooled-glm-sequential-assumption}, the contribution of individual \(i\) is
\begin{equation}
\label{eq:pooled-glm-loglikelihood}
\ell_i^k(\eta^k)
=\sum_s V^k_{i,s}D^k_{i,s}
\left[
U^k_{i,s}\log\mu^k_{i,s}(\eta^k)
+(1-U^k_{i,s})\log\{1-\mu^k_{i,s}(\eta^k)\}
\right].
\end{equation}
Define
\begin{equation*}
z^k_{i,s}(\eta^k):=(Z^k_{i,s})^\transpose\eta^k,
\qquad
b_k(z):=\frac{g_k'(z)}{g_k(z)\{1-g_k(z)\}}.
\end{equation*}
Differentiating \eqref{eq:pooled-glm-loglikelihood} gives the individual
nuisance estimating function:
\begin{equation}
\label{eq:pooled-glm-score}
\psi^k(O_i;\eta^k)
=\sum_s V^k_{i,s}D^k_{i,s} b_k\!\left(z^k_{i,s}(\eta^k)\right)
\{U^k_{i,s}-\mu^k_{i,s}(\eta^k)\}Z^k_{i,s}.
\end{equation}
Since \(V^k_s\) is binary, conditioning each summand on \(V^k_s=1\) and then
applying the tower property under the restricted law gives
\begin{align*}
\bbE\!\left[\psi^k(O;\eta_0^k)\right]
&=\sum_s\bbP(V^k_s=1)\,
\bbE\!\left[
D^k_{s} b_k\!\left(z^k_s(\eta_0^k)\right)
\{U^k_s-\mu^k_s(\eta_0^k)\}Z^k_s
\mid V^k_s=1
\right]\\
&=\sum_s\bbP(V^k_s=1)\,
\bbE\!\left[
D^k_{s} b_k\!\left(z^k_s(\eta_0^k)\right)
\bbE\!\left[
U^k_s-\mu^k_s(\eta_0^k)
\mid\mathcal F^k_s,V^k_s=1
\right]Z^k_s
\mid V^k_s=1
\right]\\
&=0
\end{align*}
for every \(k\in\mathcal K\).
Recall that the nuisance estimating function is the column vector
\begin{equation}
\label{eq:pooled-glm-stacked-score}
\psi^{\rm nuis}(O;\eta)
=
\begin{pmatrix}
\psi^{\Omega_0}(O;\eta^{\Omega_0})\\
\psi^{\Omega_1}(O;\eta^{\Omega_1})\\
\psi^C(O;\eta^C)
\end{pmatrix},
\qquad
\eta=(\eta^{\Omega_0},\eta^{\Omega_1},\eta^C).
\end{equation}
This completely specifies the estimating functions. Consequently, we assume that an estimator $\hat{\eta}_n$ of $\eta_0$ satisfies the following
\begin{equation}\label{eq:pooled-glm-approx-root}
\bbE_{\bbP_n} \left [\psi^{\rm nuis}(O;\hat{\eta}_n) \right ]
=o_{\bbP}(n^{-1/2}).
\end{equation}
\begin{rem}
  The preceding construction directly models the two probabilities
\(p_s^{\Omega_0}\) and \(p_s^{\Omega_1}\).  Alternatively, both may be
derived from one model for the observed treatment mechanism but this requires stronger assumptions.  
\end{rem}
\subsection{Sufficient assumptions}
We make the following assumptions which are standard in the context of generalized linear models and show that they imply assumptions (Z1), (Z2) and  (Z4)-(Z6). We further prove that the approximate nuisance-root condition
\eqref{eq:pooled-glm-approx-root}, together with exact solutions of the weighted hazard equations, implies (Z3).

\begin{enumerate}[label=(GLM\arabic*)]
\item \label{ass:glm1} There exists $\eta_0=(\eta_0^{\Omega_0},\eta_0^{\Omega_1},\eta_0^C)$ satisfying equation \eqref{eq:pooled-glm-sequential-assumption} and $\theta_0=(\eta_0,h^0_0,h^1_0)$ is an interior point of $\Theta$;
\item \label{ass:glm2} The parameter space $\Theta$ is compact; 
\item \label{ass:glm3} The functions $m_s^k$ for $s=0,\ldots,\tau-1$ and $k \in \mathcal{K}$ are bounded measurable functions;
\item \label{ass:glm4} For every \(k\in\mathcal K\), the inverse link
\(g_k:\R\to(0,1)\) is twice continuously differentiable and
\(\lvert g_k^\prime(t)\rvert>0\) for every \(t\in\R\);
\item \label{ass:glm5} The matrices $\bbE\left[\sum_sV_s^kD_s^kZ_s^k(Z_s^k)^\transpose
\right]$ are positive definite for $k \in \mathcal{K}$;
\item \label{ass:glm6} $\bbE[W^j_{s-1}R_s]>0
\qquad\text{for every }j\in\{0,1\},\ s=1,\ldots,\tau.$
\end{enumerate}
We first prove the smoothness of the population estimating map.
\begin{lem}\label{lem:continuity}
Under assumptions \ref{ass:glm2}--\ref{ass:glm4}, the map \(\Psi_\bbP\) is the restriction
to \(\Theta\) of a continuously differentiable map, and
\begin{equation}\label{eq:diff_under}
\dot\Psi_\bbP(\theta)
=\bbE\left[\dot\psi(O;\theta)\right].    
\end{equation}
\end{lem}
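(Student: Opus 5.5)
We extend $\psi(o;\theta)$ to an open neighbourhood $\widetilde\Theta\supset\Theta$; $\Theta$ is compact by \ref{ass:glm2}, so it sits inside a bounded open set. The plan is to show that, for each fixed observation $o$, the map $\theta\mapsto\psi(o;\theta)$ is $C^1$ on $\widetilde\Theta$, and that $\psi$ and $\dot\psi$ have a \emph{deterministic} bound, uniformly on a compact neighbourhood of $\Theta$. Differentiation under the expectation then follows from dominated convergence: we apply the mean value theorem to the difference quotients and use the bounded envelope. Continuity of $\theta\mapsto\bbE[\dot\psi(O;\theta)]$ follows the same way, from pointwise continuity plus the envelope. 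This yields both that $\Psi_\bbP$ is the restriction of a $C^1$ map and \eqref{eq:diff_under}.

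\textbf{Nuisance blocks.} For each $k\in\mathcal K$, $\psi^k(o;\eta^k)$ in \eqref{eq:pooled-glm-score} is a finite sum over $s\le\tau$ of terms $V^k_sD^k_s\,b_k(z)\{U^k_s-g_k(z)\}Z^k_s$ with $z=(Z^k_s)^\transpose\eta^k$. By \ref{ass:glm3}, $\|Z^k_s\|\le M$, so $|z|\le M\sup_{\widetilde\Theta}\|\eta\|=:K<\infty$. By \ref{ass:glm4}, $g_k$ is $C^2$ with values in $(0,1)$. Hence $g_k(1-g_k)$ is bounded below on $[-K,K]$, which makes $b_k$ a $C^1$ function on a compact interval; together with $g_k$, it and its derivative are bounded there. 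The indicators $U,V,D\in\{0,1\}$ do not depend on $\theta$. Consequently each summand is $C^1$ in $\eta^k$, with a derivative bounded by a constant times $M^2$.

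\textbf{Hazard blocks.} These are $\psi_{j,s}=W^j_{s-1}(\eta)R_s(E_s-h^j_s)$. Here $W^j_{s-1}(\eta)$ is a finite product of factors $\Gamma_{\Omega_j,u}/\widetilde\mu^{\Omega_j}_u(\eta)$ and $1/\widetilde\mu^{C}_{u}(\eta)$, where $\widetilde\mu$ equals either $1$ or $g_k(z)$. This is the main obstacle: the weights involve reciprocals, so we need $\widetilde\mu$ bounded away from $0$ uniformly. On $[-K,K]$, $g_k\ge\min_{[-K,K]}g_k>0$ by continuity and compactness, so each factor is $C^1$ in $\eta$ with a bounded derivative. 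One point must be checked: when $\Gamma=0$ the factor vanishes identically and causes no trouble, and rows where $D=0$ contribute the constant $1$. The product of finitely many bounded $C^1$ functions is bounded and $C^1$. Moreover, $\psi_{j,s}$ is affine in $h^j_s$, which is bounded on $\widetilde\Theta$. This gives a deterministic envelope $\sup_{\theta\in\overline{\widetilde\Theta}}(\|\psi\|+\|\dot\psi\|)\le L$.

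\textbf{Conclusion.} Given the envelope, the mean value inequality gives $|\psi(o;\theta+\delta)-\psi(o;\theta)-\dot\psi(o;\theta)\delta|/\|\delta\|\to0$ pointwise, with the quotient bounded by $2L$. Dominated convergence then yields differentiability of $\Psi_\bbP$ on $\widetilde\Theta$ with derivative $\bbE[\dot\psi(O;\theta)]$. A second application of dominated convergence gives continuity of this derivative, and restricting to $\Theta$ completes the argument.
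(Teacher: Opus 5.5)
Your proposal is correct and follows essentially the same route as the paper. Both arguments bound the linear predictors uniformly on a neighbourhood of the compact $\Theta$ via \ref{ass:glm2}--\ref{ass:glm3}, and use \ref{ass:glm4} to keep $g_k$ (hence $\widetilde\mu^k_s$) away from $0$ and $1$, so that the nuisance scores and the weights are $C^1$ with a deterministic envelope; the conclusion then follows by dominated convergence. Your version is a bit more careful: it works on an open neighbourhood and spells out the mean-value/dominated-convergence step, which the paper only asserts.
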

\begin{proof}
Let \(K\) be a compact neighbourhood of \(\Theta\) in the ambient parameter
space. By \ref{ass:glm3} and since the number of follow-up time is finite, the linear predictors $z^k_{i,s}(\eta^k)$
are uniformly bounded on \(K\). Thus, there exists \(M_K \in \R_+\) such that
\[
\sup_{\theta\in K}\max_{k,s}|z^k_{i,s}(\eta^k)|\le M_K,
\qquad I_K:=[-M_K,M_K].
\]
It follows from \ref{ass:glm4} that there exists $\varepsilon>0$ such that
\begin{equation}\label{eq:bound_proba}
\varepsilon\le\mu_s^k(\eta^k)\le 1-\varepsilon
\qquad(k\in\mathcal K, s, \theta \in K).
\end{equation}
Moreover, for \(k\in\mathcal K\) and \(u\in\{0,1\}\), the function
\[
q_{k,u}(t):=b_k(t)\{u-g_k(t)\}
\]
is continuously differentiable. Hence, $q_{k,u}$ and $q_{k,u}^\prime$ are bounded on \(I_K\). Since for every $k \in \mathcal{K}$
\[
\psi^k(O;\eta^k)
=\sum_sV_s^kD_s^kq_{k,U_s^k}\!\left((Z_s^k)^\transpose\eta^k\right)Z_s^k
\]
it follows that  the application $\eta^k \mapsto \psi^k(O;\eta^k)$ and its derivative are
continuous and thus bounded on \(K\).

Each weight \(W_{s-1}^j(\eta)\) is a finite product of indicators and
reciprocals of the full probabilities \(\widetilde\mu^k_s(\eta^k)\). These
probabilities are either one or, by \eqref{eq:bound_proba}, uniformly bounded
away from zero. Since they and their first
derivatives are continuous and uniformly bounded on \(K\), the product and
chain rules show that, there exists $M \in \R_+$ such that
\[
\sup_{\theta\in K}\max_{j,s}
\left\{
|W_{s-1}^j(\eta)|
+\left\|\frac{\partial W_{s-1}^j(\eta)}{\partial\eta}\right\|
\right\}
\le M
\]
for every realization. For the hazard blocks,
\begin{align*}
\frac{\partial\psi_{j,s}}{\partial\eta}
&=R_s(E_s-h_s^j)\frac{\partial W_{s-1}^j}{\partial\eta},
&
\frac{\partial\psi_{j,s}}{\partial h_r^\ell}
&=-\ind\{(\ell,r)=(j,s)\}W_{s-1}^jR_s.
\end{align*}

The functions \(R_s\), \(E_s\), and \(h_s^j\) are bounded on \(K\). It follows that \(\dot\psi(O;\theta)\) is continuous in \(\theta\) and admits a constant envelope on \(K\). Dominated convergence therefore permits differentiation under the expectation, proving \eqref{eq:diff_under} and showing that \(\dot\Psi_\bbP\) is continuous. Hence \(\Psi_\bbP\) is continuously differentiable.
\end{proof}
The following result is well known
\begin{lem}\label{lem:unicity}
Under assumptions \ref{ass:glm1} and \ref{ass:glm3}--\ref{ass:glm6},
\(\theta_0\) is the unique zero of \(\Psi_\bbP\).
\end{lem}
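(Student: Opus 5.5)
The plan exploits the triangular structure of the stacked estimating equation $\Psi_\bbP(\theta)=0$. The nuisance blocks $\psi^{\Omega_0},\psi^{\Omega_1},\psi^C$ depend only on their own parameters $\eta^k$, and the hazard blocks $\psi_{j,s}$ depend on $\eta$ and on the single coordinate $h^j_s$. So I first show that each nuisance block $\bbE[\psi^k(O;\eta^k)]=0$ has the unique zero $\eta^k_0$. I then show that, once $\eta=\eta_0$ is fixed, each hazard equation has the unique zero $h^j_{0,s}$. Existence is already established: the tower-property computation just before the statement gives $\bbE[\psi^k(O;\eta^k_0)]=0$, and the identity $\bbE[W^j_{s-1}R_s(E_s-h^j_s)]=0$ at $h^j_s=h^j_{0,s}$ follows from Lemma~\ref{lem:identification}.

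For uniqueness of the nuisance zeros, I use that $\bbE[\psi^k(O;\eta^k)]$ is the gradient of the expected pooled log-likelihood $L_k(\eta^k)=\bbE[\ell^k(\eta^k)]$; differentiation under the expectation is justified as in Lemma~\ref{lem:continuity} via \ref{ass:glm3}--\ref{ass:glm4}. Conditioning on $\mathcal F^k_s$ and $V^k_s=1$ and using \eqref{eq:pooled-glm-sequential-assumption} with \ref{ass:glm1}, I rewrite
\[
L_k(\eta^k)-L_k(\eta^k_0)=-\sum_s \bbE\!\left[V^k_sD^k_s\,\mathrm{KL}\!\left(\mathrm{Ber}(\mu^k_s(\eta^k_0))\,\|\,\mathrm{Ber}(\mu^k_s(\eta^k))\right)\right]\le 0 .
\]
Equality holds only if $\mu^k_s(\eta^k)=\mu^k_s(\eta^k_0)$ almost surely on $\{V^k_sD^k_s=1\}$. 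Because $g_k$ is strictly monotone by \ref{ass:glm4}, this forces $(Z^k_s)^\transpose(\eta^k-\eta^k_0)=0$ on that event. Multiplying by $(\eta^k-\eta^k_0)^\transpose$ and summing, the positive definiteness in \ref{ass:glm5} then gives $\eta^k=\eta^k_0$.

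This only shows that $\eta^k_0$ is the unique maximizer; a stationary point of a non-concave $L_k$ need not be a maximizer, so a separate argument is needed. For the logistic link, $L_k$ is strictly concave by \ref{ass:glm5}, so any zero of the gradient is the maximizer and uniqueness follows at once. For a general link such as the complementary log-log I would try the following. Take the directional derivative along the segment from $\eta^k$ to $\eta^k_0$ and condition on $\mathcal F^k_s$, which yields
\[
\bbE\!\left[\sum_s V^k_sD^k_s\, b_k(z)\,\{\mu^k_s(\eta^k_0)-\mu^k_s(\eta^k)\}\,(Z^k_s)^\transpose(\eta^k_0-\eta^k)\right].
\]
Since $g_k$ is monotone, $\mu^k_s(\eta^k_0)-\mu^k_s(\eta^k)$ has the same sign as $\mathrm{sign}(g_k')\,(Z^k_s)^\transpose(\eta^k_0-\eta^k)$. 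Since $b_k$ has the sign of $g_k'$, each summand is nonnegative. If $\eta^k$ is a zero, this expression vanishes, so every summand vanishes almost surely, and \ref{ass:glm5} again forces $\eta^k=\eta^k_0$. I expect this step to be the main obstacle, because it is the only place where the link structure matters rather than generic smoothness.

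Finally, with $\eta=\eta_0$, each hazard equation is affine in $h^j_s$:
\[
\bbE[W^j_{s-1}R_sE_s]-h^j_s\,\bbE[W^j_{s-1}R_s]=0 .
\]
Its slope is nonzero by \ref{ass:glm6}, so $h^j_s=h^j_{0,s}$. Hence $\theta_0$ is the unique zero of $\Psi_\bbP$.
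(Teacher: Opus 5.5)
Your argument is correct and, once the KL/maximizer detour and the logistic-concavity remark are set aside (neither is needed), it is the paper's proof: compute $(\eta^k_0-\eta^k)^\transpose\bbE[\psi^k(O;\eta^k)]$ by conditioning on $\mathcal F^k_s$ via \eqref{eq:pooled-glm-sequential-assumption}, and note that $b_k$ and $(g_k(x)-g_k(y))(x-y)$ both carry the sign of $g_k'$, so each summand is nonnegative and vanishes only where $(Z^k_s)^\transpose(\eta^k_0-\eta^k)=0$; then conclude with \ref{ass:glm5} and solve the affine hazard equations using \ref{ass:glm6}. This is a direct computation at a fixed zero $\eta^k$, so you need neither to read it as a directional derivative of $L_k$ nor to justify differentiating under the expectation.
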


\begin{proof}
We already know that \(\Psi_\bbP(\theta_0)=0\). Let
\(\theta_\ast=(\eta_\ast,h^0_\ast,h^1_\ast)\) be any zero of
\(\Psi_\bbP\). We first identify the nuisance parameter. Let \(k\in\mathcal K\) and set \(\delta^k=\eta_0^k-\eta_\ast^k\). From equation
\eqref{eq:pooled-glm-sequential-assumption}, we obtain
\begin{align*}
& (\delta^k)^\transpose
  \bbE\!\left[\psi^k(O;\eta_\ast^k)\right] \\
&\quad=\bbE\!\left[
  \sum_s V_s^kD_s^k b_k\!\left(z_s^k(\eta_\ast^k)\right)
  \{g_k(z_s^k(\eta_0^k))-g_k(z_s^k(\eta_\ast^k))\}
  \{z_s^k(\eta_0^k)-z_s^k(\eta_\ast^k)\}
  \right].
\end{align*}

Each summand on the right-hand side is nonnegative. Indeed, by
\ref{ass:glm4}, \(b_k\) has the same sign as \(g_k^\prime\), while
\[
(g_k(x)-g_k(y)) (x-y)
\]
also has the sign of \(g_k^\prime\) whenever \(x\ne y\). Consequently,
\[
b_k(y) (g_k(x)-g_k(y))(x-y)>0
\qquad\text{whenever }x\ne y.
\]
 This implies that
\begin{equation*}
(\delta^k)^\transpose
  \bbE\!\left[\psi^k(O;\eta_\ast^k)\right]=0 \quad \textnormal{if and only if} \quad \bbE\!\left[
  \sum_s V_s^kD_s^k \{z_s^k(\eta_0^k)-z_s^k(\eta_\ast^k)\}^2
  \right]=0.
\end{equation*}
Moreover, if
\(\delta^k\neq 0\), then \ref{ass:glm5} yields
\[
\bbE\!\left[\sum_s V_s^kD_s^k\{(\delta^k)^\transpose Z_s^k\}^2\right]
=(\delta^k)^\transpose
  \bbE\!\left[\sum_sV_s^kD_s^kZ_s^k(Z_s^k)^\transpose\right]
  \delta^k
>0.
\]
This implies
\((\delta^k)^\transpose\bbE[\psi^k(O;\eta_\ast^k)]>0\),
contradicting the fact that \(\theta_\ast\) is a zero. Hence
\(\eta_\ast^k=\eta_0^k\) for every \(k\in\mathcal K\), and therefore
\(\eta_\ast=\eta_0\).

It remains to identify the hazard parameters. For every
\(j\in\{0,1\}\) and \(s=1,\ldots,\tau\), the corresponding population
equation, now evaluated at \(\eta_0\), satisfies
\begin{align*}
0
&=\bbE\!\left[W_{s-1}^j(\eta_0)R_s
  (E_s-h_{\ast,s}^j)\right]\\
&=(h_{0,s}^j-h_{\ast,s}^j)
  \bbE\!\left[W_{s-1}^j(\eta_0)R_s\right],
\end{align*}
where the defining equation for \(h_{0,s}^j\) gives the second equality.
The expectation is strictly positive by \ref{ass:glm6}, so
\(h_{\ast,s}^j=h_{0,s}^j\). Consequently, \(\theta_\ast=\theta_0\), which
proves uniqueness.
\end{proof}

\begin{lem}
Under assumption \ref{ass:glm1} to \ref{ass:glm6}, conditions (Z1), (Z2) and (Z4)--(Z6) are satisfied. If $\hat\eta_n$ satisfies
\eqref{eq:pooled-glm-approx-root} and the hazard estimators solve their weighted estimating equations exactly, then the estimator $\hat\theta_n$ also satisfies the approximate-root condition (Z3).
\end{lem}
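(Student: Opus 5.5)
The plan is to verify the conditions one at a time, using Lemmas~\ref{lem:continuity} and~\ref{lem:unicity} as the main inputs. I take a compact neighbourhood $K$ of $\Theta$ as in the proof of Lemma~\ref{lem:continuity}. That proof shows that, on $K$, both $\psi(O;\theta)$ and $\dot\psi(O;\theta)$ are continuous in $\theta$ for every realization. It also shows that they are uniformly bounded by a constant $M$, because the linear predictors are bounded by \ref{ass:glm3}, the probabilities satisfy \eqref{eq:bound_proba}, the weights and their derivatives are bounded, and $R_s,E_s,h_s^j$ are bounded.

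\textbf{Conditions (Z4) and (Z5).} The constant envelope $F\equiv M$ gives (Z4) directly, with $N(\theta_0)$ any neighbourhood of $\theta_0$ inside $\Theta$. Such a neighbourhood exists because $\theta_0$ is interior by \ref{ass:glm1}. Boundedness of $\psi_{\theta_0}$ then gives (Z5).

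\textbf{Condition (Z2).} The class $\{\psi_\theta:\theta\in\Theta\}$ is indexed by the compact set $\Theta$ (\ref{ass:glm2}), and $\theta\mapsto\psi_\theta(o)$ is continuous for each $o$ with an integrable (constant) envelope. It is therefore $\bbP$-Glivenko--Cantelli by the classical compact-parameter result \cite[Example 19.8]{vanderVaart1998}, applied coordinatewise.

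\textbf{Condition (Z1).} Lemma~\ref{lem:continuity} shows that $\Psi_\bbP$ is continuous on $\Theta$, and Lemma~\ref{lem:unicity} shows that $\theta_0$ is its unique zero. Suppose $\Psi_\bbP(\theta_n)\to0$ but $\theta_n\not\to\theta_0$. By compactness there is a subsequence converging to some $\theta_\ast\neq\theta_0$. Continuity then gives $\Psi_\bbP(\theta_\ast)=0$, which contradicts uniqueness.

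\textbf{Condition (Z6).} Differentiability of $\Psi_\bbP$ with $A=\bbE[\dot\psi_{\theta_0}]$ is Lemma~\ref{lem:continuity}. For nonsingularity, $\psi^{\rm nuis}$ depends only on $\eta$, so $A$ is block lower-triangular: the upper-right block $\partial\psi^{\rm nuis}/\partial h$ vanishes. The hazard block is diagonal with entries $-\bbE[W^j_{s-1}R_s]$, which are nonzero by \ref{ass:glm6}. Each nuisance block is itself block diagonal over $k\in\mathcal K$, since the models do not share parameters. Differentiating \eqref{eq:pooled-glm-score} at $\eta^k_0$ gives two terms:
\[
\bbE\Big[\sum_s V^k_sD^k_s\, b_k'(z)\{U^k_s-\mu^k_s\}Z^k_s(Z^k_s)^\transpose\Big]-\bbE\Big[\sum_s V^k_sD^k_s\, b_k(z)g_k'(z)Z^k_s(Z^k_s)^\transpose\Big].
\]
The first term vanishes after conditioning on $\mathcal F^k_s$, by \eqref{eq:pooled-glm-sequential-assumption}. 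In the second term, $b_kg_k'=g_k'^2/\{g_k(1-g_k)\}$ is bounded below by a positive constant on the bounded range of $z$, using \ref{ass:glm4}. Combined with \ref{ass:glm5}, this makes the second term negative definite. The nuisance block is therefore nonsingular, and hence so is $A$.

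\textbf{Condition (Z3).} The hazard components of $\Psi_n(\hat\theta_n)$ are exactly zero by assumption, and the nuisance components are $o_\bbP(n^{-1/2})$ by \eqref{eq:pooled-glm-approx-root}.

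I expect the main obstacle to be the nonsingularity in (Z6), specifically the bookkeeping that shows the score's derivative reduces to the Fisher-information form. This needs the conditioning argument under the restricted law used just after \eqref{eq:pooled-glm-score}, together with the observation that the rows with $D^k_s=0$ contribute nothing. A smaller point is that the Glivenko--Cantelli step requires measurability, which holds here because continuity in $\theta$ allows restriction to a countable dense subset of $\Theta$.
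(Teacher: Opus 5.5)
Your proposal is correct and follows essentially the same route as the paper: (Z1) from continuity of $\Psi_\bbP$, compactness of $\Theta$ and the uniqueness given by Lemma~\ref{lem:unicity}; (Z2) via \cite[Example 19.8]{vanderVaart1998}; (Z4)--(Z5) from the constant envelope obtained in the proof of Lemma~\ref{lem:continuity}; (Z6) via the lower block-triangular Jacobian, with block-diagonal Fisher-type nuisance blocks and a diagonal hazard block controlled by \ref{ass:glm6}; and (Z3) from the exact hazard equations. You are in fact slightly more explicit than the paper on three points it leaves implicit: the subsequence argument for (Z1), the vanishing of the $\dot b_k$ term after conditioning on $\mathcal F^k_s$, and the positive lower bound on $b_k g_k'$ needed to transfer \ref{ass:glm5} to $-A_k$.
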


\begin{proof}
\begin{enumerate}[label=(Z\arabic*)]
    \item  The map $\Psi_\bbP$ is continuous by Lemma \ref{lem:continuity}. The compactness of $\Theta$ \ref{ass:glm2} together with the uniqueness of the root of $\Psi_\bbP$ granted by Lemma \ref{lem:unicity} imply (Z1).
    \item  The patient trajectories are i.i.d. Moreover, the parameter space $\Theta$ is compact and the application $\theta \mapsto \psi_\theta$ defined in \eqref{eq:critical_curve} is continuous and bounded by a constant. Hence, the result follows from \cite[Example 19.8]{vanderVaart1998}
    \item For each $j$ and $s$, the weighted hazard estimator satisfies
\[
\bbE_{\bbP_n}\left[
W^j_{s-1}(\hat\eta_n)R_s(E_s-\hat h_{n,s}^j)
\right]=0.
\]
Consequently,
\[
\Psi_n(\hat\theta_n)
=
\begin{pmatrix}
\bbE_{\bbP_n}\psi^{\rm nuis}(O;\hat\eta_n)\\
0_{2\tau}
\end{pmatrix}.
\]
Condition \eqref{eq:pooled-glm-approx-root} therefore gives
$\|\Psi_n(\hat\theta_n)\|=o_{\bbP}(n^{-1/2})$, establishing (Z3).
    \item Assumption (Z4) is established in the proof of Lemma \ref{lem:continuity}
    \item  Assumption (Z5) is established in  the proof of Lemma \ref{lem:continuity}
    \item By Lemma \ref{lem:continuity}, we already know that  $A=\dot\Psi_\bbP(\theta_0)=\expect{\dot\psi(O;\theta_0)}$. It remains to prove that it is non-singular. Since  the nuisance scores do not depend on $h^0$ or $h^1$ ,the matrix $A$ is lower block triangular:
\begin{equation*}
    A=
    \begin{pmatrix}
        A_\eta & 0 \\
        B & D_h
    \end{pmatrix}
\end{equation*}
Hence, to check its invertibility, it is sufficient to check the invertibility its diagonal blocks $A_\eta$ and $D_h$. We further remark that since each nuisance score in equation \eqref{eq:pooled-glm-stacked-score} depends only
on its own parameter block, $A_\eta$ is block diagonal i.e.
\[
A_\eta
=\operatorname{diag}(A_{\Omega_0},A_{\Omega_1},A_C).
\]

We start by computing $A_k$ for $k \in \mathcal{K}$. Define the scalar
derivative
\begin{equation*}
\dot b_k(z):=\frac{d}{dz}b_k(z)
=\frac{g_k''(z)}{g_k(z)\{1-g_k(z)\}}
-\frac{\{g_k'(z)\}^2\{1-2g_k(z)\}}
       {g_k(z)^2\{1-g_k(z)\}^2}.
\end{equation*}
Hence, the Jacobian of $\psi^k(O_i;\eta^k)$ is
\begin{align*}
\dot\psi^k(O_i;\eta^k)
&:=\frac{\partial\psi^k(O_i;\eta^k)}
        {\partial(\eta^k)^\transpose}\\
&=\sum_s V^k_{i,s}D^k_{i,s}
\left[
\dot b_k\!\left(z^k_{i,s}(\eta^k)\right)
\{U^k_{i,s}-\mu^k_{i,s}(\eta^k)\} \right.\\
& \left. -b_k\!\left(z^k_{i,s}(\eta^k)\right)
g_k'\!\left(z^k_{i,s}(\eta^k)\right)
\right]
Z^k_{i,s}(Z^k_{i,s})^\transpose.
\end{align*}
Taking the expectation at $\eta_0$ yields
\begin{equation*}
A_k
:=\bbE\left [\dot\psi^k(O;\eta_0^k)\right ]
=-\bbE\left[
\sum_s V^k_sD^k_{s}\omega^k_s Z^k_s(Z^k_s)^\transpose
\right],
\qquad
\omega^k_s
:=\frac{\{g_k'(z^k_s(\eta_0^k))\}^2}
        {\mu^k_s(\eta_0^k)\{1-\mu^k_s(\eta_0^k)\}}.
\end{equation*}
It follows from \ref{ass:glm5} that \(-A_k\)  is positive definite, hence invertible.

We now compute $D_h$.

\[
\frac{\partial}{\partial h^j_s}
\psi_{j,s}(O;\eta,h^j_s)
=-W^j_{s-1}(\eta)R_s.
\]
Hence,
\begin{equation*}
D_h
:=-\operatorname{diag}\left(
\bbE[W^j_{s-1}(\eta_0)R_s]:j\in\{0,1\},\ s=1,\ldots,\tau
\right),
\end{equation*}
and it follows from \ref{ass:glm6} that
\[
\bbE[W^j_{s-1}(\eta_0)R_s]>0
\qquad\text{for every }j\in\{0,1\},\ s=1,\ldots,\tau.
\]
which implies that $D_h$ is invertible. Thus $A$ is invertible
\end{enumerate}
\end{proof}

\section{Appendix to Section \ref{sec:simulation}} \label{app:simulation}

We generate the data of Section~\ref{sec:simulation} as follows. Baseline covariates are generated as
\[
M_1 \sim \mathcal N(0,1),
\qquad
M_2 \sim \mathrm{Bernoulli}(0.5).
\]
At visit \(t=0\),
\[
X_{1,0}=0.2\,M_1+\varepsilon_{1,0},
\qquad
\varepsilon_{1,0}\sim\mathcal N(0,1),
\]
and
\[
X_{2,0}\sim \mathrm{Bernoulli}\!\left(\mathrm{expit}(-0.2+0.4\,M_2)\right).
\]
For \(t\ge1\),
\[
X_{1,t}
=
0.3+0.7\,X_{1,t-1}-0.5\,A_{t-1}-0.25\,A_{t-2}+\varepsilon_{1,t},
\]
with \(\varepsilon_{1,t}\sim\mathcal N(0,1)\), and
\[
X_{2,t}
\sim
\mathrm{Bernoulli}\!\left(
\mathrm{expit}(-0.2+1.0\,X_{2,t-1}-0.5\,A_{t-1}-0.25\,A_{t-2})
\right).
\]
Treatment is generated according to
\begin{align*}
\Pr(A_t=1 \mid M,X_t,A_{t-1},A_{t-2})
&=
\mathrm{expit}\!\Big(
\alpha_k
+
0.45\times 0.5\,
\bigl(
0.08\,M_1 + 0.12\,M_2 + 0.10\,X_{1,t} + 0.10\,X_{2,t}
\bigr) \\
&\qquad\qquad
+ 0.15\,A_{t-1}
+ 1.10\,A_{t-2}
\Big),
\end{align*}
with
\[
(\alpha_0,\ldots,\alpha_9)
=
(10.0,\ 1.0,\ -4.0,\ 0.3,\ 0.3,\ 0.3,\ 0.3,\ 0.3,\ -4.0,\ -4.0).
\]
The event process follows a piecewise exponential model. Writing
\[
|A_t|=\sum_{s=0}^t A_s,
\]
the linear predictor is
\[
\eta_{Y,t}
=
-4.2
+
0.25\,M_1
+
0.25\,M_2
-
0.45\,X_{1,t}
-
0.45\,X_{2,t}
+
A_t\bigl(-1.5+0.3\,|A_t|\bigr).
\]
For \(t=0,\ldots,8\), the event probability over interval \([t,t+1)\) is
\[
p_{Y,t}=1-\exp\bigl(-\exp(\eta_{Y,t})\bigr),
\]
whereas on the last interval \([9,\tau]\), with \(\tau=12\), the hazard is constant with rate
\[
\lambda_{Y,9}=\exp(\eta_{Y,9}).
\]
Natural censoring is generated independently from an exponential distribution with rate
\[
\lambda_C=\exp(-4.6).
\]



\end{document}